\definecolor{red}{rgb}{1,0,0}
\newtheorem{theorem}{Theorem}
\newtheorem{lemma}{Lemma}
\newtheorem{definition}{Definition}
\newtheorem*{proof}{Proof}
\begin{document}

\title{Compact and Efficient KEMs over NTRU Lattices}

\author{Zhichuang Liang,
	\and Boyue Fang,
	\and Jieyu Zheng,
	\and Yunlei Zhao*
	
\thanks{*Corresponding author: ylzhao@fudan.edu.cn. Zhichuang Liang, Boyue Fang, Jieyu Zheng and Yunlei Zhao are with Department of Computer Science, Fudan University, Shanghai, China.}}

% The paper headers
\markboth{Journal of \LaTeX\ Class Files,~Vol.~14, No.~8, August~2021}%
{Shell \MakeLowercase{\textit{et al.}}: A Sample Article Using IEEEtran.cls for IEEE Journals}

%\IEEEpubid{0000--0000/00\$00.00~\copyright~2021 IEEE}
% Remember, if you use this you must call \IEEEpubidadjcol in the second
% column for its text to clear the IEEEpubid mark.

\maketitle
\setcounter{page}{1}

\begin{abstract}
	
The NTRU lattice is a promising candidate to construct practical cryptosystems, in particular key encapsulation mechanism (KEM),  resistant to quantum computing attacks. Nevertheless, there are still some inherent  obstacles to NTRU-based KEM schemes in having integrated performance, taking  security, bandwidth, error probability, and computational efficiency \emph{as a whole},  that is as good as and even better than their \{R,M\}LWE-based counterparts. In this work, we solve this problem by presenting a new family of NTRU-based KEM schemes, referred to as CTRU and CNTR. By bridging low-dimensional lattice codes  and  high-dimensional NTRU-lattice-based cryptography  with careful design and analysis, to the best of our knowledge CTRU and CNTR are the first NTRU-based KEM schemes with scalable ciphertext compression via only one \emph{single} ciphertext polynomial, and are the first that could outperform \{R,M\}LWE-based KEM schemes in integrated performance.  For instance, compared to Kyber that is currently the only standardized KEM by NIST, on the recommended parameter set CNTR-768 has about  $12\%$ smaller ciphertext size while encapsulating 384-bit keys compared to the fixed 256-bit key size of Kyber,   security strengthened   by $(8,7)$ bits for classical and quantum security respectively, and significantly lower error probability  ($2^{-230}$ of CNTR-768 vs. $2^{-164}$ of Kyber-768).  In particular,  CTRU and CNTR admit more flexible key sizes to be encapsulated, specifically $\frac{n}{2}$ where $n\in \{512,768,1024\}$ is the underlying polynomial dimension. In comparison with the state-of-the-art AVX2 implementation of Kyber-768,  CNTR-768 is faster by 1.9X in KeyGen, 2.6X in Encaps, and 1.2X in Decaps, respectively.  When compared to the NIST Round 3 finalist NTRU-HRSS, our CNTR-768 has about $15\%$ smaller ciphertext size, and the security is strengthened by $(55,49)$ bits for classical and quantum security respectively.  As for the AVX2 implementation, CNTR-768 is faster than NTRU-HRSS by 19X in KeyGen, 2.3X in Encaps, and 1.6X in Decaps, respectively. Along the way, we develop new techniques for more accurate error probability analysis, as well as unified implementations with respect to multiple dimensions with unified NTT methods, for NTRU-based KEM schemes over the polynomial ring $\mathbb{Z}_q[x]/(x^n-x^{n/2}+1)$, which might be of independent interest.

\end{abstract}

\begin{IEEEkeywords}
Post-quantum cryptography, Lattice-based cryptography, Key encapsulation mechanism,  NTRU, Lattice codes, Number theoretic transform, Integrated performance.
\end{IEEEkeywords}

%%============================================================================================
%
%
%                                    Introduction
%
%
%%============================================================================================

\section{Introduction}

Most current public-key cryptographic schemes in use,  which are based on the hardness assumptions of factoring large integers and solving (elliptic curve) discrete logarithms, will suffer from quantum attacks when  practical quantum computers are built. These cryptosystems play an important role in ensuring the confidentiality and authenticity of communications on the Internet. With the increasing cryptographic security risks of quantum computing, post-quantum cryptography (PQC) has become a research focus in recent years. There are five main types of post-quantum cryptographic schemes: hash-based, code-based, lattice-based, multivariable-based, and isogeny-based schemes, among which lattice-based cryptography is commonly viewed as amongst the most promising one due to its outstanding integrated  performance in security, communication bandwidth,  and computational  efficiency.

%Most cryptographic primitives, such as public key encryption (PKE), key encapsulation mechanism (KEM), digital signature, key exchange, homomorphic encryption, etc, can be constructed based on lattices.

In the post-quantum cryptography standardization competition held by the U.S. National Institute of Standards and Technology (NIST), lattice-based schemes account for 26 out of 64  schemes in the first round~\cite{nist-round-1-submissions}, 12 out of 26 in the second round~\cite{nist-round-2-submissions}, and 7 out of 15 in the third round~\cite{nist-round-3-submissions}. Recently, NIST announced 4 candidates to be standardized~\cite{nist-to-be-standardized}, among which 3 schemes are based on lattices. Most of these lattice-based schemes are based on lattices of the following types: plain lattice and algebraically structured lattice (ideal lattice, NTRU lattice,  and module lattice). They are mainly instantiated from the following two categories of hardness assumptions. The first category consists of \emph{Learning With Errors} (LWE)~\cite{lwe-regev09} and its variants  with algebraic structures such as \emph{Ring-Learning With Errors} (RLWE)~\cite{rlwe-LPR10} and \emph{Module-Learning With Errors} (MLWE)~\cite{mlwe-LS15}, as well as the derandomized version of \{R,M\}LWE: \emph{Learning With Rounding} (LWR)~\cite{lwr-BPR12} and its variants such as \emph{Ring-Learning With Rounding} (RLWR)~\cite{lwr-BPR12} and \emph{Module-Learning With Rounding} (MLWR)~\cite{mlwr-AA16}. The second category is the \emph{NTRU} assumption~\cite{ntru-HPS98}.

%NTRU, which stands for ``\emph{$\underline{N}^{th}$-Degree \underline{T}runcated Polynomial \underline{R}ing \underline{U}nits}'',  was first proposed by Jeffrey Hoffstein at the rump session Crypto96~\cite{ntru-Hof96}, and  it survived a lattice attack in 1997~\cite{ntru-attack-CS97}.
NTRU was first proposed by Hoffstein, Pipher and Silverman at the rump session Crypto96~\cite{ntru-Hof96}, and it survived a lattice attack in 1997~\cite{ntru-attack-CS97}.
With some improvements on security, NTRU was published normally in 1998~\cite{ntru-HPS98}, which is named as NTRU-HPS  for presentation simplicity in this work. NTRU-HPS was the first practical public key cryptosystem based on the lattice hardness assumptions over polynomial rings, and there have been many variants of NTRU-HPS such as those proposed in \cite{ntru-variant-etru-JN15,ntru-variant-bqtru-BSP18,nttru-LS19,ntru-nist-round3,ntru-prime-nist-round3,ntru-variant-bat-FKPY22}.
Besides being survived attacks and cryptanalysis over 24 years since its introduction,  NTRU-based KEM schemes also enjoy many other desirable features. For example, %NTRU-based KEM schemes
they admit more flexible key sizes to be encapsulated (corresponding to the message space $\mathcal{M}$ in this work),  varying according to the degree of the underlying quotient polynomial. In comparison, the KEM schemes based on MLWE and MLWR like Kyber~\cite{kyber-nist-round3} and Saber~\cite{saber-nist-round3} in the NIST PQC standardization encapsulate  keys of fixed size that is restricted to  the underlying quotient polynomial that is of degree 256 for Kyber and Saber.

NTRU has played a basic role in many cryptographic protocols, e.g.,~\cite{ntrusign-HGP03,ntru-in-fhe-LATV12,ntru-in-multilinear-GGH13,ntru-in-multilinear-LSS14,ntru-in-ibe-DLP14,falcon-nist-round3}. In particular, NTRU-based schemes have achieved impressive success in the NIST PQC standardization. Specifically, Falcon signature scheme~\cite{falcon-nist-round3}, which is based on NTRU assumption, is one of the signature candidate to be standardized~\cite{nist-to-be-standardized}. NTRU KEM (including NTRU-HRSS and NTRUEncrypt)~\cite{ntru-nist-round3} is one of the seven finalists, and NTRU Prime KEM (including SNTRU Prime and NTRU LPRime)~\cite{ntru-prime-nist-round3} is one of the alternate candidates in the third round of NIST PQC standardization.
Although NTRU-based KEM schemes are not chosen to be standardized by NIST, we can not ignore their great potential in PQC research and standardization due to their attractive features.  The study and optimization of NTRU-based KEM schemes still deserve further research exploration. Actually,
%in earlier years,
some standardizations have already been including NTRU-based PKE/KEM schemes. The standard IEEE Std 1363.1~\cite{ieee1363}, which was issued in 2008, standardizes some lattice-based public-key schemes, including NTRUEncrypt. The standard X9.98~\cite{x9} standardizes NTRUEncrypt as a part of the X9 standards which are applied to the financial services industry. The European Union's PQCRYPTO project (i.e., Horizon 2020 ICT-645622)~\cite{horizon2020} is considering another NTRU variant~\cite{ntru-secure-as-ideal-lattice-SS11} as a potential European standard. In particular, the latest  updates of OpenSSH  since its 9.0 version released in April 2022 have adopted NTRU Prime, together with X25519 ECDH in a hybrid mode, to prevent ``capture now decrypt later" attacks~\cite{openssh2022}.

\subsection{Challenges and Motivations}

%With deeper researches on NTRU-based KEM schemes, some drawbacks to NTRU-based KEM schemes are being gradually exposed. To our knowledge, NTRU-based KEM schemes do not have

When considering \emph{integrated performance} in security, bandwidth, error probability, and computational  efficiency \emph{as a whole}, up to now NTRU-based KEM schemes are, \emph{in general},   inferior to  their counterparts of  \{R,M\}LWE-based KEM schemes. It might be the partial reason that NTRU-based KEM schemes were not finally standardized by NIST.   In the following, we will summarize some obstacles and challenges faced with current NTRU-based KEM schemes, which also presents  the motivations of this work.

\subsubsection{Small secret ranges}

The first inherent limitation of NTRU-based KEM schemes is that they usually  support very narrow secret ranges, typically $\{-1,0,1\}$, which inherently limits the security level achievable by NTRU-based KEM schemes~\cite{ntru-nist-round3,ntru-prime-nist-round3,nttru-LS19}. However, \{R,M\}LWE-based KEM schemes have the advantage in allowing larger secret ranges for stronger security when using the approximate moduli as in NTRU-based KEM schemes. %Sketchily speaking, \{R,M\}LWE-based KEM schemes, like Kyber and Saber, choose secrets from the centered binomial distribution with the range of $[-\eta,\eta]$, where $\eta \ge 2$, which allows stronger security guarantee.

\subsubsection{Large bandwidth}

%\vspace{2ex}
\begin{figure}[!t]
	\centering
	\includegraphics[width=0.7\linewidth]{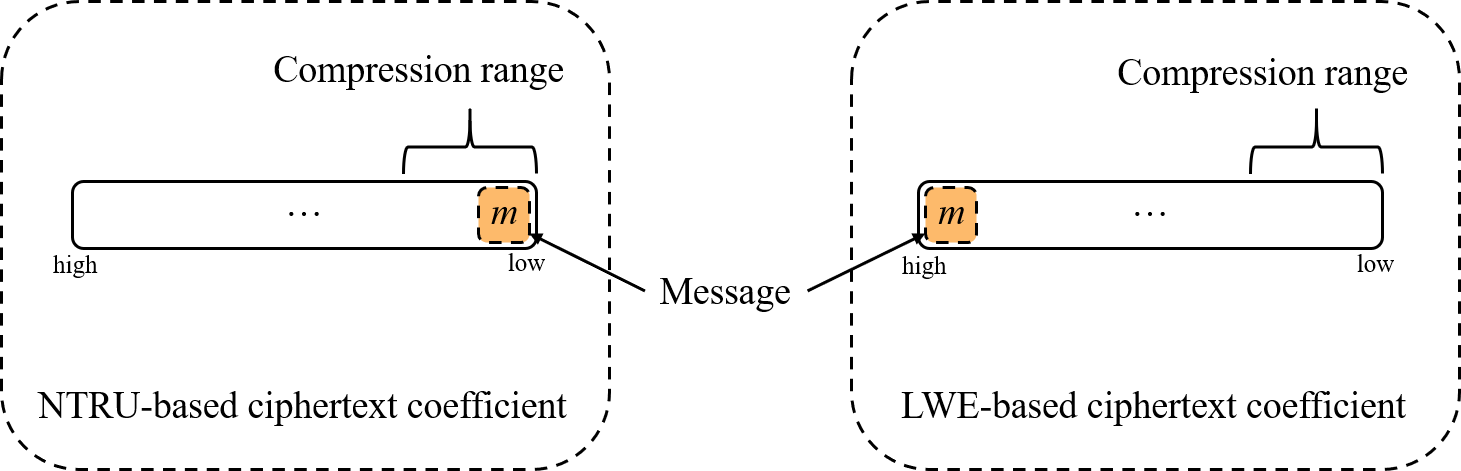}
	%	\vspace{0.8cm}
	\caption{Differences of message positions between NTRU-based and LWE-based KEMs}
	\label{fig-message-position}
	\vspace{-0.5cm}
\end{figure}

The next limitation is that traditional NTRU-based KEM schemes commonly have larger bandwidth compared to their \{R,M\}LWE-based counterparts. The importance of reducing bandwidth is self-evident, since low communication bandwidth is  friendly to internet protocols (e.g., TLS) and to constrained internet-of-things (IoT) devices. On the one hand, traditional NTRU-based KEM schemes set relatively large moduli (together with relatively small secret ranges) in order to achieve perfect correctness. Although \{R,M\}LWE-based KEM schemes could  also choose larger moduli to have zero error probability, they prefer smaller moduli for smaller bandwidth,  tolerating negligible error probability instead of insisting on  zero error probability. On the other hand, the larger bandwidth  is  due to the inherent inability to compress ciphertexts of NTRU-based KEM schemes. Below, we briefly  explain  why ciphertext compression leads to a decryption failure at a high probability for traditional NTRU-based KEM schemes, but the impact of ciphertext compression for \{R,M\}LWE-based KEM schemes is within some  controllable range.

As shown in Figure~\ref{fig-message-position}, the initial plaintext  message $m$ is  encoded into the most significant bits of the second ciphertext term for most \{R,M\}LWE-based KEM schemes like Kyber, where their first ciphertext term (corresponding to an \{R,M\}LWE sample) is independent of the second ciphertext term. The randomness and security of the ciphertext are guaranteed by the  \{R,M\}LWE samples. On the contrary, the initial message is encoded into the least significant bits of the ciphertext for traditional NTRU-based KEM schemes. Actually, NTRU-based KEM schemes \cite{ntru-HPS98,ntru-nist-round3,ntru-secure-as-ideal-lattice-SS11,ntru-variant-eprint-1352-DHK21} have  ciphertexts of the form $c=phr+m \bmod q$, where $p$ is the message space modulus, $h$ is the public key, $r$ is the randomness, and $m$ is the message to be encrypted. In the decryption process, one could compute $cf \bmod q = pgr + m f$, and clean out the term $pgr$ via reduction modulo $p$. In order to obtain $m$, one can multiply the inverse of $f$ modulo $p$, or directly reduce modulo $p$ if $f=pf'+1$. This can be viewed as  a unidimensional error-correction mechanism.

Compressing the ciphertext of NTRU-based KEM schemes means dropping some least significant bits, which is equivalent to increasing the small error. For \{R,M\}LWE-based KEM schemes, compressing the first ciphertext term has no impact on the messages. The impact brought by reasonably compressing the second ciphertext term could be eliminated if the total error is within the capacity range of  the message-recovering mechanism. However, for traditional NTRU-based KEM schemes, ciphertext compression will destroy the useful information of the encoded messages in the least significant bits of the ciphertext. Consequently, the initial messages can not be recovered correctly.

%Without ciphertext compression, as a consequence, at about the same security level, the bandwidth of NTRU-based KEM schemes is usually larger than that of \{R,M\}LWE-based KEM schemes.

\subsubsection{Weak starting point of security reduction}

%The second drawback is about security reduction.
For most NTRU-based KEM constructions, their chosen ciphertext attack (CCA) security is usually  reduced to the one-way (OW-CPA) secure encryption instead of the traditional IND-CPA secure encryption. Above all, IND-CPA security is a strictly stronger security notion than OW-CPA security. Though  OW-CPA security can be transformed into IND-CPA security, but at the price of further loosening the reduction bound particularly in the quantum random oracle model (QROM)~\cite{ntru-variant-eprint-1352-DHK21}. One can also have a tight reduction from CCA security to OW-CPA \emph{deterministic} public-key encryption (DPKE), but at the cost of a more complicated decapsulation process~\cite{ntru-nist-round3,ntru-prime-nist-round3}. More detailed discussions and clarifications on CCA security reduction of KEM in the ROM and the QROM are presented in Appendix~\ref{app-sec-reduction}. As a consequence, it is still desirable for NTRU-based KEM constructions  to have security reduction from CCA security to IND-CPA security, as is in \{R,M\}LWE-based KEM schemes.

\subsubsection{Complicated key generation}

%The third obstacle is about the key generation.

Typically, there are only one or two polynomial multiplications in the encryption process and decryption process of NTRU-based KEM schemes, such that the  encryption process and decryption process are as efficient as (and could even be more efficient than)  those of \{R,M\}LWE-based ones. However, for most NTRU-based KEM schemes (with~\cite{nttru-LS19,ntru-variant-eprint-1352-DHK21} as exceptions), the main efficiency obstacle  is from their key generations, since there exits a complicated computation of polynomial inverse for which there  does not exit much efficient algorithms for  most of the  polynomial rings chosen  by NTRU-based KEM schemes.

%, which leads to much slower  key generations  than those of \{R,M\}LWE-based KEM schemes. The total efficiency of NTRU-based KEM schemes would be improved greatly, if one can compute polynomial inverse through efficient algorithms for the given polynomial ring.

Unfortunately, there are no literatures to propose such NTRU-based KEM schemes which can overcome all the obstacles mentioned above. This leads us to the following motivating question.

\begin{tcolorbox}[title=Motivating question]
	Is it possible to construct NTRU-based KEM schemes that have essentially the same or \emph{even better} integrated performance in security, bandwidth, error probability, and computational efficiency \emph{as a whole}, than \{R,M\}LWE-based KEM schemes?
\end{tcolorbox}

\subsection{Our Contributions}

Our main result shows that NTRU-based KEM schemes can practically have a remarkable integrated performance (in security, bandwidth, error probability, and computational efficiency as a whole), just as and even better than \{R,M\}LWE-based KEM schemes. In this work, we present such practical constructions, and  instantiate such  NTRU-based KEM schemes with detailed analysis.

Specifically, in this work,  we present new variants of NTRU-based cryptosystem, referred to as  CTRU and CNTR for presentation simplicity, which can allow larger secret ranges, achieve scalable ciphertext compression,
have CCA provable security reduced directly to IND-CPA security, and have fast implementations. The error probabilities of CTRU and CNTR are  low enough, which are usually   lower than those of Kyber.  They consist of IND-CPA secure public-key encryptions, named CTRU.PKE and CNTR.PKE, and IND-CCA secure key encapsulation mechanisms, named CTRU.KEM and CNTR.KEM constructed through $\text{FO}_{ID(pk),m}^{\not\bot}$ that is an enhanced variant of Fujisaki-Okamoto transformation~\cite{fo-transform-FO99,fo-transform-HHK17} with a short prefix of the public key into the hash function~\cite{fo-transform-prefix-hash-DHK+21}.

Our CTRU and CNTR    demonstrate novel approaches to constructing  NTRU-based schemes.
%We strongly recommend CTRU for its underlying stable RLWE assumption.
The descriptions of CTRU and CNTR are over  NTT-friendly rings of the form $\mathbb{Z}_q[x]/(x^n-x^{n/2}+1)$. We choose $n\in\{512,768,1024\}$ for NIST recommended security levels I, III and V, respectively,  with  the same modulus $q=3457$ (that is close to the modulus $q=3329$ of Kyber) set for all the three dimensions for ease of implementation simplicity and compatability. We may recommend the case of $n=768$ that  could have a moderate balance of post-quantum security and performance for most applications in practice. % in reality.

\subsubsection{Efficient constant-time scalable  lattice code}

Before introducing our proposed schemes, for better capability on recovering message and low enough error probability, we apply  and refine   the scalable $\text{E}_8$ lattice code  based on the works ~\cite{e8-lattice-decoding-CS82,e8-lattice-decoding-book-CS13,e8-lattice-Via18,akcn-e8-JSZ22}. As for its density, there is a remarkable mathematical breakthrough that sphere packing in the $\text{E}_8$ lattice is proved to be optimal in the sense of the best density when packing in $\mathbb{R}^{8}$~\cite{e8-lattice-Via18}. To avoid the potential timing attacks, we present constant-time encoding and decoding algorithms of the scalable $\text{E}_8$ lattice code.
 All the conditional statements are implemented by constant-time bitwise operations. Unlike most of other existing error correction codes whose constant-time implementations are  inherently difficult,  the  constant-time implementation of the scalable $\text{E}_8$ lattice code is practical and efficient. We present  the scalable $\text{E}_8$ lattice coding algorithms in section~\ref{sec-the-lattice-codes}, and give  the details about the constant-time implementations in section~\ref{sec-constant-time}.

\begin{table*}
%	\scriptsize
%	\footnotesize
	\centering
	\setlength{\tabcolsep}{1.5mm}
	\caption{Comparisons between CTRU, CNTR and other practical lattice-based KEM schemes. }
	\begin{tabular}{ccccccccccc}
		\hline
		                             Schemes                               &                Assumptions                 &                  Reduction                   &                                  Rings                                  & $n$  & $q$  & $|pk|$ & $|ct|$ & B.W. & (Sec.C, Sec.Q) &   $\delta$    \\ \hline
		                   \multirow{3}{*}{CTRU (Ours)}                    & \multirow{3}{*}{\makecell[c]{NTRU,\\RLWE}} & \multirow{3}{*}{\makecell[c]{IND-CPA\\RPKE}} &          \multirow{3}{*}{$\mathbb{Z}_q[x]/(x^{n}-x^{n/2}+1)$}           & 512  & 3457 &  768   &  640   & 1408 &   (118,107)    &  $2^{-143}$   \\
		                                                                   &                                            &                                              &                                                                         & 768  & 3457 &  1152  &  960   & 2112 &   (181,164)    &  $2^{-184}$   \\
		                                                                   &                                            &                                              &                                                                         & 1024 & 3457 &  1536  &  1408  & 2944 &   (255,231)    &  $2^{-195}$   \\ \hline
		                   \multirow{3}{*}{CNTR (Ours)}                    & \multirow{3}{*}{\makecell[c]{NTRU,\\RLWR}} & \multirow{3}{*}{\makecell[c]{IND-CPA\\RPKE}} &          \multirow{3}{*}{$\mathbb{Z}_q[x]/(x^{n}-x^{n/2}+1)$}           & 512  & 3457 &  768   &  640   & 1408 &   (127,115)    &  $2^{-170}$   \\
		                                                                   &                                            &                                              &                                                                         & 768  & 3457 &  1152  &  960   & 2112 &   (191,173)    &  $2^{-230}$   \\
		                                                                   &                                            &                                              &                                                                         & 1024 & 3457 &  1536  &  1280  & 2816 &   (253,230)    &  $2^{-291}$   \\ \hline
		                NTRU-HRSS~\cite{ntru-nist-round3}                  &                    NTRU                    &          \makecell[c]{OW-CPA\\DPKE}          &                       $\mathbb{Z}_q[x]/(x^{n}-1)$                       & 701  & 8192 &  1138  &  1138  & 2276 &   (136,124)    & $2^{-\infty}$ \\ \hline
		          SNTRU Prime-761~\cite{ntru-prime-nist-round3}            &                    NTRU                    &          \makecell[c]{OW-CPA\\DPKE}          &                      $\mathbb{Z}_q[x]/(x^{n}-x-1)$                      & 761  & 4591 &  1158  &  1039  & 2197 &   (153,137)    & $2^{-\infty}$ \\ \hline
		                     NTTRU~\cite{nttru-LS19}                       &                    NTRU                    &          \makecell[c]{OW-CPA\\RPKE}          &                   $\mathbb{Z}_q[x]/(x^{n}-x^{n/2}+1)$                   & 768  & 7681 &  1248  &  1248  & 2496 &   (153,140)    &  $2^{-1352}$  \\ \hline
		$\text{NTRU-C}_{3457}^{768}$~\cite{ntru-variant-eprint-1352-DHK21} &         \makecell[c]{NTRU,\\RLWE}          &         \makecell[c]{IND-CPA\\RPKE}          &                   $\mathbb{Z}_q[x]/(x^{n}-x^{n/2}+1)$                   & 768  & 3457 &  1152  &  1184  & 2336 &   (171,155)    &  $2^{-281}$   \\ \hline
		         \multirow{3}{*}{Kyber~\cite{kyber-nist-round3}}           &           \multirow{3}{*}{MLWE}            & \multirow{3}{*}{\makecell[c]{IND-CPA\\RPKE}} & \multirow{3}{*}{\makecell[c]{$\mathbb{Z}_q[x]/(x^{n/k}+1)$\\$k=2,3,4$}} & 512  & 3329 &  800   &  768   & 1568 &   (118,107)    &  $2^{-139}$   \\
		                                                                   &                                            &                                              &                                                                         & 768  & 3329 &  1184  &  1088  & 2272 &   (183,166)    &  $2^{-164}$   \\
		                                                                   &                                            &                                              &                                                                         & 1024 & 3329 &  1568  &  1568  & 3136 &   (256,232)    &  $2^{-174}$   \\ \hline
		         \multirow{3}{*}{Saber~\cite{saber-nist-round3}}           &           \multirow{3}{*}{MLWR}            & \multirow{3}{*}{\makecell[c]{IND-CPA\\RPKE}} & \multirow{3}{*}{\makecell[c]{$\mathbb{Z}_q[x]/(x^{n/k}+1)$\\$k=2,3,4$}} & 512  & 8192 &  672   &  736   & 1408 &   (118,107)    &  $2^{-120}$   \\
		                                                                   &                                            &                                              &                                                                         & 768  & 8192 &  992   &  1088  & 2080 &   (189,172)    &  $2^{-136}$   \\
		                                                                   &                                            &                                              &                                                                         & 1024 & 8192 &  1312  &  1472  & 2784 &   (260,236)    &  $2^{-165}$   \\ \hline
	\end{tabular}
	\label{tab-comparisons-of-schemes}
\end{table*}

\subsubsection{New constructions}

The key generation algorithm in CTRU is similar to the  exiting NTRU-based KEM schemes such as~\cite{ntru-HPS98,ntru-nist-round3,ntru-prime-nist-round3,ntru-variant-eprint-1352-DHK21}.  CTRU uses $h=g/f$  as its public key and $f$ as its secret key. We develop a novel  encryption algorithm which breaks through the limitation of  ciphertext compression for  NTRU-based KEM, which allows us  to compress the ciphertexts in the case of one \emph{single} polynomial. To be specific, we encode every 4-bit message into a scalable $\text{E}_8$ lattice point,  and hide its information by adding an RLWE instance, which forms  the ciphertext. This way,  the message is encoded into the most significant bits of the ciphertext, such that compressing the ciphertext does not destroy the useful information of the message. As for the decryption algorithm, we multiply the ciphertext polynomial by the secret polynomial, and finally recover the messages correctly with the aid of the decoding algorithm in the scalable $\text{E}_8$ lattice whenever the $\ell_2$ norm of the error term is less than the sphere radius of the scalable $\text{E}_8$ lattice. An important point to note here is, different from the most existing NTRU-based KEM schemes such as~\cite{ntru-HPS98,ntru-nist-round3,ntru-variant-eprint-1352-DHK21}, in CTRU  the message space modulus $p$ is removed in the public key $h$ and in the ciphertext $c$, as it is not needed there to recover the message $m$ with our CTRU construction. The only reserved position for  $p$ is the secret key $f$, which has the form of $f=pf'+1$. We show that the above steps constitute an IND-CPA secure PKE scheme: CTRU.PKE, based on the NTRU assumption and the RLWE assumption. Finally, we  apply the $\text{FO}_{ID(pk),m}^{\not\bot}$ transformation~\cite{fo-transform-prefix-hash-DHK+21} to get the IND-CCA secure CTRU.KEM. The CNTR scheme is a simplified and more efficient variant of CTRU: the noise polynomial is eliminated, and the rounding of the output of the scalable $\text{E}_8$ lattice encoding algorithm is moved. The security of CNTR is based on the NTRU assumption and the RLWR assumption.  The detailed construction of CTRU and CNTR are given in section~\ref{sec-ntru-e8-lwe-lwr-proposal}. To our knowledge, CTRU and CNTR are the first  NTRU-based KEM constructions which bridges high-dimensional NTRU-lattice-based cryptography and low-dimensional lattice codes, and are the first NTRU-based KEM schemes with scalable ciphertext compression via only one single ciphertext polynomial.

%is constructed as a variant of CTRU and is based on the RLWR assumption. Different from CTRU, CNTR has the following distinctions: the noise polynomial is eliminated and the rounding of the output of the scalable $\text{E}_8$ lattice encoding algorithm is moved. The corresponding descriptions of CTRU and CNTR are summarized in section~\ref{sec-ntru-e8-lwe-lwr-proposal}.

\subsubsection{Provable security}
As for security reduction, our CTRU.PKE (resp., CNTR.PKE)  can achieve the  IND-CPA security under the NTRU assumption and the RLWE (resp., RLWR) assumption, while most of the existing practical  NTRU-based PKEs only  achieve OW-CPA security. Note that, the RLWE and RLWR assumptions are only required to achieve IND-CPA security for our schemes, since CTRU.PKE and CNTR.PKE are still OW-CPA secure only based on the NTRU assumption (i.e., without further relying on the RLWE or RLWR assumptions). The reduction advantages of CCA security of our CTRU.KEM and CNTR.KEM  are tighter than  those of NTTRU~\cite{nttru-LS19} and $\text{NTRU-C}_{3457}^{768}$~\cite{ntru-variant-eprint-1352-DHK21}. For example, in the quantum setting, the CCA reduction bound of  CTRU.KEM is dominated by  $O(\sqrt{q' \epsilon_{CPA}})$,  while those of NTTRU and $\text{NTRU-C}_{3457}^{768}$ are $O(q'\sqrt{\epsilon_{OW}})$ and $O(q'^{1.5} \sqrt[4]{\epsilon_{OW}})$ respectively, where $\epsilon_{CPA}(\epsilon_{OW})$ is the advantage against the underlying IND-CPA (resp., OW-CPA) secure  PKE and $q'$ is the total query number.
However, NTRU-HRSS~\cite{ntru-variant-HRSS17,ntru-nist-round3} has a tight CCA reduction bound starting from OW-CPA \emph{deterministic} PKE (DPKE), at the cost of more complicated and time-consuming decryption process~\cite{ntru-nist-round3}. In any case, IND-CPA security is a strictly stronger security notion than OW-CPA security.

\subsubsection{More accurate analysis of error probability}

Previously, the work~\cite{nttru-LS19} gave a conservative estimation of the error probability, based on the worst case consisting of $\frac{3}{2}n$ terms for each polynomial product coefficient in $\mathbb{Z}_q[x]/(x^n-x^{n/2}+1)$. In this work, we derive the exact number of the terms of the polynomial product coefficient, and improve the error probability analysis developed in~\cite{nttru-LS19}, which might be of independent interest. The concrete analysis is provided in section~\ref{sec-exact-form-polynomial-product}.

\subsubsection{Performance and comparisons}

By careful evaluation and selection, we provide some parameter sets for CTRU and CNTR, and present the recommended parameter sets in section \ref{sec-parameters}. We also make a comprehensive analysis of CTRU and CNTR on the provable security, core-SVP hardness, refined gate-count estimate, dual attack, S-unit attack, BKW attack and side channel attack, etc, in section~\ref{sec-attack-analysis}. Here, we present brief comparisons between our schemes on the recommended parameters and other prominent practical NTRU-based KEM schemes: NTRU-HRSS~\cite{ntru-variant-HRSS17,ntru-nist-round3}, SNTRU Prime~\cite{ntru-prime-nist-round3}, NTTRU~\cite{nttru-LS19} and $\text{NTRU-C}_{3457}^{768}$~\cite{ntru-variant-eprint-1352-DHK21}, as well as the NIST standardized candidate Kyber~\cite{kyber-nist-round3} and the NIST Round 3 finalist Saber~\cite{saber-nist-round3}. The comparisons are summarized in Table~\ref{tab-comparisons-of-schemes}.  There, the column ``Assumptions'' refers to the underlying hardness assumptions. The column ``Reduction'' means that IND-CCA security is reduced to what kinds of CPA security, where ``IND'' (``OW'') refers to indistinguishability (resp., one-wayness) and ``RPKE'' (``DPKE'') refers to randomized (resp., deterministic) public-key encryptions. ``Rings'' refers to the underlying polynomial rings. The column ``$n$'' means the total dimension of algebraically structured lattices. ``$q$'' is the modulus. The public key sizes $|pk|$, ciphertext sizes $|ct|$, and B.W. (bandwidth, $|pk|+|ct|$)  are measured in bytes.  ``Sec.C'' and ``Sec.Q'' mean the estimated security expressed in bits in the classical and quantum setting respectively, which are gotten by the same methodology and scripts provided by Kyber, Saber, and NTRU KEM in NIST PQC Round 3, where we  minimize the target values if the two hardness problems, say NTRU and RLWE/RLWR, have different security values. The column ``$\delta$'' indicates the error probabilities, where the error probabilities of NTTRU and $\text{NTRU-C}_{3457}^{768}$ are re-tested according to the accurate measurement methodology discussed in section~\ref{sec-exact-form-polynomial-product}.

%The following comparisons and analysis mainly focus on CTRU. To the best of our knowledge, CTRU and CNTR
%for the first time are the two NTRU-based KEM schemes with scalable ciphertext compression via only one single ciphertext polynomial.

From the comparisons,  CNTR has the smallest bandwidth  and the strongest security guarantees among all the practical NTRU-based KEM schemes. For example,  when compared to the NIST Round 3 finalist NTRU-HRSS~\cite{ntru-nist-round3}, our CNTR-768 has about $15\%$ smaller ciphertext size,  and its security is strengthened   by $(55,49)$ bits for classical and quantum security, respectively. The error probabilities of CNTR are set according to the security level targeted by each set of parameters, which can be viewed as negligible in accordance with the security level.
% For example,  when compared to the NIST Round 3 finalist NTRU-HRSS~\cite{ntru-nist-round3}, our CTRU-768 has $15\%$ smaller ciphertext size and its security is strengthened   by $(45,40)$ bits for classical and quantum security, respectively.
When compared to  Kyber-768~\cite{kyber-nist-round3} that is standardized by NIST,  CNTR-768 has about  $12\%$ smaller ciphertext size,  and its security is strengthened   by $(8,7)$ bits for classical and quantum security, respectively. For all the three recommended parameter sets, the error probabilities of CNTR are significantly lower than those of Kyber (e.g., $2^{-230}$ of CNTR-768 vs. $2^{-164}$ of Kyber-768). To the best of our knowledge, CNTR is the first NTRU-based KEM that could outperform Kyber in the  integrated performance by considering security, bandwidth, error probability, and computational efficiency as a whole. We also would like to stress that we do not know how to have the well balance achieved by CTRU/CNTR by simply adjusting parameters for the existing NTRU-based KEM schemes. Another significant point is that CTRU and CNTR admit more flexible key sizes to be encapsulated, i.e., $n/2$-bit shared keys according to the polynomial rings we used, but Kyber and Saber can only encapsulate fixed 256-bit shared keys.

\subsubsection{Unified NTT}

The NTT-based polynomial operations over $\mathbb{Z}_q[x]/(x^n -x^{n/2} + 1)$ are very efficient. However, as the dimension $n$ varies with CTRU and CNTR, we have to equip  with  multiple  NTT algorithms  with different input/output lengths in accordance with   $n\in \{512, 768, 1024\}$.
This brings inconvenient issues  for  software implementations and especially for  hardware implementations. In this work, we overcome this problem by presenting  the  methodology of using a unified NTT technique to compute NTTs over $\mathbb{Z}_q[x]/(x^n-x^{n/2}+1)$ for all $n\in \{512,768,1024\}$ with $q=3457$, which might be of independent interest. Technically speaking, we split $f \in \mathbb{Z}_q[x]/(x^n-x^{n/2}+1)$ into $\alpha \in \{2,3,4\}$ sub-polynomials of lower degrees, each of which is in $\mathbb{Z}_q[x]/(x^{256}-x^{128}+1)$. We then design a 256-point unified NTT based on the ideas from~\cite{nttru-LS19,incomplete-ntt-moenck76},  and apply it to each sub-polynomial. Finally,  their intermediate NTT results are combined to generate the final results. In this case, in order to obtain the public key (the quotient of two $n$-dimension polynomials), we need to compute the inversions in the rings of the form $\mathbb{Z}_q[x]/(x^{2\alpha } - \zeta)$, where $\zeta$ is some primitive root of unity in $\mathbb{Z}_q$. We use Cramer's Rule~\cite{cramer-rule-linear-algebra-book} to compute the inverse of polynomials of low degree. More details are presented in section~\ref{sec-poly-operations-in-ntru-e8}.

\subsubsection{Implementation and benchmark}

We provide portable C implementation and optimized AVX2 implementation for CTRU-768 and CNTR-768.
 %Both of them are under constant-time implementation skills.
 More details and discussions about the implementation can be seen in section~\ref{sec-implementation}.
We perform benchmark comparisons with the related lattice-based KEM schemes and some prominent  non-lattice-based KEM schemes. The benchmark comparisons show that the encapsulation and decapsulation algorithms of our schemes are among the most efficient. As for the optimized AVX2 implementations, CTRU-768 is faster than NTRU-HRSS by 23X in KeyGen, 2.1X in Encaps, and 1.6X in Decaps, respectively; CNTR-768 is faster than NTRU-HRSS by 19X in KeyGen, 2.3X in Encaps, and 1.6X in Decaps, respectively. When compared to the state-of-the-art AVX2 implementation of Kyber-768, CTRU-768 is faster by 2.3X in KeyGen, 2.3X in Encaps, and 1.2X in Decaps, respectively; CNTR-768 is faster by 1.9X in KeyGen, 2.6X in Encaps, and 1.2X in Decaps, respectively. The benchmark comparisons are referred to section \ref{sec-benchmark-comparison}.

%We provide a practically efficient implementation of CTRU. We compare our CTRU to NTRU-HRSS, SNTRU Prime and NTTRU, along with other lattice-based NIST PQC schemes like Kyber. Our CTRU is really fast by using NTT in polynomial operations. When compared to the state-of-the-art NTRU-HRSS, our CTRU-768 is faster by 15X in KeyGen, 39X in Encaps and 61X in Decaps, when CTRU-768 has a $15\%$ shorter ciphertext and stronger security. When compared to Kyber, our CTRU is much faster in Encaps and Decaps, although the key generation algorithm is slower due to the slightly time-consuming polynomial division. However,  the key generation algorithm is not used frequently.

\subsection{Related Work}

In recent years, many NTRU variants have been proposed. Jarvis and Nevins~\cite{ntru-variant-etru-JN15} presented a new variant of NTRU-HPS~\cite{ntru-HPS98} over
the ring of Eisenstein integers $\mathbb{Z}[\omega]/(x^n-1)$ where $\omega=e^{2 \pi i /3}$, which has smaller key sizes and faster performance than NTRU-HPS.
Bagheri et al.~\cite{ntru-variant-bqtru-BSP18} generalized NTRU-HPS over bivariate
polynomial rings of the form $(-1,-1)/(\mathbb{Z}[x,y]/(x^n-1,y^n-1))$ for stronger security and smaller public key sizes. H{\"{u}}lsing et al.~\cite{ntru-variant-HRSS17} improved NTRU-HPS in terms of speed, key size, and ciphertext size, and presented NTRU-HRSS, which was one of the finalists in NIST PQC Round 3~\cite{ntru-nist-round3}. Bernstein et al.~\cite{ntru-prime-BCLV17} proposed NTRU Prime, which aims for ``an efficient implementation of high security prime-degree large-Galois-group inert-modulus ideal-lattice-based cryptography''. It tweaks the textbook NTRU scheme to use some rings with less special structures, i.e., $\mathbb{Z}_q[x]/(x^n - x -1)$, where both $n$ and $q$ are primes.

In order to obtain better performance of NTRU encryption, Lyubashevsky and Seiler~\cite{nttru-LS19} instantiated it over $\mathbb{Z}_{7681}[x]/(x^{768} - x^{384}+1)$. Then Duman et al.~\cite{ntru-variant-eprint-1352-DHK21} generalized the rings $\mathbb{Z}_{q}[x]/(x^{n} - x^{n/2}+1)$ with various $n$ for flexible parameter selection. But all of them follow the similar structure of NTRU-HPS and do not support  ciphertext compression.

Recently, Fouque et al.~\cite{ntru-variant-bat-FKPY22} proposed a new NTRU variant named BAT. It shares many similarities with Falcon signature~\cite{falcon-nist-round3} where a trapdoor basis is required in the  secret key, which makes its key generation complicated. BAT uses two linear equations in two unknowns to recover the secret and error, without introducing the modulus $p$ to extract message. It reduces the ciphertext sizes by constructing its intermediate value as an RLWR instance (with binary secrets),  and encrypts the message via $\text{ACWC}_0$ transformation~\cite{ntru-variant-eprint-1352-DHK21}. However, $\text{ACWC}_0$ transformation consists of two terms, causing that there are some dozens of bytes in the second ciphertext term.
Another disadvantage is about the inflexibility of selecting
parameters. Since BAT applies power-of-two cyclotomics $\mathbb{Z}_q[x]/(x^n+1)$,  it is inconvenient to find an underlying cyclotomic polynomial of some particular degree up to the next power of two. For example, BAT chooses $\mathbb{Z}_q[x]/(x^{512}+1)$ and $\mathbb{Z}_q[x]/(x^{1024}+1)$ for NIST recommended security levels I and V, but lacks of parameter set for level III, which, however, is the aimed and recommended security level for  most lattice-based KEM schemes like Kyber~\cite{kyber-nist-round3} and our schemes. Although BAT has an advantage of bandwidth, its key generation is 1,000 times slower than other NTRU-based KEM schemes,  and there are some worries about its provable security based on the RLWR assumption with binary secrets which is quite a new assumption tailored for BAT. For the above  reasons, we do not make a direct comparison between our schemes and BAT.

%\subsection{Organizations}
%
%
%This paper is structured as follows.  Section \ref{sec-preliminaries} gives the preliminaries including some notations ans definitions. Section~\ref{sec-the-lattice-codes} introduces our scalable $\text{E}_8$ lattice coding. Section \ref{sec-ctru-proposal} describes our scheme named CTRU, including an IND-CPA secure PKE and an IND-CCA secure KEM. Section~\ref{sec-attack-analysis} shows the known attack analysis for CTRU, including NTRU attacks and LWE attacks. Section~\ref{sec-poly-operations-in-ntru-e8} introduces our polynomial operations. Section~\ref{sec-implementation-benchmark} shows the benchmark results of our implementations.

%%============================================================================================
%
%
%                                    Preliminaries
%
%
%%============================================================================================

~{}
\section{Preliminaries}\label{sec-preliminaries}

\subsection{Notations and Definitions}\label{sec-notation-definition}

Let $ \mathbb{Z} $  and $ \mathbb{R}$ be the set of rational integers and real numbers, respectively. Let $n$  and $q$ be some positive integers. Denote $ \mathbb{Z}_q = \mathbb{Z}/q\mathbb{Z} \cong \{0,1,\ldots,q-1\}$ and $ \mathbb{R}_q = \mathbb{R}/q\mathbb{R} $. Let $\mathbb{Z}_q^{\times}$ be the group of invertible elements of $ \mathbb{Z}_q$. For any $x \in \mathbb{R}$, $\lfloor x \rceil $ denotes the closest integer to $x$. We denote $\mathbb{Z}[x]/(x^{n}-x^{n/2}+1)$ and $\mathbb{Z}_q[x]/(x^{n}-x^{n/2}+1)$ by $\mathcal{R}$ and $\mathcal{R}_q$ respectively in this work. The elements in $\mathcal{R}$ or $\mathcal{R}_q$ are polynomials, which are denoted by regular font letters such as $f,g$. The polynomial, e.g., $f$, in $\mathcal{R}$ (or $\mathcal{R}_q$) can be represented in the form of power series: $f=\sum_{i=0}^{n-1}{f_i x^i}$, or in the form of vector: $f=(f_0, f_1,\ldots,f_{n-1})$, where $f_i \in \mathbb{Z}$ (or $f_i \in \mathbb{Z}_q$), $i=0,1,\ldots,n-1$. A function $\epsilon: \mathbb{N} \to [0,1] $ is negligible, if $\epsilon(\lambda) < 1/\lambda^c$ holds for any positive $c$ and sufficiently large $\lambda$. Denote a negligible function by $negl$.

%\textbf{Bitreversal.} Let $n$ be a power of two, and $b$ be a non-negative integer satisfying $ b<n $. The bitreversal of $b$ with respect to $n$ is defined as $\text{brv}_n ( b_{\log n -1} 2^{\log n -1} +  \ldots + b_{1} 2 + b_0 ) = b_0 2^{\log n -1} +  \ldots + b_{\log n -2} 2 + b_{\log n -1}$, where $b_i$ is the $i$-th bit of the binary expansion of $b$ . 	

\textbf{Cyclotomics.} More details about cyclotomics can be found in \cite{cyclotomic-fields}. Let $m$ be a positive integer, $\xi_m=\exp(\frac{2 \pi i}{m})$ be a $m$-th root of unity. The $m$-th cyclotomic polynomial $\Phi_m(x)$ is defined as $\Phi_m(x)=\prod_{j=1,\gcd(j,m)=1}^{m} { ( x - \xi_m^j)}$. It is a monic irreducible
polynomial of degree $\phi(m)$ in $\mathbb{Z}[x]$, where $\phi$ is the Euler function. The $m$-th cyclotomic field is $\mathbb{Q}(\xi_m)\cong \mathbb{Q}[x]/(\Phi_m(x))$ and its corresponding ring of integers is exactly $\mathbb{Z}[\xi_m] \cong \mathbb{Z}[x]/(\Phi_m(x))$. Most of cryptographic schemes based on algebraically structured lattices are defined over power-of-two cyclotomic rings,  $\mathbb{Z}[x]/(x^{n}+1)$ and $\mathbb{Z}_q[x]/(x^{n}+1)$, where $n=2^e$ is a power of two such that $x^n+1$ is the $2^{e+1}$-th cyclotomic polynomial. We use non-power-of-two cyclotomic rings $\mathbb{Z}[x]/(x^{n}-x^{n/2}+1)$ and $\mathbb{Z}_q[x]/(x^{n}-x^{n/2}+1)$, where $n=3^{l}\cdot 2^e, l \ge 0, e \ge 1$ throughout this paper and in this case $x^n - x^{n/2} + 1$ is the $3^{l+1}\cdot 2^e$-th cyclotomic polynomial.

\textbf{Modular reductions.} In this work, we expand the definition of modular reduction from $\mathbb{Z}$ to $\mathbb{R}$. For a positive number $q$, $r'=r \bmod^{\pm} q$ means that $r'$ is the representative element of $r$ in $[- \frac{q}{2} , \frac{q}{2} )$. Let $r'=r \bmod q$ denote as the representative element of $r$ in $[0, q)$.

\textbf{Sizes of elements.} Let $q$ be a positive number. For any number $w \in \mathbb{R}$, denote by $\|w\|_{q,\infty}=|w \bmod^{\pm} q|$ its $\ell_\infty$ norm. If ${w}$ is an $n$-dimension vector, then its $\ell_2$ norm is defined as $ \|{w}\|_{q,2} = \sqrt{\|w_0\|_{q,\infty}^2 + \cdots+\|w_{n-1}\|_{q,\infty}^2 } $. Notice that $ \|w\|_{q,2} = \|w\|_{q,\infty}$ holds for any number $w \in \mathbb{R}$.

\textbf{Sets and Distributions.} For a set $D$, we denote by $x \xleftarrow{\$} D$ sampling $x$ from $D$ uniformly at random. If $D$ is a probability distribution, $x \gets D$ means that $x$ is chosen according to the distribution $D$. The centered binomial distribution $B_\eta$ with respect to a positive integer $\eta$ is defined as follows: $ \text{Sample } (a_1,\ldots,a_\eta,b_1,\ldots,b_\eta) \xleftarrow{\$} \{0,1\}^{2\eta}$, and output $\sum_{i=1}^{\eta}{( a_i - b_i)} $. Sampling a polynomial $f \gets B_\eta$ means sampling each coefficient according to $B_\eta$ individually.

%The distribution $\bar{B}_{\eta}$ is defined as $B_{\eta}$ reduced module 3. Denote by $U_3$ the uniform distribution over $\{-1,0,1\}$. In this work, we mainly work with $B_2$, $\bar{B}_{2}$ and $U_3$. Sampling a polynomial $f \gets B_\eta$ (resp., $\bar{B}_{\eta}$, $U_3$) means that sampling each coefficient according to $B_\eta$ (resp., $\bar{B}_{\eta}$, $U_3$) individually.

\subsection{Cryptographic Primitives}\label{sec-cryptographic-primitives}

A public-key encryption scheme contains PKE = (KeyGen, Enc, Dec), with a message space $\mathcal{M}$. The key generation algorithm KeyGen returns a pair of public key and secret key $(pk,sk)$. The encryption algorithm Enc takes a public key $pk$ and a message $m \in \mathcal{M}$ to produce a ciphertext $c$. Denote by Enc$(pk,m;coin)$ the encryption algorithm with an explicit randomness $coin$ if necessary. The deterministic decryption algorithm Dec takes a secret key $sk$ and a ciphertext $c$, and outputs either a message $m \in \mathcal{M}$ or a special symbol $\perp$ to indicate a rejection. The decryption error $\delta$ of PKE is defined as E[$\max_{m \in \mathcal{M}}$Pr[Dec($sk$,Enc($pk,m$)) $\neq m$]]$ < \delta$ where the expectation is taken over $(pk,sk) \gets \text{KeyGen}$ and the probability is taken over the random coins of $\text{Enc}$. The advantage of an adversary $\mathsf{A}$ against \emph{indistinguishability under chosen-plaintext attacks} (IND-CPA) for public-key encryption is defined as $\textbf{Adv}_{\text{PKE}}^{\text{IND-CPA}}(\mathsf{A})=$
\begin{align*}
	\begin{split}
		\left| \text{Pr} \left[
		b'=b:
		\begin{array}{c}
			(pk,sk) \gets \text{KeyGen}(); \\
			(m_0,m_1,s) \gets \mathsf{A}(pk);\\
			b \xleftarrow{\$}  \{0,1\}; c^* \gets \text{Enc}(pk,m_b);\\
			b' \gets \mathsf{A}(s,c^*)
		\end{array}
		\right]
		- \frac{1}{2} \right|.
	\end{split}
\end{align*}

A key encapsulation mechanism contains KEM = (KeyGen, Encaps, Decaps) with a key space $\mathcal{K}$. The key generation algorithm KeyGen returns a pair of public key and secret key $(pk,sk)$. The encapsulation algorithm Encaps takes a public key $pk$ to produce a ciphertext $c$ and a key $ K \in \mathcal{K}$. The deterministic decapsulation algorithm Decaps inputs a secret key $sk$ and a ciphertext $c$, and outputs either a key $ K \in \mathcal{K}$ or a special symbol $\perp$ indicating a rejection. The error probability $\delta$ of KEM is defined as Pr[Decaps$(sk,c) \neq K:(c,K) \gets$ Encaps($pk$)] $ < \delta$ where the probability is taken over $(pk,sk) \gets \text{KeyGen}$ and the random coins of $\text{Encaps}$. The advantage of an adversary $\mathsf{A}$ against \emph{indistinguishability under chosen-ciphertext attacks} (IND-CCA) for KEM is defined as $\textbf{Adv}_{\text{KEM}}^{\text{IND-CCA}}(\mathsf{A})= $
\begin{align*}
	\begin{split}
		\left| \text{Pr} \left[
		b'=b:
		\begin{array}{c}
			(pk,sk) \gets \text{KeyGen}(); \\
			b \xleftarrow{\$}  \{0,1\}; \\
			(c^*,K_0^*) \gets \text{Encaps}(pk);\\
			K_1^* \xleftarrow{\$}  \mathcal{K}; \\
			b' \gets \mathsf{A}^{ \text{ Decaps}( \cdot )  }(pk,c^*,K_b^*)
		\end{array}
		\right]
		- \frac{1}{2} \right|.
	\end{split}
\end{align*}

\subsection{Hardness Assumptions}

As the lattice cryptography evolved over the decades, the security of NTRU and its variants can be naturally viewed as two assumptions. One is the \emph{NTRU} assumption~\cite{ntru-HPS98}, and the other is the \emph{Ring-Learning with error} (RLWE) assumption~\cite{rlwe-LPR10}, which are listed as follows. In some sense, the NTRU assumption can be viewed as a  special case of the RLWE assumption. More details about NTRU cryptosystem and its applications can be seen in the excellent survey~\cite{ntru-survey-Ste14}.

\begin{definition}[NTRU assumption~\cite{ntru-HPS98}]
	Let $\Psi$ be a distribution over a polynomial ring R. Sample $f$ and $g $ according to $\Psi$, and $f$ is invertible in R. Let $h=g/f$. The decisional NTRU assumption states that $h$ is indistinguishable from a uniformly-random element in R. More precisely, the decisional NTRU assumption  is hard if the advantage  $\textbf{Adv}_{R, \Psi}^{\text{NTRU}}(\mathsf{A})$ of any probabilistic polynomial time (PPT) adversary $\mathsf{A}$ is negligible, where $\textbf{Adv}_{R, \Psi}^{\text{NTRU}}(\mathsf{A}) =$
	\begin{align*}
		\begin{split}
			\bigg| \text{Pr} \left[
			b'=1:
			\begin{array}{c}
				f,g \leftarrow \Psi \land  f^{-1} \in R\\
				h=g/f \in R; b' \leftarrow \mathsf{A}(h)
			\end{array}
			\right]
			-
			\text{Pr} \left[ b'=1: h \xleftarrow{\$} R; b' \leftarrow \mathsf{A}(h) \right]
			\bigg|.
		\end{split}
	\end{align*}	
\end{definition}

\begin{definition}[RLWE assumption~\cite{rlwe-LPR10}]
	Let $\Psi$ be a distribution over a polynomial ring R. The (decisional) Ring-Learning with error (RLWE) assumption over R is to distinguish uniform samples $({h},c) \xleftarrow{\$} R \times R$ from samples $({h},c) \in R \times R$ where ${h} \xleftarrow{\$} R$ and $c = {h} {r} + e $ with $r,e \leftarrow \Psi$. It  is hard if the advantage  $\textbf{Adv}_{R,\Psi}^{\text{RLWE}}(\mathsf{A})$ of any probabilistic polynomial time adversary $\mathsf{A}$ is negligible, where $\textbf{Adv}_{R,\Psi}^{\text{RLWE}}(\mathsf{A}) =$
	\begin{align*}
		\begin{split}
			\Bigg| \text{Pr} \left[
			b'=1:
			\begin{array}{c}
				{h} \xleftarrow{\$} R; r,e \leftarrow \Psi; \\
				c = {h} {r} + e \in R ; b' \gets \mathsf{A}({h}, {c})
			\end{array}
			\right]
			-
			\text{Pr} \left[ b'=1: {h} \xleftarrow{\$} R; {c} \xleftarrow{\$} R; b' \gets \mathsf{A}({h}, {c}) \right]
			\Bigg|.
		\end{split}
	\end{align*}	
\end{definition}

\begin{definition}[RLWR assumption~\cite{lwr-BPR12}]
	Let $q > p \ge 2$ be integers. Let $\Psi$ be a distribution over a polynomial ring R. Let $R_q=R/qR$ and $R_p=R/pR$ be the quotient rings. The (decisional) Ring-Learning with rounding (RLWR) assumption is to distinguish uniform samples $({h},c) \xleftarrow{\$} R_q \times R_p$ from samples $({h},c) \in R_q \times R_p$ where ${h} \xleftarrow{\$} R_q$ and $c = \lfloor \frac{p}{q} {h} {r} \rceil \bmod p$ with $r \leftarrow \Psi$. It  is hard if the advantage  $\textbf{Adv}_{R,\Psi}^{\text{RLWR}}(\mathsf{A})$ of any probabilistic polynomial time adversary $\mathsf{A}$ is negligible, where $\textbf{Adv}_{R,\Psi}^{\text{RLWR}}(\mathsf{A}) =$
	\begin{align*}
		\begin{split}
			\Bigg| \text{Pr} \left[
			b'=1:
			\begin{array}{c}
				{h} \xleftarrow{\$} R_q; r \leftarrow \Psi; \\
				c = \lfloor \frac{p}{q} {h} {r} \rceil \bmod p \in R_p ; b' \gets \mathsf{A}({h}, {c})
			\end{array}
			\right]
			-
			\text{Pr} \left[ b'=1: {h} \xleftarrow{\$} R_q; {c} \xleftarrow{\$} R_p; b' \gets \mathsf{A}({h}, {c}) \right]
			\Bigg|.
		\end{split}
	\end{align*}	
\end{definition}

\subsection{Number Theoretic Transform}

From a computational point of view, the fundamental and also time-consuming operations in NTRU-based schemes are the multiplications and divisions of the elements in the rings $\mathbb{Z}_q[x]/(\Phi(x))$. Number theoretic transform (NTT) is a special case of fast Fourier transform (FFT) over a finite field~\cite{ntt-Pol71}. NTT is the most efficient method for computing polynomial multiplication of high degrees, due to its quasilinear complexity $O(n \log n)$. The complete NTT-based multiplication with respect to $f$ and $g$ is $INTT(NTT(f) \circ NTT(g))$, where $NTT$ is the forward transform, $INTT$ is the inverse transform and ``$\circ$'' is the point-wise multiplication.

The FFT trick~\cite{ber01} is a fast algorithm to compute NTT, via  the Chinese Remainder Theorem (CRT) in the ring form. Briefly speaking, given pairwise co-prime polynomials $g_1,g_2,\ldots,g_k$, the CRT isomorphism is that $\varphi :$
$$ \mathbb{Z}_q[x]/(g_1 g_2 \cdots g_k) \cong \mathbb{Z}_q[x]/(g_1) \times \mathbb{Z}_q[x]/(g_2) \times \cdots \times \mathbb{Z}_q[x]/(g_k)$$
along with $\varphi(f) = ( f \bmod g_1,f \bmod g_2, \ldots, f \bmod g_k)$. In the case of the classical radix-2 FFT trick step, given the isomorphism $ \mathbb{Z}_q[x]/(x^{2m} -\zeta^2) \cong \mathbb{Z}_q[x]/(x^m - \zeta) \times \mathbb{Z}_q[x]/(x^m+\zeta)$ where $\zeta$ is invertible in $\mathbb{Z}_q$, the computation of the forward FFT tirck and inverse FFT tirck can be conducted via Cooley-Tukey butterfly~\cite{ct-butterfly} and Gentleman-Sande butterfly~\cite{gs-butterfly}, respectively. The former indicates the computation from $(f_i,f_j)$ to $(f_i + \zeta \cdot f_j, f_i - \zeta \cdot f_j)$, while the later indicates the computation from $(f'_i,f'_j)$ to $(f'_i+f'_j, (f'_i-f'_j) \cdot \zeta^{-1})$. As for the classical radix-3 FFT trick step, it is more complicated, given the isomorphism $ \mathbb{Z}_q[x]/(x^{3m} -\zeta^3) \cong \mathbb{Z}_q[x]/(x^m - \zeta) \times \mathbb{Z}_q[x]/(x^m -\rho \zeta) \times \mathbb{Z}_q[x]/(x^m - \rho^2 \zeta)$  where $\zeta$ is invertible in $\mathbb{Z}_q$ and $\rho$ is the third root of unity. The mixed-radix NTT means that there are more than one type of FFT trick step.
%The radix-2 (resp., radix-3) NTT only uses radix-2 (resp., radix-3) FFT trick steps.

%%============================================================================================
%
%
%                                 The Lattice Code
%
%
%%============================================================================================

\section{The Lattice Code}\label{sec-the-lattice-codes}

Before introducing our proposed NTRU-based KEM schemes, we present a simple and efficient lattice code. The motivation is that a dense lattice with efficient decoding algorithm is needed in our construction for better efficiency on recovering message and low enough error probability. The coding algorithms should satisfy the following  conditions.

\begin{itemize}
	\item The operations should be  simple enough, and can be implemented by efficient arithmetic (better for integer-only operations).	
	\item The implementations of the coding algorithms are constant-time to avoid timing attacks.
	\item The decoding bound is large enough such that it leads to a high fault-tolerant mechanism.
\end{itemize}

%However, in the literature of coding theory and cryptography there is no awareness of any simple construction of high-dimensional and efficiently-decodable lattice. To our best knowledge, it does work in low dimensional cases.

We note that an 8-dimension lattice, named $\text{E}_8$ lattice (see~\cite[Chapter 4]{e8-lattice-decoding-book-CS13}) could satisfy the above requirements to some extent.
As for its density, there is a remarkable mathematical breakthrough that sphere packing in the $\text{E}_8$ lattice is proved to be optimal in the sense of the best density when packing in $\mathbb{R}^{8}$~\cite{e8-lattice-Via18}. As for the efficiency on coding, there have been simple executable encoding and decoding algorithms of the $\text{E}_8$ lattice in~\cite{e8-lattice-decoding-CS82,e8-lattice-decoding-book-CS13,akcn-e8-JSZ22}.
However, the known coding algorithms in~\cite{e8-lattice-decoding-CS82,e8-lattice-decoding-book-CS13} cannot be directly applied here. To work in our setting, we need to specify a one-to-one mapping from binary strings to the $\text{E}_8$ lattice points to encode messages. The work~\cite{akcn-e8-JSZ22} specifies such a mapping by choosing a basis, but its implementation of decoding algorithm is not constant-time, since in multiple places there is data flow from secret polynomials into the variables which are then used as lookup indexes.
In this work, we provide a scalable version of the $\text{E}_8$ lattice in the spirit of~\cite{akcn-e8-JSZ22} as well as a corresponding constant-time implementation of its decoding algorithm, which can also transform the lattice points to the binary strings without involving Gaussian Elimination.

%But, the previous coding algorithms in~\cite{e8-lattice-decoding-CS82,e8-lattice-decoding-book-CS13} are required to specify a one-to-one mapping from binary strings to lattice points of $\text{E}_8$ to encode messages. The work specifies such a mapping by choosing a basis for the $\text{E}_8$ lattice, and it also can transform the lattice points to the binary strings without involving Gaussian Elimination.

\subsection{Scalable $\text{E}_8$ Lattice Code}

%Different from this method, our encoding and decoding algorithms integrate the coding of $\text{E}_8$ and the mapping from binary strings to $\text{E}_8$ at the same time, which can improve the efficiency by avoiding Gaussian elimination.

The scalable $\text{E}_8$ lattice is constructed from the Extended Hamming Code with respect to dimension 8, which is defined as $H_8 = \{ \mathbf{c} \in \{0,1\}^8 \mid \mathbf{c} = \mathbf{z} \mathbf{H} \bmod 2, \mathbf{z} \in \{0,1\}^4\}$
where the binary matrix $\mathbf{H}$ is
\begin{equation*}
	\mathbf{H} = \begin{bmatrix}
		1 & 1 & 1 & 1 & 0 & 0 & 0 & 0 \\
		0 & 0 & 1 & 1 & 1 & 1 & 0 & 0 \\
		0 & 0 & 0 & 0 & 1 & 1 & 1 & 1 \\
		0 & 1 & 0 & 1 & 0 & 1 & 0 & 1
	\end{bmatrix}.
\end{equation*}

Let $C = \{(x_1, x_1, x_2, x_2, x_3, x_3, x_4, x_4) \in \{0,1\}^8 \mid \sum{x_i} \equiv 0 \bmod 2 \}$, where $C$ is spanned by the up most three rows of $\mathbf{H}$. Then the scalable $\text{E}_8 $ lattice (named $\text{E}'_8$ lattice) is constructed as
$$\text{E}'_8 = \lambda \cdot [ C \cup (C + \mathbf{c}) ] \subset [0,\lambda]^8,$$
where $\mathbf{c} = (0, 1, 0, 1, 0, 1, 0, 1)$ is the last row of $\mathbf{H}$, $\lambda \in \mathbb{R}^+$ is the scale factor and $\lambda \cdot C$ means that all the elements in $C$ multiply by $\lambda$.

\subsubsection{Encoding algorithm}

The encoding algorithm of the $\text{E}'_8$ lattice (see Algorithm~\ref{algo-e8-encoding}) is to calculate $\lambda \cdot (\mathbf{k} \mathbf{H} \bmod 2)$, given a 4-bit binary string $\mathbf{k}$.

\begin{algorithm}[H]
	\caption{$\text{Encode}_{E'_8}$($\mathbf{k} \in \{0,1\}^4$)}
	\label{algo-e8-encoding}
	\begin{algorithmic}[1]
			\STATE{$ \mathbf{v} := \lambda \cdot (\mathbf{k} \mathbf{H} \bmod 2) \in [0,\lambda]^8$}
			\RETURN{$\mathbf{v}$}					
	\end{algorithmic}
\end{algorithm}

\vspace{-0.5cm}
\begin{algorithm}[H]
	\caption{$\mathsf{Decode}_{E'_8}$($\mathbf{x}=(x_0,\ldots,x_7) \in \mathbb{R}^8$)}
	\label{algo-e8-decoding-e8}
	\begin{algorithmic}[1]
			\STATE{Recall that $ \mathbf{c} := (0, 1, 0, 1, 0, 1, 0, 1)$}		
			\STATE{$(\mathbf{k}_{0}, \mathsf{TotalCost}_{0}):= \mathsf{Decode}_{C'}(\mathbf{x})$}
			\STATE{$(\mathbf{k}_{1}, \mathsf{TotalCost}_{1}):= \mathsf{Decode}_{C'}(\mathbf{x} - \lambda \cdot \mathbf{c})$}
			\STATE{$b:= \arg\min\{ \mathsf{TotalCost}_{0}, \mathsf{TotalCost}_{1}\} $}
			\STATE{$(k_0, k_1, k_2, k_3) := \mathbf{k}_{b}$}\label{line-e8-decoding-e8-output}
			\STATE{$\mathbf{k} := (k_0, k_1 \oplus k_0, k_3, b) \in \{0,1\}^4$}\label{line-e8-decoding-e8-output-tweak}
			\RETURN{$\mathbf{k}$}		
	\end{algorithmic}
\end{algorithm}

\vspace{-0.5cm}
\begin{algorithm}[H]
	\caption{$\mathsf{Decode}_{C'}$($\mathbf{x} \in \mathbb{R}^8$)}
	\label{algo-e8-decoding-c}
	\begin{algorithmic}[1]
			\STATE{$mind := +\infty$}
			\STATE{$mini := 0$}
			\STATE{$\mathsf{TotalCost} := 0$}
			\FOR{$i = 0 \dots 3$}
			\STATE{$c_0 := \|x_{2i}\|_{2\lambda,2}^2 +  \|x_{2i+1}\|_{2\lambda,2}^2$}
			\STATE{$c_1 := \|x_{2i} - \lambda\|_{2\lambda,2}^2 +  \|x_{2i+1} - \lambda \|_{2\lambda,2}^2  $}
			\STATE{$k_i := \arg\min\{ c_{0}, c_{1}\} $}
			\STATE{$\mathsf{TotalCost} := \mathsf{TotalCost} + c_{k_i}$}
			\IF {$c_{1-k_i} - c_{k_i} < mind$}
			\STATE{$mind := c_{1-k_i} - c_{k_i}$}
			\STATE{$mini := i$}
			\ENDIF
			\ENDFOR
			\IF {$ k_0+k_1+k_2+k_3 \bmod 2 = 1$}
			\STATE{$k_{mini} := 1 - k_{mini}$}
			\STATE{$\mathsf{TotalCost} := \mathsf{TotalCost} + mind$}
			\ENDIF
			\STATE{$\mathbf{k} := (k_0, k_1, k_2, k_3) \in \{0,1\}^4$}
			\RETURN {$(\mathbf{k}, \mathsf{TotalCost})$}		
	\end{algorithmic}
\end{algorithm}

\subsubsection{Decoding algorithm}

Given any $\mathbf{x} \in \mathbb{R}^8$, the decoding algorithm is to find the solution of the closest vector problem (CVP) of $\mathbf{x}$ in the $\text{E}'_8$ lattice, which is denoted by $ \lambda \cdot \mathbf{k}' \mathbf{H} \bmod 2$, and it outputs the 4-bit string $\mathbf{k}'$.  To solve the CVP of $\mathbf{x} \in \mathbb{R}^8$ in the $\text{E}'_8$ lattice, we turn to solve the CVP of $\mathbf{x}$ and $\mathbf{x}-\lambda \mathbf{c}$ in the lattice $C' = \lambda \cdot C$. The one that has smaller distance is the final answer.

We briefly introduce the idea of solving the CVP in the lattice $C'$ here. Given $\mathbf{x} \in \mathbb{R}^8$, for every two components in $\mathbf{x}$, determine whether they are close to $(0,0)$ or $(\lambda,\lambda)$. Assign the corresponding component of $\mathbf{k}$ to 0 if the former is true, and 1 otherwise. If $\sum{k_i}  \bmod 2 = 0 $ holds, it indicates that $\lambda \cdot (k_0,k_0,k_1,k_1,k_2,k_2,k_3,k_3)$ is the solution. However, $\sum{k_i} \bmod 2$  might be equal to 1. Then we choose the secondly closest vector, $\lambda \cdot (k'_0,k'_0,k'_1,k'_1,k'_2,k'_2,k'_3,k'_3)$, where there will be at most one-bit difference between $(k_0,k_1,k_2,k_3)$ and $(k'_0,k'_1,k'_2,k'_3)$. The detailed algorithm is given in Algorithm~\ref{algo-e8-decoding-e8}, along with Algorithm~\ref{algo-e8-decoding-c} as its subroutines. Note that in Algorithm~\ref{algo-e8-decoding-c}, $mind$ and $mini$ are set to store the minimal difference of the components and the corresponding index, respectively.

Finally, $\mathsf{Decode}_{C'}$ in Algorithm~\ref{algo-e8-decoding-e8} will output the 4-bit string $(k_0,k_1,k_2,k_3)$ such that the lattice point $\lambda \cdot(k_0 , k_0 \oplus b, k_1 , k_1 \oplus b,k_2 , k_2 \oplus b, k_3 , k_3 \oplus b)$ is closest to $\mathbf{x}$ in the $\text{E}'_8$ lattice. Since the lattice point has the form of $\lambda \cdot (\mathbf{k} \mathbf{H} \bmod 2)$, the decoding result $\mathbf{k}$ can be obtained by tweaking the solution of the CVP in the $\text{E}'_8$ lattice, as in line~\ref{line-e8-decoding-e8-output} and line~\ref{line-e8-decoding-e8-output-tweak} in Algorithm~\ref{algo-e8-decoding-e8}. The details about constant-time implementation of decoding algorithms are presented in section~\ref{sec-constant-time}.

\subsection{Bound of Correct Decoding}

Theorem~\ref{thm-correctness-of-e8-lattice} gives a bound of correct decoding w.r.t. Algorithm~\ref{algo-e8-decoding-e8}. Briefly speaking, for any 8-dimension vector which is close enough to the given $\text{E}'_8$ lattice point under the metric of $\ell_2$ norm, it can be decoded into the same 4-bit string that generates the lattice point. This theorem is helpful when we try to recover the targeted message from the given lattice point with error terms in our schemes.

\begin{theorem}[Correctness bound of the scalable $\text{E}_8$ lattice decoding]\label{thm-correctness-of-e8-lattice}
	For any given $\mathbf{k}_1 \in \{0,1\}^4$, denote $\mathbf{v}_1 :=\text{Encode}_{E'_8}(\mathbf{k}_1)$. For any $\mathbf{v}_2 \in \mathbb{R}^8$, denote $\mathbf{k}_2:=\text{Decode}_{E'_8}(\mathbf{v}_2)$. If $\| \mathbf{v}_2 - \mathbf{v}_1 \|_{2\lambda,2} <  \lambda$, then $\mathbf{k}_1 = \mathbf{k}_2 $.
\end{theorem}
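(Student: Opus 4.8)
The plan is to reduce the claim to two facts about the scalable $\text{E}_8$ lattice: that $\mathsf{Decode}_{E'_8}$ is a genuine nearest-lattice-point decoder on the torus $(\mathbb{R}/2\lambda\mathbb{Z})^8$ that recovers the exact label, and that the minimum distance of $\text{E}'_8$ in the toroidal metric $\|\cdot\|_{2\lambda,2}$ equals $2\lambda$. Granting these, the theorem follows from the standard unique-decoding (sphere-packing) argument: $\|\cdot\|_{2\lambda,2}$ is the $\ell_2$-combination of the circle metrics on each coordinate of $\mathbb{R}/2\lambda\mathbb{Z}$ and hence satisfies the triangle inequality, so for any lattice point $\mathbf{v}' \neq \mathbf{v}_1$,
\[
\|\mathbf{v}_2 - \mathbf{v}'\|_{2\lambda,2} \ge \|\mathbf{v}_1 - \mathbf{v}'\|_{2\lambda,2} - \|\mathbf{v}_2 - \mathbf{v}_1\|_{2\lambda,2} > 2\lambda - \lambda = \lambda > \|\mathbf{v}_2 - \mathbf{v}_1\|_{2\lambda,2}.
\]
Thus $\mathbf{v}_1$ is the strictly unique closest lattice point to $\mathbf{v}_2$, and a correct nearest-point decoder must return its label $\mathbf{k}_1$.

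First I would pin down the encoding side. Since $\mathbf{H}$ is a generator matrix of the extended Hamming $[8,4,4]$ code, the map $\mathbf{k} \mapsto \mathbf{k}\mathbf{H} \bmod 2$ is an $\mathbb{F}_2$-linear bijection from $\{0,1\}^4$ onto the $16$-element code $H_8 = C \cup (C + \mathbf{c})$, so $\text{Encode}_{E'_8}$ is injective and $\text{E}'_8 = \lambda H_8$ consists of $16$ distinct points in $[0,\lambda]^8$. It then suffices to show that $\mathsf{Decode}_{E'_8}$ returns the label of the closest point of $\lambda H_8$ and that the label-reconstruction (lines 5--6 of Algorithm~\ref{algo-e8-decoding-e8}) inverts $\text{Encode}_{E'_8}$. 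Next I would verify that the decoder is exact maximum-likelihood on the torus. Because the toroidal squared distance separates as a sum over coordinates, and every codeword of $C$ is constant on each pair $(2i,2i+1)$, the only two candidates per pair are $(0,0)$ and $(\lambda,\lambda)$; Algorithm~\ref{algo-e8-decoding-c} computes both costs $c_0,c_1$, and, ignoring parity, the per-pair minima give the nearest point of the repetition structure. The even-parity constraint defining $C$ is enforced by flipping exactly the pair whose flip cost $c_{1-k_i} - c_{k_i}$ is smallest (the stored $mind, mini$), which is precisely the minimum-cost way to reach an even-parity codeword; hence $\mathsf{Decode}_{C'}$ returns the nearest point of $\lambda C$. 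Decoding both $\mathbf{x}$ and $\mathbf{x} - \lambda\mathbf{c}$ and taking the smaller total cost in Algorithm~\ref{algo-e8-decoding-e8} then yields the nearest point over $\lambda C \cup \lambda(C+\mathbf{c}) = \lambda H_8$.

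Finally I would compute the minimum distance. Viewing $\text{E}'_8$ on the torus as $\lambda H_8 \bmod 2\lambda$, the toroidal distance between $\lambda\mathbf{a}$ and $\lambda\mathbf{b}$ is $\lambda \min_{\mathbf{z}\in\mathbb{Z}^8}\|\mathbf{a}-\mathbf{b}-2\mathbf{z}\|_2$; since the entries of $\mathbf{a}-\mathbf{b}$ lie in $\{-1,0,1\}$, each odd coordinate contributes exactly $1$ and each zero coordinate contributes $0$, so this distance equals $\lambda\sqrt{d_H(\mathbf{a},\mathbf{b})}$. As $\mathbf{a}\oplus\mathbf{b}$ is a nonzero codeword of the $[8,4,4]$ code, its Hamming weight is at least $4$, giving minimum toroidal distance $2\lambda$ (attained by weight-$4$ codewords). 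Note that the $2\lambda$-wraparound vectors $2\lambda\mathbf{e}_i$ collapse to the zero point on the torus and so do not reduce the distance between distinct lattice points.

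I expect the main obstacle to be the rigorous justification that the greedy single-pair parity correction in Algorithm~\ref{algo-e8-decoding-c}, together with the two-coset comparison in Algorithm~\ref{algo-e8-decoding-e8}, genuinely realizes global nearest-point decoding on the torus---in particular, handling the $\bmod^\pm 2\lambda$ reductions consistently so that the per-coordinate costs reflect true toroidal distances, and confirming that no cheaper even-parity correction spanning more than one flipped pair can exist. Once exact maximum-likelihood decoding and the value $d_{\min}=2\lambda$ are in hand, the packing argument above closes the proof immediately.
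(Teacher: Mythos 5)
Your proposal is correct and takes essentially the same route as the paper's proof: both derive the packing radius $\lambda$ of the $\text{E}'_8$ lattice from the minimum Hamming distance $4$ of the extended Hamming code and conclude by unique decoding within that radius. The only difference is rigor: the paper simply asserts that $\mathsf{Decode}_{E'_8}$ solves the CVP and returns the generating $4$-bit string, whereas you additionally justify decoder exactness (per-pair minimization, the cheapest single-pair parity flip, the two-coset comparison) and that lines 5--6 of Algorithm~\ref{algo-e8-decoding-e8} invert the encoding --- steps the paper leaves implicit.
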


\begin{proof}
	According to the construction of the Extended Hamming Code $H_8$, we know that its minimal Hamming distance is 4. Thus, the radius of sphere packing in the $\text{E}'_8$ lattice we used is $\frac{1}{2}\sqrt{4 \cdot \lambda^2 } = \lambda$. As shown in Algorithm~\ref{algo-e8-encoding}, $\mathbf{v}_1$ is the lattice point generated from $\mathbf{k}_1$. As for $\mathbf{v}_2 \in \mathbb{R}^8$, if $\| \mathbf{v}_2 - \mathbf{v}_1 \|_{2\lambda,2} <  \lambda$, the solution of the CVP about  $\mathbf{v}_2$ in the $\text{E}'_8$ lattice is $\mathbf{v}_1$. Since $\text{Decode}_{E'_8}$ in Algorithm~\ref{algo-e8-decoding-e8} will output the 4-bit string finally, instead of the intermediate solution of the CVP,  $\mathbf{v}_1$ is also generated from $\mathbf{k}_2$, i.e., $\mathbf{v}_1= \lambda \cdot (\mathbf{k}_2 \mathbf{H} \bmod 2) $, which indicates that $\mathbf{k}_1 = \mathbf{k}_2 $.
	$\hfill\square$
\end{proof}

%%============================================================================================
%
%
%
%
%
%%============================================================================================

\section{Construction and Analysis}\label{sec-ntru-e8-lwe-lwr-proposal}

In this section, we propose our two new cryptosystems based on NTRU lattice, named CTRU and CNTR, both of which contain an IND-CPA secure public-key encryption and an IND-CCA secure key encapsulation mechanism. CTRU and CNTR have similar forms of public key and secret key to those of the traditional NTRU-based KEM schemes, but the method to recover message in CTRU and CNTR is significantly different from them.
With our construction, CTRU and CNTR will achieve integrated performance in security, bandwidth, error probability and computational efficiency as a whole.

\subsection{CTRU: Proposal Description}

Our CTRU.PKE scheme is specified  in Algorithm~\ref{algo-ntru-e8-pke-keygen}-\ref{algo-ntru-e8-pke-dec}. Restate that $\mathcal{R}_{q}= \mathbb{Z}_{q}[x]/(x^n - x^{n/2} +1)$, where $n$ and $q$ are the ring parameters. Let $q_2$ be the  modulus, which is usually set to be a power of two that is smaller than $q$. Let $p$ be the message space modulus, satisfying $\gcd(q,p)=1$. We mainly focus on $p=2$ for the odd modulus $q$ in this paper. Let $\Psi_1$ and $\Psi_2$  be the distributions over $\mathcal{R}$. For presentation simplicity, the secret terms, $f^\prime,g$ are taken from $\Psi_1$, and $r,e$ are taken from $\Psi_2$. Actually, $\Psi_1$ and $\Psi_2$ can be different distributions. Let $\mathcal{M}=\{0,1\}^{n/2}$ denote the message space, where each $m \in \mathcal{M}$ can be seen as a $\frac{n}{2}$-dimension polynomial with coefficients in $\{0,1\}$.

\begin{algorithm}[H]
	\caption{CTRU.PKE.KeyGen()}
	\label{algo-ntru-e8-pke-keygen}
	\begin{algorithmic}[1]
			\STATE{$ f' , g\leftarrow \Psi_1 $}
			\STATE{$ f := p f'+1$}
			\STATE{If $f$ is not invertible in $\mathcal{R}_{q}$, restart.}
			\STATE{$h :=g/f$}
			\RETURN{$ (pk:=h,sk:=f )$}					
	\end{algorithmic}
\end{algorithm}

\vspace{-0.5cm}

\begin{algorithm}[H]
	\caption{CTRU.PKE.Enc($pk=h$, $m \in \mathcal{M}$)}
	\label{algo-ntru-e8-pke-enc}
	\begin{algorithmic}[1]
			\STATE{$ r, e \leftarrow \Psi_2 $}
			\STATE{$ \sigma := h r + e $} \label{line-ntru-e8-enc-sigma}
			\STATE{$ c := \Big\lfloor \cfrac{q_2}{q} ( \sigma +  
				\big\lfloor\text{PolyEncode}(m) \big\rceil ) \Big\rceil \bmod q_2 $}\label{line-ntru-e8-enc-ciphertext-c}
			\RETURN{$c$}					
	\end{algorithmic}
\end{algorithm}

\vspace{-0.5cm}

\begin{algorithm}[H]
	\caption{CTRU.PKE.Dec($sk=f$, $c$)}
	\label{algo-ntru-e8-pke-dec}
	\begin{algorithmic}[1]	
			\STATE{$ m := \text{PolyDecode}\left(   c f \bmod^{\pm} q_2   \right) $}
			\RETURN{$ m $}					
	\end{algorithmic}
\end{algorithm}

\vspace{-0.5cm}

\begin{algorithm}[H]
	\caption{$\text{PolyEncode}$($m =\sum\limits_{i=0}^{n/2-1}{m_i x^i} \in \mathcal{M} $) }
	\label{algo-e8-poly-encoding}
	\begin{algorithmic}[1]
			\STATE{$\text{E}'_8 :=\frac{q}{2} \cdot [ C \cup (C + \mathbf{c}) ] \subset [0,\frac{q}{2}]^8$}
			\FOR{$i = 0 \dots n/8-1$}
			\STATE{$ \mathbf{k}_i := (m_{4i},m_{4i+1},m_{4i+2},m_{4i+3}) \in \{0,1\}^4 $}
			\STATE{$ (v_{8i},v_{8i+1},\ldots,v_{8i+7}) := \text{Encode}_{E'_8}(\mathbf{k}_i) \in [0,\frac{q}{2}]^8 $}
			\ENDFOR	
			\STATE{$ {v}:= \sum\limits_{i=0}^{n-1}{v_i x^i}$}					
			\RETURN{${v}$}					
	\end{algorithmic}
\end{algorithm}

\vspace{-0.5cm}

\begin{algorithm}[H]
	\caption{$\text{PolyDecode}$($ v = \sum\limits_{i=0}^{n-1}{v_i x^i} \in \mathcal{R}_{q_2}$)}
	\label{algo-e8-poly-decoding}
	\begin{algorithmic}[1]
			\STATE{$\text{E}''_8 :=\frac{q_2}{2} \cdot [ C \cup (C + \mathbf{c}) ] \subset [0,\frac{q_2}{2}]^8$}
			\FOR{$i = 0 \dots n/8-1$}
			\STATE{$ \mathbf{x}_i := (v_{8i},v_{8i+1},\ldots,v_{8i+7})
				 \in \mathbb{R}^8 $}
			\STATE{$ (m_{4i},m_{4i+1},m_{4i+2},m_{4i+3}) := \text{Decode}_{E''_8}(\mathbf{x}_i) \in \{0,1\}^4 $}
			\ENDFOR
			\STATE{$ {m}:= \sum\limits_{i=0}^{n/2-1}{m_i x^i} \in \mathcal{M}$}											
			\RETURN{$m$}		
	\end{algorithmic}
\end{algorithm}

The $\text{PolyEncode}$ algorithm and $\text{PolyDecode}$ algorithm are described in Algorithm~\ref{algo-e8-poly-encoding} and~\ref{algo-e8-poly-decoding}, respectively. Specifically, we construct the $\text{E}'_8$ lattice with the scale factor $\frac{q}{2}$ in Algorithm~\ref{algo-e8-poly-encoding}. That is, the encoding algorithm works over $\text{E}'_8 :=\frac{q}{2} \cdot [ C \cup (C + \mathbf{c}) ]$. The $\text{PolyEncode}$ algorithm splits each $m \in \mathcal{M}$ into some quadruples, each of which will be encoded via $\text{Encode}_{E'_8}$. As for $\text{PolyDecode}$ algorithm, the decoding algorithm works over the lattice $\text{E}''_8 :=\frac{q_2}{2} \cdot [ C \cup (C + \mathbf{c}) ]$. It splits $v \in \mathcal{R}_{q_2}$ into some octets, each of which will be decoded via $\text{Decode}_{E''_8}$. The final message $m$ can be recovered by combining all the 4-bit binary strings output by $\text{Decode}_{E''_8}$.

We construct our CTRU.KEM=(Keygen, Encaps, Decaps) by applying $\text{FO}_{ID(pk),m}^{\not\bot}$, a variant of Fujisaki-Okamoto (FO) transformation~\cite{fo-transform-HHK17,fo-transform-FO99} aimed for the strengthened IND-CCA security in  multi-user setting ~\cite{fo-transform-prefix-hash-DHK+21}. Let $\iota,\gamma$ be positive integers. We prefer to choose $\iota,\gamma \ge 256$ for strong security. Let $\mathcal{H} :\{0,1\}^* \rightarrow \mathcal{K} \times \mathcal{COINS} $ be a hash function, where  $\mathcal{K}$ is the shared key space of CTRU.KEM and $\mathcal{COINS}$ is the randomness space of CTRU.PKE.Enc. Note that we make explicit the randomness in CTRU.PKE.Enc here. Define $\mathcal{H}_1(\cdot)$ as $\mathcal{H}(\cdot)$'s partial output that is mapped into $\mathcal{K}$.  Let $\mathcal{PK}$ be the public key space of CTRU.PKE. Let $ID: \mathcal{PK} \rightarrow \{0,1\}^{\gamma}$ be a fixed-output length function. The algorithms of CTRU.KEM are described in Algorithm~\ref{algo-ntru-e8-kem-keygen}-\ref{algo-ntru-e8-kem-decaps}.

\begin{algorithm}[H]
	\caption{CTRU.KEM.KeyGen()}
	\label{algo-ntru-e8-kem-keygen}
	\begin{algorithmic}[1]			
			\STATE{$(pk,sk)\leftarrow \text{CTRU.PKE.KeyGen()}$}
			\STATE{$z\xleftarrow{\$}\{0,1\}^{\iota}$}
			\RETURN{$ (pk':= pk,sk':=(sk,z))$}					
	\end{algorithmic}
\end{algorithm}

\vspace{-0.5cm}
\begin{algorithm}[H]
	\caption{CTRU.KEM.Encaps($pk$)}
	\label{algo-ntru-e8-kem-encaps}
	\begin{algorithmic}[1]
		\STATE{$m \xleftarrow{\$} \mathcal{M}$}
		\STATE{$(K,coin) := \mathcal{H}(ID(pk),m)$}		
		\STATE{$c := \text{CTRU.PKE.Enc}(pk,m;coin)$}
		\RETURN{$(c,K)$}
	\end{algorithmic}
\end{algorithm}

\vspace{-0.5cm}
\begin{algorithm}[H]
	\caption{CTRU.KEM.Decaps($(sk,z),c$)}
	\label{algo-ntru-e8-kem-decaps}
	\begin{algorithmic}[1]
		\STATE{$m':=\text{CTRU.PKE.Dec}(sk,c)$}
		\STATE{$(K',coin') := \mathcal{H}(ID(pk),m')$}
		\STATE{$\tilde{K}:= \mathcal{H}_1(ID(pk),z,c)$}						
		\IF{$m' \neq \perp$ and $ c= \text{CTRU.PKE.Enc}(pk,m';coin')$}
			 \RETURN{$K'$}
		\ELSE
			 \RETURN{$\tilde{K}$}
		\ENDIF
	\end{algorithmic}
\end{algorithm}

\subsection{CNTR: Proposal Description}

CNTR is a simple variant of CTRU, which is based on the NTRU assumption~\cite{ntru-HPS98} and the RLWR assumption~\cite{lwr-BPR12}. Here, CNTR stands for ``Compact NTRu based on RLWR". CNTR is also usually the abbreviation of container, which has the meaning CNTR is an economically concise yet powerful key encapsulation mechanism.

Our CNTR.PKE scheme is specified  in Algorithm~\ref{algo-rlwr-variant-ntru-e8-pke-keygen}-\ref{algo-rlwr-variant-ntru-e8-pke-dec}. The $\text{PolyEncode}$ algorithm and $\text{PolyDecode}$ algorithm are the same as Algorithm~\ref{algo-e8-poly-encoding} and~\ref{algo-e8-poly-decoding}, respectively. The symbols and definitions used here are the same as those of CTRU.

\begin{algorithm}[H]
	\caption{CNTR.PKE.KeyGen()}
	\label{algo-rlwr-variant-ntru-e8-pke-keygen}
	\begin{algorithmic}[1]
			\STATE{$ f' , g\leftarrow \Psi_1 $}
			\STATE{$ f := p f'+1$}
			\STATE{If $f$ is not invertible in $\mathcal{R}_{q}$, restart.}
			\STATE{$h :=g/f$}
			\RETURN{$ (pk:=h,sk:=f )$}					
	\end{algorithmic}
\end{algorithm}

\vspace{-0.5cm}

\begin{algorithm}[H]
	\caption{CNTR.PKE.Enc($pk=h$, $m \in \mathcal{M}$)}
	\label{algo-rlwr-variant-ntru-e8-pke-enc}
	\begin{algorithmic}[1]
			\STATE{$ r \leftarrow \Psi_2 $}			
			\STATE{$ \sigma := h r $} \label{line-rtru-e8-enc-sigma}
			\STATE{$ c := \Big\lfloor \cfrac{q_2}{q} ( \sigma +  
				\text{PolyEncode}(m) ) \Big\rceil \bmod q_2 $}\label{line-rlwr-variant-ntru-e8-enc-ciphertext-c}
			\RETURN{$c$}					
	\end{algorithmic}
\end{algorithm}

\vspace{-0.5cm}

\begin{algorithm}[H]
	\caption{CNTR.PKE.Dec($sk=f$, $c$)}
	\label{algo-rlwr-variant-ntru-e8-pke-dec}
	\begin{algorithmic}[1]
			\STATE{$ m := \text{PolyDecode}\left(   c f \bmod^{\pm} q_2   \right) $}
			\RETURN{$ m $}					
	\end{algorithmic}
\end{algorithm}

%\vspace{-0.5cm}

Unlike the encryption algorithm of CTRU (see Algorithm~\ref{algo-ntru-e8-pke-enc}), that of CNTR has the following distinctions: (1) the noise polynomial is eliminated; (2) the rounding of the PolyEncode algorithm is moved.

Our CNTR.KEM scheme is constructed in the same way as CTRU.KEM, via the FO transformation $\text{FO}_{ID(pk),m}^{\not\bot}$~\cite{fo-transform-prefix-hash-DHK+21}. The algorithms of CNTR.KEM can be referred to Algorithm~\ref{algo-ntru-e8-kem-keygen}-\ref{algo-ntru-e8-kem-decaps}.

\subsection{Correctness Analysis}\label{sec-error}

\begin{lemma}\label{lemma-ntru-e8-lwe-correctness-analysis}
	It holds that $ c f \bmod^{\pm} q_2 = \frac{q_2}{q} ( (\frac{q}{q_2} c) f \bmod^{\pm} q ) $.
\end{lemma}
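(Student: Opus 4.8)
The plan is to recognize that this identity is nothing more than a scaling property of the centered reduction $\bmod^{\pm}$, applied coefficient-wise. Since $c$ and $f$ are ring elements, $cf$ has a canonical integer-coefficient representative in $\mathcal{R}$, and $\bmod^{\pm}$ acts on each coefficient independently; moreover scalar multiplication commutes with the ring operations, so $(\tfrac{q}{q_2}c)f = \tfrac{q}{q_2}(cf)$. Hence it suffices to prove the corresponding scalar identity for a single real coefficient, and the polynomial statement follows by applying it to every coefficient $w_j$ of $cf$.

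First I would isolate the key scaling lemma: for any real number $a$ and any real scale factor $\lambda > 0$,
\[
\lambda \cdot (a \bmod^{\pm} q) = (\lambda a) \bmod^{\pm} (\lambda q).
\]
To prove it, recall that $a \bmod^{\pm} q$ is by definition $a - kq$ for the unique integer $k$ with $a - kq \in [-q/2,\, q/2)$; here I rely on the extension of $\bmod^{\pm}$ to $\mathbb{R}$ from the preliminaries, since the coefficients of $\tfrac{q}{q_2}c$ need not be integers. Multiplying through by $\lambda > 0$ gives $\lambda a - k(\lambda q) = \lambda(a - kq) \in [-\lambda q/2,\, \lambda q/2)$, where the half-open interval is preserved precisely because $\lambda$ is positive. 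The crucial point is that the same integer $k$ serves as the reduction multiple for $(\lambda a) \bmod^{\pm}(\lambda q)$; by uniqueness of the representative in $[-\lambda q/2,\, \lambda q/2)$, this yields the claimed identity.

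Then I would instantiate the lemma with $\lambda = \tfrac{q_2}{q}$ applied to each coefficient of $\tfrac{q}{q_2}(cf)$, so that the right-hand side becomes
\[
\frac{q_2}{q}\Big( \big(\tfrac{q}{q_2} c\big) f \bmod^{\pm} q \Big) = \Big( \tfrac{q_2}{q} \cdot \tfrac{q}{q_2}\, cf \Big) \bmod^{\pm} \big( \tfrac{q_2}{q} \cdot q \big) = cf \bmod^{\pm} q_2,
\]
which is exactly the left-hand side, completing the argument.

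I expect the only genuine subtlety — the ``hard'' part, though it is mild — to be the bookkeeping that the integer reduction multiple $k$ is unchanged by the positive rescaling, together with checking that the half-open nature of $[-q/2,\, q/2)$ matches up after scaling; this is what guarantees the equality holds exactly rather than up to a boundary ambiguity. Everything else is a direct substitution, and the reduction to the scalar case removes any need to reason about the polynomial product structure of $\mathcal{R}_{q_2}$.
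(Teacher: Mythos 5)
Your proposal is correct and takes essentially the same route as the paper: the paper likewise first uses linearity to write $(\frac{q}{q_2}c)f = \frac{q}{q_2}(cf)$ and then exhibits an integral vector $\theta \in \mathbb{Z}^n$ (your coefficient-wise unique integer $k$) whose reduction multiple is preserved when the centered representative in $[-\frac{q}{2},\frac{q}{2})$ is rescaled by the positive factor $\frac{q_2}{q}$ into $[-\frac{q_2}{2},\frac{q_2}{2})$. Your packaging of this step as the general scalar identity $\lambda\,(a \bmod^{\pm} q) = (\lambda a) \bmod^{\pm} (\lambda q)$ for $\lambda > 0$ is merely a cleaner restatement of the paper's argument, not a different one.
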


\begin{proof}
	Since polynomial multiplication can be described as matrix-vector multiplication, which keeps the linearity, it holds that $(\frac{q}{q_2} c) f = \frac{q}{q_2} (c f )$.
	There exits an integral vector $\theta \in \mathbb{Z}^n$ such that $\frac{q}{q_2} c f \bmod^{\pm} q = \frac{q}{q_2} c f + q \theta $ where each component of $  \frac{q}{q_2} c f + q \theta $ is in $[-\frac{q}{2},\frac{q}{2})$. Thus, each component of $  c f + q_2 \theta$ is in $[-\frac{q_2}{2},\frac{q_2}{2})$. Hence, we obtain
	$$ c f \bmod^{\pm} q_2 = c f + q_2 \theta = \frac{q_2}{q}(\frac{q}{q_2} c f + q \theta ) = \frac{q_2}{q} ( (\frac{q}{q_2} c) f \bmod^{\pm} q ).$$
	$\hfill\square$
\end{proof}

\begin{theorem}[Correctness of CTRU]\label{thm-ntru-e8-lwe-correctness-analysis}
	Let $\Psi_1$ and $\Psi_2$ be the distributions over the ring $\mathcal{R}$, and $q,q_2$ be positive integers. Let $f',g \leftarrow \Psi_1$ and $r,e \leftarrow \Psi_2$. Let $\varepsilon \leftarrow \chi$, where $\chi$ is the distribution over $\mathcal{R}$ defined as follows: Sample $u \xleftarrow{\$} \mathcal{R}_{q}$ and output $\left[ \big\lfloor \frac{q_2}{q} u \big\rceil  -  \frac{q_2}{q} u  \right] \bmod^{\pm} q_2$. Let $\text{Err}_i$ be the $i$-th octet of $gr + ef + \vec{1} \cdot f'+\frac{q}{q_2} \varepsilon f $, where $\vec{1}$ is the polynomial with each coefficient being 1. Denote 	$1-\delta=\text{Pr}\left[ \| \text{Err}_i \|_{q,2} < \frac{q}{2} -\sqrt{2} \right] $. Then, the error probability of CTRU is $\delta$.
\end{theorem}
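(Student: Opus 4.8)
The plan is to trace the quantity $cf \bmod^{\pm} q_2$ that the decryption algorithm feeds into $\text{PolyDecode}$, peel off the message-carrying lattice point, and collect everything else into an error term whose per-octet $\ell_2$ norm is controlled; correctness then follows from the decoding bound of Theorem~\ref{thm-correctness-of-e8-lattice}. First I would invoke Lemma~\ref{lemma-ntru-e8-lwe-correctness-analysis} to reduce the analysis modulo $q_2$ to one modulo $q$: since $cf \bmod^{\pm} q_2 = \frac{q_2}{q}\big( (\frac{q}{q_2} c) f \bmod^{\pm} q \big)$ and $\text{PolyDecode}$ operates over $\text{E}''_8 = \frac{q_2}{2}[C\cup(C+\mathbf{c})]$, it is equivalent to show that $(\frac{q}{q_2}c) f \bmod^{\pm} q$ lies close (per octet, under $\|\cdot\|_{q,2}$) to the scale-$\frac{q}{2}$ lattice point $\text{PolyEncode}(m)$, the decoding radius in $q$-units being $\frac{q}{2}$.

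Next I would expand $\frac{q}{q_2}c$. Writing $M := \lfloor \text{PolyEncode}(m)\rceil$ and unfolding the rounded scaling in line~\ref{line-ntru-e8-enc-ciphertext-c}, one gets $\frac{q}{q_2} c \equiv \sigma + M + \frac{q}{q_2}\varepsilon \pmod q$, where $\varepsilon = \lfloor \frac{q_2}{q}(\sigma+M)\rceil - \frac{q_2}{q}(\sigma+M)$ is the compression rounding error; modelling $\sigma+M$ as uniform over $\mathcal{R}_q$ (justified heuristically by the RLWE form $\sigma = hr+e$) makes $\varepsilon$ distributed as $\chi$. Multiplying by $f$ and using the NTRU relations $hf=g$ and $f = pf'+1$ with $p=2$, I obtain $\sigma f = gr+ef$ and $Mf = M + 2Mf'$, so that $\frac{q}{q_2}cf \equiv M + 2Mf' + gr + ef + \frac{q}{q_2}\varepsilon f \pmod q$.

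The crux is to recognize $M$ as the (slightly perturbed) signal and to tame $2Mf'$. Each coefficient of $M$ lies in $\{0,\lfloor q/2\rceil\}$, so $M = \text{PolyEncode}(m) + \rho$, where $\rho$ is the $\text{PolyEncode}$ rounding with $\|\rho_i\|_{q,2}\le \sqrt{8}\cdot\tfrac12 = \sqrt 2$ on each octet. The key arithmetic trick is that $2\lfloor q/2\rceil \equiv \pm 1 \pmod q$ for odd $q$, whence $2M \equiv \pm\,\mathbf{1}_{\mathrm{supp}(M)} \pmod q$ and thus $2Mf' \equiv \pm\,\mathbf{1}_{\mathrm{supp}(M)}\, f' \pmod q$: multiplying the scale $\tfrac q2$ by $p=2$ collapses to $q\equiv 0$ up to this $\pm1$ defect, which is precisely what places the message in the most significant bits. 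Collecting terms, $(\frac{q}{q_2}c) f \bmod^{\pm} q = \text{PolyEncode}(m) + \mathrm{E} \bmod^{\pm} q$ with $\mathrm{E} = gr + ef \pm \mathbf{1}_{\mathrm{supp}(M)}\, f' + \rho + \frac{q}{q_2}\varepsilon f$. Upper-bounding $\|\mathbf{1}_{\mathrm{supp}(M)}\, f'\|$ over all messages by the all-ones worst case $\|\vec 1\, f'\|$, and absorbing $\rho$ through the triangle inequality, the event $\|\mathrm{Err}_i\|_{q,2} = \|(gr+ef+\vec1\,f'+\frac{q}{q_2}\varepsilon f)_i\|_{q,2} < \tfrac q2 - \sqrt 2$ forces the $i$-th octet of $\mathrm{E}$ to have norm $<\tfrac q2$, i.e.\ within the decoding radius.

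Finally, Theorem~\ref{thm-correctness-of-e8-lattice} (with $\lambda=\tfrac q2$ in $q$-units, equivalently $\lambda=\tfrac{q_2}{2}$ after the $\tfrac{q_2}{q}$ rescaling of Lemma~\ref{lemma-ntru-e8-lwe-correctness-analysis}) guarantees that $\text{Decode}_{E''_8}$ returns the correct $4$-bit block of $m$ for that octet; hence each octet decodes correctly with probability $1-\delta = \Pr[\|\mathrm{Err}_i\|_{q,2} < \tfrac q2 - \sqrt2]$, and the stated error probability follows. The hard part will be two-fold: first, the careful bookkeeping of the two independent rounding sources ($\rho$ from $\text{PolyEncode}$ and $\varepsilon$ from ciphertext compression) through the multiplication by $f$, and in particular the justification that $\varepsilon$ follows $\chi$ via the uniformity of $\sigma+M$; and second, the passage from the message-dependent term $\mathbf{1}_{\mathrm{supp}(M)}\, f'$ to the message-independent worst-case bound $\vec1\,f'$, which is what lets $\delta$ be a single quantity rather than a message-indexed family, with the per-octet statement lifted to the whole message by the identical distribution of the $n/8$ octets together with a union bound.
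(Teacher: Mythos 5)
Your proposal is correct and follows essentially the same route as the paper's proof: rescale via Lemma~\ref{lemma-ntru-e8-lwe-correctness-analysis}, expand $(\frac{q}{q_2}c)f \bmod^{\pm} q$ using $f=2f'+1$ to obtain the error term $gr+ef+sf'+\frac{s}{2}+\frac{q}{q_2}\varepsilon f$, justify $\varepsilon\sim\chi$ by the pseudorandomness of $\sigma+\lfloor\frac{q}{2}s\rceil$, dominate the message-dependent $sf'$ by $\vec{1}\cdot f'$ and absorb the half-integer perturbation $\frac{s}{2}$ (your $\rho$) as $\sqrt{2}$, then invoke Theorem~\ref{thm-correctness-of-e8-lattice}. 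Your split $M=\frac{q}{2}s+\rho$ with the observation $2\lfloor q/2\rceil\equiv 1 \pmod q$ is just a repackaging of the paper's direct computation $\frac{q+1}{2}s(2f'+1)\equiv\frac{q}{2}s+sf'+\frac{s}{2} \pmod q$, so the two arguments coincide.
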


\begin{proof}
	Scale the $\text{E}''_8$ lattice and $c f \bmod^{\pm} q_2$ by the factor $q / q_2$. According to Lemma~\ref{lemma-ntru-e8-lwe-correctness-analysis}, we have
	\begin{align}\label{equ-ntru-e8-lwe-correctness-analysis-m}
		\begin{split}
			m &= \text{PolyDecode}_{E''_8}\left(   c f \bmod^{\pm} q_2   \right) \\
			&=  \text{PolyDecode}_{E'_8}\left(   (\frac{q}{q_2} c) f \bmod^{\pm} q  \right),
		\end{split}
	\end{align}
	in Algorithm~\ref{algo-ntru-e8-pke-dec}. For any $m \in \mathcal{M}$, the result of $\text{PolyEncode}(m)$ in Algorithm~\ref{algo-ntru-e8-pke-enc} can be denoted by $\frac{q}{2}s $ where $s \in \mathcal{R}_{2}$. Based on the hardness of the NTRU assumption and the RLWE assumption, $\sigma$ in line~\ref{line-ntru-e8-enc-sigma} in Algorithm~\ref{algo-ntru-e8-pke-enc} is pseudo-random in $\mathcal{R}_{q}$ so is $ \sigma +  \lfloor\frac{q}{2}s\rceil$ for any given $s \in \mathcal{R}_{2}$. We  mainly consider the case of odd $q$, since an even $q$ leads to a simpler proof due to $\lfloor\frac{q}{2}s\rceil=\frac{q}{2}s$.
	
	Therefore, the value of $c$ in line~\ref{line-ntru-e8-enc-ciphertext-c} in Algorithm~\ref{algo-ntru-e8-pke-enc} is
	$$ c= \Big\lfloor \cfrac{q_2}{q} ( \sigma +  \big\lfloor\cfrac{q}{2}s\big\rceil ) \Big\rceil \bmod q_2  =  \cfrac{q_2}{q} ( \sigma +  \cfrac{q+1}{2}s) + \varepsilon \bmod q_2 .$$
	
	With $\sigma = hr + e$, $h=g/f$ and $f=2 f' +1$, for the formula (\ref{equ-ntru-e8-lwe-correctness-analysis-m}) we get
	\begin{align}\label{equ-ntru-e8-lwe-correctness-analysis-whole}
		\begin{split}
			(\frac{q}{q_2} c) f \bmod^{\pm} q   &=  \frac{q}{q_2} [\cfrac{q_2}{q} ( \sigma +  \cfrac{q+1}{2}s) + \varepsilon ] \cdot f \bmod^{\pm} q  \\
			&= \cfrac{q+1}{2}s (2 f' +1) + \sigma  f + \cfrac{q}{q_2} \varepsilon f \bmod^{\pm} q\\
			&=	\cfrac{q}{2}s + gr + ef + sf'+ \frac{s}{2}+\cfrac{q}{q_2} \varepsilon f \bmod^{\pm} q
		\end{split}
	\end{align}
	
	Each octet of $\frac{q}{2}s$ in (\ref{equ-ntru-e8-lwe-correctness-analysis-whole}) is essentially a lattice point in the  $\text{E}'_8$ lattice, which we denoted by $\frac{q}{2} (\mathbf{k}_i \mathbf{H} \bmod 2)$. Denote the $i$-th octet of the polynomial $X$ by $(X)_i$. From Theorem~\ref{thm-correctness-of-e8-lattice} we know that to recover $\mathbf{k}_i$, one could hold the probability condition $\| (gr + ef + sf'+ \frac{s}{2}+\frac{q}{q_2} \varepsilon f)_i \|_{q,2} < \frac{q}{2}$ which can be indicated by the condition $ \| (gr + ef + \vec{1} \cdot f'+\frac{q}{q_2} \varepsilon f)_i \|_{q,2} + \sqrt{2} < \frac{q}{2}$, since $\vec{1} \cdot f'$ has a ``wider'' distribution than $s \cdot f'$ and $\| (\frac{s}{2})_i \|_{q,2} \le \sqrt{2}$ for any $s \in \mathcal{R}_2$. Similarly, for an even $q$, it can be simplified to the condition $\| (gr + ef +\frac{q}{q_2} \varepsilon f)_i \|_{q,2}  < \frac{q}{2}$ directly which can be implied by the inequality of the case of odd $q$. Therefore, we consider the bound of the case of odd $q$ as a general bound.
	$\hfill\square$
\end{proof}

\begin{theorem}[Correctness of CNTR]\label{lemma-ntru-e8-lwr-correctness-analysis}
	Let $\Psi_1$ and $\Psi_2$ be the distributions over the ring $\mathcal{R}$, and $q,q_2$ be positive integers. $q_2$ is an even number that is smaller than $q$. Let $f',g \leftarrow \Psi_1$ and $r \leftarrow \Psi_2$.
	Let $\varepsilon \leftarrow \chi$, where $\chi$ is the distribution over $\mathcal{R}$ defined as follows: Sample $h \xleftarrow{\$} \mathcal{R}_q$ and $r \leftarrow \Psi_2$, and output $ \left( \big\lfloor \frac{q_2}{q}  hr \big\rceil - \frac{q_2}{q}  hr \right)  \bmod^{\pm} q_2 $.	
	Let $\text{Err}_i$ be the $i$-th octet of $gr+\frac{q}{q_2}\varepsilon f $.
	Denote 	$1-\delta=\text{Pr}\left[ \| \text{Err}_i \|_{q,2} < \frac{q}{2} \right] $. Then, the error probability of CNTR is $\delta$.
\end{theorem}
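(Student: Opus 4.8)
The plan is to mirror the proof of Theorem~\ref{thm-ntru-e8-lwe-correctness-analysis} (correctness of CTRU), exploiting the fact that CNTR's encryption in Algorithm~\ref{algo-rlwr-variant-ntru-e8-pke-enc} is strictly simpler, so that the decryption error collapses to exactly $gr+\frac{q}{q_2}\varepsilon f$. First, as in the CTRU case, I would apply Lemma~\ref{lemma-ntru-e8-lwe-correctness-analysis} to rewrite $cf \bmod^{\pm}q_2 = \frac{q_2}{q}\big((\frac{q}{q_2}c)f \bmod^{\pm}q\big)$, so that decoding $cf\bmod^{\pm}q_2$ with the $\text{E}''_8$ lattice (scale $\frac{q_2}{2}$) is equivalent to decoding $(\frac{q}{q_2}c)f\bmod^{\pm}q$ with the $\text{E}'_8$ lattice (scale $\frac{q}{2}$). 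This reduces the claim to showing that the latter polynomial lies close to the encoded lattice point $\text{PolyEncode}(m)=\frac{q}{2}s$, where $s\in\mathcal{R}_2$.

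The key step, and the place where CNTR genuinely differs from CTRU, is computing $c$ exactly. Because $q_2$ is even, the signal contribution $\frac{q_2}{q}\cdot\frac{q}{2}s=\frac{q_2}{2}s$ is already an integer polynomial, so the rounding in line~\ref{line-rlwr-variant-ntru-e8-enc-ciphertext-c} of Algorithm~\ref{algo-rlwr-variant-ntru-e8-pke-enc} touches only the $\frac{q_2}{q}\sigma=\frac{q_2}{q}hr$ part (recall CNTR drops the noise $e$ and has no inner $\lfloor\cdot\rceil$ around $\text{PolyEncode}$). Hence $c=\frac{q_2}{q}hr+\varepsilon+\frac{q_2}{2}s\bmod q_2$ with $\varepsilon=\lfloor\frac{q_2}{q}hr\rceil-\frac{q_2}{q}hr$, which, after replacing the pseudo-random $h=g/f$ by a uniform element as justified by the NTRU assumption, is precisely the distribution $\chi$ of the statement. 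Scaling by $\frac{q}{q_2}$, multiplying by $f$, and substituting $h=g/f$ (so $hrf=gr$) and $f=2f'+1$, the crucial cancellation is $\frac{q}{2}s\cdot f = qsf'+\frac{q}{2}s\equiv\frac{q}{2}s\pmod q$, since $qsf'$ is an exact multiple of $q$. This yields
\begin{equation*}
	(\tfrac{q}{q_2}c)f \bmod^{\pm}q = \tfrac{q}{2}s + gr + \tfrac{q}{q_2}\varepsilon f \bmod^{\pm}q,
\end{equation*}
so the signal is $\frac{q}{2}s$ and the total error is exactly $\text{Err}=gr+\frac{q}{q_2}\varepsilon f$. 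This is cleaner than the CTRU error, which additionally carries $ef$, $sf'$ and $\frac{s}{2}$: those terms disappear here because $e$ is eliminated and the signal $\frac{q}{2}s$ survives without the $\frac{q+1}{2}s$ offset that CTRU's inner rounding produces.

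Finally, I would invoke Theorem~\ref{thm-correctness-of-e8-lattice} with scale $\lambda=\frac{q}{2}$: reading off the $i$-th octet, if $\|\text{Err}_i\|_{q,2}<\frac{q}{2}=\lambda$ (the packing radius), then $\text{Decode}_{E'_8}$ returns the same $4$-bit string $\mathbf{k}_i$ that generated the lattice point $\frac{q}{2}(\mathbf{k}_i\mathbf{H}\bmod 2)$, i.e.\ that octet of $m$ is recovered correctly. Thus the per-octet failure probability is $\delta=1-\Pr[\|\text{Err}_i\|_{q,2}<\frac{q}{2}]$, giving the stated result. I do not expect a real obstacle; the only care needed is the bookkeeping around $\varepsilon$ — verifying that evenness of $q_2$ makes $\frac{q_2}{2}s$ integral so that $\varepsilon$ is exactly the RLWR rounding term of $\frac{q_2}{q}hr$ and nothing more — and, as in the CTRU proof, tacitly treating the per-octet event as representative, the message-level probability following by a union bound over the $n/8$ octets.
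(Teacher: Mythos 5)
Your proposal is correct and follows essentially the same route as the paper's proof: rewrite $cf\bmod^{\pm}q_2$ via Lemma~\ref{lemma-ntru-e8-lwe-correctness-analysis}, use the evenness of $q_2$ so the rounding applies only to $\frac{q_2}{q}hr$ (giving $c=\frac{q_2}{q}hr+\varepsilon+\frac{q_2}{2}s\bmod q_2$ with $h$ pseudo-random under the NTRU assumption), derive $(\frac{q}{q_2}c)f\bmod^{\pm}q=\frac{q}{2}s+gr+\frac{q}{q_2}\varepsilon f\bmod^{\pm}q$, and conclude per octet via Theorem~\ref{thm-correctness-of-e8-lattice}. Your explicit verification of the cancellation $\frac{q}{2}s\cdot f=qsf'+\frac{q}{2}s\equiv\frac{q}{2}s\pmod q$ is a detail the paper leaves implicit ("thereby"), but it is the same argument.
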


\begin{proof}	
	The main observation is that the computation of the ciphertext $c$ is equivalent to
	\begin{align}
		\begin{split}\label{equ-ntru-e8-rlwr-ciphertext-computation}
			c &= \Big\lfloor \cfrac{q_2}{q} ( \sigma +
			\text{PolyEncode}(m)  ) \Big\rceil \bmod q_2 \\
			&= \Big\lfloor \cfrac{q_2}{q}  hr + \cfrac{q_2}{q}\cdot \cfrac{q}{2}s   \Big\rceil \bmod q_2 \\
			&= \Big\lfloor \cfrac{q_2}{q}  hr \Big\rceil +   \cfrac{q_2}{2}s   \bmod q_2 \\
			&= \cfrac{q_2}{q}  hr + \varepsilon +   \cfrac{q_2}{2}s  \bmod q_2
		\end{split}
	\end{align}
	for even $q_2 < q $, where $s \in \mathcal{R}_{2}$. Based on the hardness of the NTRU assumption, $h$ is pseudo-random in $\mathcal{R}_{q}$.
	The term $\big\lfloor \frac{q_2}{q}  hr \big\rceil$ indicates an RLWR sample, and the term  $\frac{q_2}{2}s $ implies the encoding output of $m$ via the scalable $\text{E}_8$ lattice w.r.t. the scale factor $\frac{q_2}{2}$.
	
	Similarly, we have
	\begin{align*}
		\begin{split}
			m &= \text{PolyDecode}_{E''_8}\left(   c f \bmod^{\pm} q_2   \right) \\
			&=  \text{PolyDecode}_{E'_8}\left(   (\frac{q}{q_2} c) f \bmod^{\pm} q  \right)
		\end{split}
	\end{align*}
	thereby $(\frac{q}{q_2} c) f \bmod^{\pm} q = \frac{q}{2}s + gr + \frac{q}{q_2}\varepsilon f \bmod^{\pm} q$. Each octet of $\frac{q}{2}s$ is essentially a lattice point in the scalable $\text{E}_8$ lattice w.r.t. the scale  factor $\frac{q}{2}$, which we denote by $\frac{q}{2} (\mathbf{k}_i \mathbf{H} \bmod 2)$. From Theorem~\ref{thm-correctness-of-e8-lattice} we know that to recover $\mathbf{k}_i$, it should hold $\| \text{Err}_i \|_{q,2} < \frac{q}{2}$, where $\text{Err}_i$ is the $i$-th octet of $gr+\frac{q}{q_2} \varepsilon f $.
	$\hfill\square$
\end{proof}

\subsection{More Accurate Form of Polynomial Product and Error Probability Analysis over $\mathbb{Z}_q[x]/(x^n - x^{n/2} + 1)$}\label{sec-exact-form-polynomial-product}

The more accurate form of polynomial product and more accurate corresponding analysis of error probability over the ring $\mathbb{Z}_q[x]/(x^n - x^{n/2} + 1 )$ are presented in detail here. As previously described in~\cite{nttru-LS19}, the general form of the polynomial product of $f=\sum\limits_{i=0}^{n-1}{f_i x^i}$ and $g=\sum\limits_{i=0}^{n-1}{g_i x^i}$ in the ring $\mathbb{Z}_q[x]/(x^n - x^{n/2} + 1)$ is presented via a matrix-vector multiplication

\begin{equation}\label{equ-compute-general-h-in-non-power-of-two-ring}
	h= \left[
	\begin{array}{c}
		h_{0}\\
		h_{1}\\
		\vdots\\
		h_{n-1}
	\end{array}
	\right]
	=
	\left[
	\begin{array}{cc}
		\textbf{L}-\textbf{U} & \ \ -\textbf{F}-\textbf{U} \\
		\textbf{F}+\textbf{U} & \ \ \textbf{F}+\textbf{L}
	\end{array}
	\right]
	\cdot
	\left[
	\begin{array}{c}
		g_{0}\\
		g_{1}\\
		\vdots\\
		g_{n-1}
	\end{array}
	\right] ,
\end{equation}
where $\textbf{F},\textbf{L},\textbf{U}$ are the $n/2$-dimension Toeplitz matrices as follows:
{
	\begin{align*}
		\begin{split}
			\textbf{F}
			=
			\left[
			\begin{array}{cccc}
				f_{n/2}  & f_{n/2-1} & \cdots &  f_{1}  \\
				f_{n/2+1} &  f_{n/2}  & \cdots &  f_{2}  \\
				\vdots   &  \vdots   & \ddots & \vdots  \\
				f_{n-1}  &  f_{n-2}  & \cdots & f_{n/2}
			\end{array}
			\right] , \
			\textbf{L}
			=
			\left[
			\begin{array}{cccc}
				f_{0}   &     0     & \cdots &   0    \\
				f_{1}   &   f_{0}   & \cdots &   0    \\
				\vdots   &  \vdots   & \ddots & \vdots \\
				f_{n/2-1} & f_{n/2-2} & \cdots & f_{0}
			\end{array}
			\right] , \
			\textbf{U}
			=
			\left[
			\begin{array}{cccc}
				0    & f_{n-1} & \cdots & f_{n/2+1} \\
				\vdots & \vdots  & \ddots &  \vdots   \\
				0    &    0    & \cdots &  f_{n-1}  \\
				0    &    0    & \cdots &     0
			\end{array}
			\right] .
		\end{split}
	\end{align*}
}

However, in \cite{nttru-LS19}, to bound the error probability, they consider the worst case consisting of the sums of $\frac{3}{2}n$ terms of the form $f_i g_j$ for each coefficient of $h$, which will give the most conservative estimation result. In the following, we will derive the exact number of the terms of the polynomial product coefficient, and improve the original error probability analysis developed in~\cite{nttru-LS19}, instead of roughly considering the worst case of using $\frac{3}{2}n$ terms. Firstly, focusing on the arithmetic operations in $\mathbb{Z}_q$, the product of $f$ and $g$ in $\mathbb{Z}_q[x]$ is written as

$$ \sum_{i+j=k, \atop 0 \leq k\leq n-1}f_ig_jx^k + \sum_{i+j=k, \atop n \leq k \leq 2n-2}f_ig_jx^k .$$

To obtain the result in $\mathbb{Z}_q[x]/(x^n - x^{n/2} + 1)$, we consider the second summation $\sum\limits_{i+j=k, \atop n \leq k \leq 2n-2}f_ig_jx^k $, and have
\begin{equation}
	\begin{aligned}
		\sum_{i+j=k, \atop n \leq k \leq 2n-2}f_ig_jx^k &= \sum_{i+j=k+n, \atop 0 \leq k \leq n-2}f_i g_j x^k(x^{\frac{n}{2}}-1)
		&=\sum_{i+j=k+\frac{n}{2}, \atop \frac{n}{2} \leq k \leq \frac{3n}{2}-2}f_ig_jx^k-\sum_{i+j=k+n, \atop 0 \leq k \leq n-2} f_i g_j x^k .\\
	\end{aligned}
\end{equation}

Then consider the form of $\sum\limits_{i+j=k+\frac{n}{2}, \atop \frac{n}{2} \leq k \leq \frac{3n}{2}-2} f_i g_j x^k$. Actually, we have
\begin{equation}
	\begin{aligned}
		\sum_{i+j=k+\frac{n}{2}, \atop \frac{n}{2} \leq k \leq \frac{3n}{2}-2}f_ig_jx^k &=\sum_{i+j=k+\frac{n}{2}, \atop \frac{n}{2} \leq k \leq n-1}f_ig_jx^k+\sum_{i+j=k+\frac{n}{2}, \atop n \leq k \leq \frac{3n}{2}-2}f_ig_jx^k \\
		&=\sum_{i+j=k+\frac{n}{2}, \atop \frac{n}{2} \leq k \leq n-1}f_ig_jx^k+\sum_{i+j=k+\frac{3n}{2}, \atop 0 \leq k \leq \frac{n}{2}-2}f_ig_jx^k(x^\frac{n}{2}-1) \\
		&=\sum_{i+j=k+\frac{n}{2}, \atop \frac{n}{2} \leq k \leq n-1}f_ig_jx^k+\sum_{i+j=k+n, \atop \frac{n}{2} \leq k \leq n-2}f_ig_jx^k
		-\sum_{i+j=k+\frac{3n}{2}, \atop 0 \leq k \leq \frac{n}{2}-2}f_ig_jx^k.
	\end{aligned}
\end{equation}

Therefore, for each coefficient $h_k$, there will be
\begin{equation}
	\begin{aligned}
		h_k =
		\sum_{i+j=k, \atop 0 \leq k\leq n-1}f_ig_j
		-\sum_{i+j=k+n, \atop 0 \leq k \leq n-2}f_ig_j
		+\sum_{i+j=k+\frac{n}{2}, \atop \frac{n}{2} \leq k \leq n-1}f_ig_j
		+\sum_{i+j=k+n, \atop \frac{n}{2} \leq k \leq n-2}f_ig_j
		-\sum_{i+j=k+\frac{3n}{2}, \atop 0 \leq k \leq \frac{n}{2}-2}f_ig_j.
	\end{aligned}
\end{equation}

Concretely, for $0 \leq k \leq \frac{n}{2}-2$, we have
$$h_k=\sum_{i+j=k}f_ig_j
-\sum_{i+j=k+n}f_ig_j-\sum_{i+j=k+\frac{3n}{2}}f_ig_j.$$

For $\frac{n}{2}-2 < k \leq \frac{n}{2}$, we have
$$h_k=
\sum_{i+j=k}f_ig_j-\sum_{i+j=k+n}f_ig_j.$$

For $\frac{n}{2} < k \leq n-1$, we have
$$h_k=
\sum_{i+j=k}f_ig_j
+\sum_{i+j=k+\frac{n}{2}}f_ig_j.$$

Note the symmetry of the distribution, here we only focus on the number of terms of the form $f_i g_j$, which are listed as follows: For $0 \leq k \leq \frac{n}{2}-2$, $h_k$ has $\frac{3n}{2}-k-1$ terms of the form $f_i g_j$; For $\frac{n}{2}-2 < k \leq \frac{n}{2}$, $h_k$ has $n$ terms of the form $f_i g_j$; For $\frac{n}{2} < k \leq n-1$, $h_k$ has $\frac{3n}{2}$ terms of the form $f_i g_j$. According to the results mentioned above, the exact number of terms of the polynomial product coefficient is provided, based on which we can compute more accurate error probabilities of our schemes, as well as NTTRU and $\text{NTRU-C}_{3457}^{768}$. Thus, the error probabilities of CTRU, CNTR, NTTRU and $\text{NTRU-C}_{3457}^{768}$ in this work are estimated by using a Python script according to our methodology. The whole results of CTRU and CNTR for the selected parameters are given in Table~\ref{tab-ntru-e8-recommended-parameter-set} and Table~\ref{tab-ntru-e8-rlwr-variant-recommended-parameter-set}.

\subsection{Provable Security}

We prove that CTRU.PKE is IND-CPA secure under the NTRU assumption and the RLWE assumption, and CNTR.PKE is IND-CPA secure under the NTRU assumption and the RLWR assumption.

\begin{theorem}[IND-CPA security of CTRU.PKE]\label{thm-ntru-e8-lwe-pke-ind-cpa-security}
	For any adversary $\mathsf{A}$, there exit adversaries $\mathsf{B}$ and $\mathsf{C}$ such that $\textbf{Adv}_{\text{CTRU.PKE}}^{\text{IND-CPA}}(\mathsf{A}) \le \textbf{Adv}_{\mathcal{R}_q, \Psi_1}^{\text{NTRU}}(\mathsf{B}) + \textbf{Adv}_{\mathcal{R}_q,\Psi_2}^{\text{RLWE}}(\mathsf{C}) $.
\end{theorem}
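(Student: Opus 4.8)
The plan is to use a standard sequence-of-games argument that peels off the two hardness assumptions one at a time, starting from the real IND-CPA game $\mathsf{G}_0$ and ending in a game $\mathsf{G}_2$ in which the challenge ciphertext carries no information about the challenge bit $b$, so that $\mathsf{A}$'s advantage there is exactly $0$. Writing $p_i := \Pr[b'=b]$ in $\mathsf{G}_i$, I would establish $|p_0-p_1|\le \textbf{Adv}_{\mathcal{R}_q,\Psi_1}^{\text{NTRU}}(\mathsf{B})$, then $|p_1-p_2|\le\textbf{Adv}_{\mathcal{R}_q,\Psi_2}^{\text{RLWE}}(\mathsf{C})$, and finally $p_2=\tfrac12$. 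The claimed bound then follows by the triangle inequality, since $\textbf{Adv}_{\text{CTRU.PKE}}^{\text{IND-CPA}}(\mathsf{A})=|p_0-\tfrac12|\le|p_0-p_1|+|p_1-p_2|$.

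First, in $\mathsf{G}_1$ I would replace the honestly generated public key $h=g/f$ (with $f=pf'+1$ and $f',g\gets\Psi_1$) by a uniformly random $h\xleftarrow{\$}\mathcal{R}_q$, running everything else exactly as in $\mathsf{G}_0$. The reduction $\mathsf{B}$ takes its NTRU challenge $h$, plants it as the public key, samples $b$ itself, answers the single encryption query honestly (the encryption algorithm only needs $h$), and outputs $1$ iff $\mathsf{A}$ wins; then $\Pr[\mathsf{B}\to1]$ equals $p_0$ when $h$ is a genuine NTRU key and $p_1$ when $h$ is uniform, giving the first bound. Next, in $\mathsf{G}_2$, now that $h$ is already uniform, I would replace $\sigma=hr+e$ (with $r,e\gets\Psi_2$) by a uniform $\sigma\xleftarrow{\$}\mathcal{R}_q$. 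The reduction $\mathsf{C}$ receives an RLWE pair $(h,\sigma)$, uses $h$ as the public key and this very $\sigma$ inside the challenge ciphertext, and again outputs $1$ iff $\mathsf{A}$ wins, so its advantage is $|p_1-p_2|$. The order of the hops matters: the RLWE hop requires $h$ to be uniform, which is precisely what the preceding NTRU hop guarantees.

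Finally, in $\mathsf{G}_2$ the challenge is $c=\big\lfloor\frac{q_2}{q}(\sigma+\lfloor\text{PolyEncode}(m_b)\rceil)\big\rceil\bmod q_2$ with $\sigma$ uniform over $\mathcal{R}_q$. The key observation, and the only place requiring any care, is that once $m_b$ is fixed, $\lfloor\text{PolyEncode}(m_b)\rceil$ is a fixed element of $\mathcal{R}_q$, so adding the uniform $\sigma$ makes $\sigma+\lfloor\text{PolyEncode}(m_b)\rceil$ uniform on $\mathcal{R}_q$ independently of $b$; applying the deterministic rounding map $x\mapsto\lfloor\frac{q_2}{q}x\rceil\bmod q_2$ therefore produces a ciphertext whose distribution does not depend on $b$, whence $p_2=\tfrac12$. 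I expect the main (though still mild) subtlety to be the bookkeeping around the NTRU hop: the effective NTRU secret distribution is the coset-shifted one induced by $f=pf'+1$ with $f',g\gets\Psi_1$ rather than $\Psi_1$ itself, so the statement's $\textbf{Adv}_{\mathcal{R}_q,\Psi_1}^{\text{NTRU}}$ must be read with that shift in mind. Everything else reduces to routine faithful simulation of the IND-CPA game by $\mathsf{B}$ and $\mathsf{C}$.
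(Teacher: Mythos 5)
Your proposal is correct and follows essentially the same game-hopping proof as the paper: $\mathsf{G}_0\to\mathsf{G}_1$ replaces $h=g/f$ by uniform (NTRU hop), $\mathsf{G}_1\to\mathsf{G}_2$ replaces $\sigma$ by uniform (RLWE hop), and the final one-time-pad argument that $\sigma+\lfloor\text{PolyEncode}(m_b)\rceil$ is uniform so the rounded ciphertext is independent of $b$ matches the paper's reasoning (the paper phrases it octet-by-octet, which is equivalent). Your remark that $\textbf{Adv}_{\mathcal{R}_q,\Psi_1}^{\text{NTRU}}$ must be read with the coset shift $f=pf'+1$ in mind is a fair observation that the paper's proof glosses over, but it does not change the argument.
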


\begin{proof}
	We complete our proof through a sequence of games $\textbf{G}_0$, $\textbf{G}_1$ and $\textbf{G}_2$. Let $\mathsf{A}$ be the adversary against the IND-CPA security experiment. Denote by $\textbf{Succ}_i$ the event that $\mathsf{A}$ wins in the game $\textbf{G}_i$, that is, $\mathsf{A}$ outputs $b'$ such that $b'=b$ in $\textbf{G}_i$.
	
	Game $\textbf{G}_0$. This game is the original IND-CPA security experiment. Thus, $\textbf{Adv}_{\text{CTRU.PKE}}^{\text{IND-CPA}}(\mathsf{A}) = | \text{Pr}[ \textbf{Succ}_0] -1/2    |$.
	
	Game $\textbf{G}_1$. This game is the same as $\textbf{G}_0$, except that replacing the public key $h=g/f$ in the KeyGen by $ h \xleftarrow{\$} \mathcal{R}_q$. To distinguish $\textbf{G}_1$ from $\textbf{G}_0$ is equivalent to solve an NTRU problem. More precisely, there exits an adversary $\mathsf{B}$ with the same running time as that of $\mathsf{A}$ such that $| \text{Pr}[ \textbf{Succ}_0] - \text{Pr}[ \textbf{Succ}_1] | \le \textbf{Adv}_{\mathcal{R}_q, \Psi_1}^{\text{NTRU}}(\mathsf{B}) $.
	
	Game $\textbf{G}_2$. This game is the same as $\textbf{G}_1$, except that using uniformly random elements from $ \mathcal{R}_q$ to replace $\sigma$ in the encryption. Similarly, there exits an adversary $\mathsf{C}$ with the same running time as that of $\mathsf{A}$ such that $| \text{Pr}[ \textbf{Succ}_1] - \text{Pr}[ \textbf{Succ}_2] | \le \textbf{Adv}_{\mathcal{R}_q,\Psi_2}^{\text{RLWE}}(\mathsf{C}) $.
	
	In Game $\textbf{G}_2$, for any given $m_b$, according to Algorithm~\ref{algo-ntru-e8-pke-enc} and~\ref{algo-e8-poly-encoding}, $m_b$ is split into $n/8$ quadruples. Denote the $i$-th quadruple of $m_b$ as $m_b^{(i)}$, which will later be operated to output the $i$-th octet of the ciphertext $c$ that is denoted as $c^{(i)}$, $i=0,1,\ldots,n/8-1$. Since $c^{(i)}$ is only dependent on $m_b^{(i)}$ and other parts of $m_b$ do not interfere with $c^{(i)}$, our aim is to prove that $c^{(i)}$ is independent of $m_b^{(i)}$, $i=0,1,\ldots,n/8-1$. For any $i$ and any given $m_b^{(i)}$, $\lfloor \text{Encode}_{E'_8}(m_b^{(i)}) \rceil$ is fixed. Based on the uniform randomness of $\sigma$ in $ \mathcal{R}_q$ , its $i$-th octet (denoted as $\sigma^{(i)}$) is uniformly random in $\mathbb{Z}_q^{8}$, so is $\sigma^{(i)} + \lfloor \text{Encode}_{E'_8}(m_b^{(i)}) \rceil$. Therefore, the resulting $c^{(i)}$ is subject to the distribution $ \lfloor \frac{q_2}{q}u \rceil \bmod q_2$ where $u$ is uniformly random in $\mathbb{Z}_q^{8}$, which implies that $c^{(i)}$ is independent of $m_b^{(i)}$. Hence, each $c^{(i)}$ leaks no information of the corresponding $m_b^{(i)}$, $i=0,1,\ldots,n/8-1$. We have $ \text{Pr}[ \textbf{Succ}_2]  = 1/2$.
	
	Combining all the probabilities finishes the proof.
	$\hfill\square$
\end{proof}

\begin{theorem}[IND-CPA security of CNTR.PKE]
	For any adversary $\mathsf{A}$, there exit adversaries $\mathsf{B}$ and $\mathsf{C}$ such that $\textbf{Adv}_{\text{CNTR.PKE}}^{\text{IND-CPA}}(\mathsf{A}) \le \textbf{Adv}_{\mathcal{R}_q, \Psi_1}^{\text{NTRU}}(\mathsf{B}) + \textbf{Adv}_{\mathcal{R},\Psi_2}^{\text{RLWR}}(\mathsf{C}) $.
\end{theorem}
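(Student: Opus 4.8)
The plan is to mirror the three-game structure used in the IND-CPA proof of CTRU.PKE, substituting the RLWR assumption for RLWE and exploiting the simpler ciphertext form of CNTR. Let $\mathsf{A}$ be an adversary against the IND-CPA experiment, and let $\textbf{Succ}_i$ denote the event that $\mathsf{A}$ outputs $b'=b$ in game $\textbf{G}_i$. Game $\textbf{G}_0$ is the original experiment, so $\textbf{Adv}_{\text{CNTR.PKE}}^{\text{IND-CPA}}(\mathsf{A}) = |\text{Pr}[\textbf{Succ}_0] - 1/2|$.

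In game $\textbf{G}_1$ I would replace the public key $h = g/f$ produced by CNTR.PKE.KeyGen with a uniformly random $h \xleftarrow{\$} \mathcal{R}_q$; any resulting change in $\mathsf{A}$'s success probability yields an NTRU distinguisher $\mathsf{B}$ with essentially the same running time, giving $|\text{Pr}[\textbf{Succ}_0] - \text{Pr}[\textbf{Succ}_1]| \le \textbf{Adv}_{\mathcal{R}_q,\Psi_1}^{\text{NTRU}}(\mathsf{B})$. In game $\textbf{G}_2$, with $h$ now uniform, I would replace the rounded product $\lfloor \frac{q_2}{q} h r \rceil \bmod q_2$ appearing in CNTR.PKE.Enc with a uniformly random element of $\mathcal{R}_{q_2}$. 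Since $h$ is already uniform, this is exactly the task of distinguishing an RLWR sample (with rounding modulus $p = q_2$) from uniform, so there is an adversary $\mathsf{C}$ with $|\text{Pr}[\textbf{Succ}_1] - \text{Pr}[\textbf{Succ}_2]| \le \textbf{Adv}_{\mathcal{R},\Psi_2}^{\text{RLWR}}(\mathsf{C})$.

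To make this second hop meaningful I would first rewrite the challenge ciphertext, exactly as in the correctness analysis of CNTR (equation~(\ref{equ-ntru-e8-rlwr-ciphertext-computation})), in the form $c = \lfloor \frac{q_2}{q} h r \rceil + \frac{q_2}{2}s \bmod q_2$, where $\text{PolyEncode}(m_b) = \frac{q}{2}s$ with $s \in \mathcal{R}_2$. This identity holds precisely because $q_2$ is even, so $\frac{q_2}{2}s$ has integer coefficients and can be pulled outside the rounding operator. Then in $\textbf{G}_2$ the first summand is a uniform element of $\mathcal{R}_{q_2}$ while $\frac{q_2}{2}s$ is a deterministic function of $m_b$; adding the fixed shift $\frac{q_2}{2}s$ to a uniform element of the finite additive group $\mathcal{R}_{q_2}$ is a bijection and hence again yields a uniform element. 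Therefore $c$ is independent of $b$, $\text{Pr}[\textbf{Succ}_2] = 1/2$, and summing the three bounds closes the argument.

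I expect the step requiring the most care to be the re-expression of the ciphertext together with the verification that the RLWR rounding modulus is precisely $q_2$ and that the input distribution to $\mathsf{C}$ matches the RLWR experiment once $h$ is uniform. The payoff is that the concluding uniformity argument is \emph{cleaner} than the per-octet rounding-distribution analysis needed in the CTRU proof: here the RLWR output is already uniform over all of $\mathcal{R}_{q_2}$ and the message contribution is an integer translation, so no octet-by-octet reasoning about $\lfloor \frac{q_2}{q} u \rceil \bmod q_2$ for uniform $u$ is required.
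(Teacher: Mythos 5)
Your proposal is correct and follows essentially the same route as the paper's proof: the identical sequence of games $\textbf{G}_0,\textbf{G}_1,\textbf{G}_2$ (NTRU hop on $h$, then RLWR hop on $\bigl\lfloor \frac{q_2}{q} h r \bigr\rceil$ with rounding modulus $q_2$), the same rewriting of the ciphertext via the correctness-analysis identity $c = \bigl\lfloor \frac{q_2}{q} h r \bigr\rceil + \frac{q_2}{2}s \bmod q_2$, and the same concluding observation that adding the fixed shift $\frac{q_2}{2}s$ to a uniform element of $\mathcal{R}_{q_2}$ perfectly hides $m_b$. Your explicit justification that evenness of $q_2$ lets $\frac{q_2}{2}s$ be pulled outside the rounding merely spells out a step the paper delegates to its equation~(3), so there is nothing to change.
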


\begin{proof}
	We complete our proof through a sequence of games $\textbf{G}_0$, $\textbf{G}_1$ and $\textbf{G}_2$. Let $\mathsf{A}$ be the adversary against the IND-CPA security experiment. Denote by $\textbf{Succ}_i$ the event that $\mathsf{A}$ wins in the game $\textbf{G}_i$, that is, $\mathsf{A}$ outputs $b'$ such that $b'=b$ in $\textbf{G}_i$.
	
	Game $\textbf{G}_0$. This game is the original IND-CPA security experiment. Thus, $\textbf{Adv}_{\text{CNTR.PKE}}^{\text{IND-CPA}}(\mathsf{A}) = | \text{Pr}[ \textbf{Succ}_0] -1/2    |$.
	
	Game $\textbf{G}_1$. This game is the same as $\textbf{G}_0$, except that replacing the public key $h=g/f$ in the KeyGen by $ h \xleftarrow{\$} \mathcal{R}_q$. To distinguish $\textbf{G}_1$ from $\textbf{G}_0$ is equivalent to solve an NTRU problem. More precisely, there exits an adversary $\mathsf{B}$ with the same running time as that of $\mathsf{A}$ such that $| \text{Pr}[ \textbf{Succ}_0] - \text{Pr}[ \textbf{Succ}_1] | \le \textbf{Adv}_{\mathcal{R}_q, \Psi_1}^{\text{NTRU}}(\mathsf{B}) $.
	
	Game $\textbf{G}_2$. This game is the same as $\textbf{G}_1$, except that using random elements from $\mathcal{R}_{q_2}$ to replace $\lfloor \frac{q_2}{q}  hr \rceil$ of $c = \lfloor \frac{q_2}{q}  hr \rceil +   \frac{q_2}{2}s   \bmod q_2 $ (see the formula (\ref{equ-ntru-e8-rlwr-ciphertext-computation})) in the encryption where the term  $\frac{q_2}{2}s $ implies the encoding output of the given challenge plaintext $m_b$ via the scalable $\text{E}_8$ lattice w.r.t. the scale factor $\frac{q_2}{2}$. Similarly, there exits an adversary $\mathsf{C}$ with the same running time as that of $\mathsf{A}$ such that $| \text{Pr}[ \textbf{Succ}_1] - \text{Pr}[ \textbf{Succ}_2] | \le \textbf{Adv}_{\mathcal{R},\Psi_2}^{\text{RLWR}}(\mathsf{C}) $.
	
	In Game $\textbf{G}_2$, the information of the challenge plaintext $m_b$ is perfectly hidden by the uniformly random element from $\mathcal{R}_{q_2}$. Hence, the advantage of the adversary is zero in $\textbf{G}_2$. We have $ \text{Pr}[ \textbf{Succ}_2]  = 1/2$.
	
	Combining all the probabilities finishes the proof.
	$\hfill\square$
\end{proof}

%The work~\cite{fo-transform-prefix-hash-DHK+21} gives the multi-user/challenge CCA reduction bound of $\text{FO}_{ID(pk),m}^{\not\bot}$ in random oracle model (ROM)~\cite{rom-BR93} and quantum random oracle model (QROM)~\cite{qrom-BDF+11}. We adapt the results from Theorem 3.1 and Theorem 3.2 in~\cite{fo-transform-prefix-hash-DHK+21} into the single user/challenge setting, and obtain the following theorem.

By applying the $\text{FO}_{ID(pk),m}^{\not\bot}$ transformation and adapting the results given in \cite{fo-transform-prefix-hash-DHK+21}, we have the following results on CCA security of CTRU.KEM and CNTR.KEM in the random oracle model (ROM)~\cite{rom-BR93} and the quantum random oracle model (QROM)~\cite{qrom-BDF+11}.

\begin{theorem}[IND-CCA security in the ROM and QROM~\cite{fo-transform-prefix-hash-DHK+21}]~\label{thm-ntru-e8-kem-ind-cca-security}
	Let $\ell$ be the min-entropy~\cite{fo-transform-FO99} of $ID(pk)$, i.e., $\ell = H_{\infty}(ID(pk))$, where $(pk,sk)$$\leftarrow$CTRU/CNTR.PKE.KeyGen. For any (quantum)  adversary $\mathsf{A}$, making at most $q_D$ decapsulation queries, $q_H$ (Q)RO queries, against the IND-CCA security of CTRU/CNTR.KEM, there exits a (quantum) adversary $\mathsf{B}$ with roughly the same running time of $\mathsf{A}$, such that:
	\begin{itemize}
		\item In the ROM, it holds that $\textbf{Adv}_{\text{CTRU/CNTR.KEM}}^{\text{IND-CCA}}(\mathsf{A}) \le$
		$$ 2\left( \textbf{Adv}_{\text{CTRU/CNTR.PKE}}^{\text{IND-CPA}}(\mathsf{B})   + \frac{q_H+1}{|\mathcal{M}|}  \right) + \frac{q_H}{2^{\iota}}  + (q_H+q_D) \delta + \frac{1}{2^{\ell}};$$
		\item In the QROM, it holds that $\textbf{Adv}_{\text{CTRU/CNTR.KEM}}^{\text{IND-CCA}}(\mathsf{A}) \le$
		
		$$ 2 \sqrt{q_{HD} \textbf{Adv}_{\text{CTRU/CNTR.PKE}}^{\text{IND-CPA}}(\mathsf{B}) }+ \frac{4 q_{HD}}{\sqrt{|\mathcal{M}|}}+ \frac{4(q_H+1)}{\sqrt{2^{\iota}}}+ 16 q_{HD}^{2}\delta + \frac{1}{|\mathcal{M}|} +\frac{1}{2^{\ell}},$$
		where $q_{HD}:=q_H + q_D+1$.
	\end{itemize}		
\end{theorem}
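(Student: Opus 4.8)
The plan is to treat this statement not as something to be proved from scratch, but as a direct instantiation of the generic security analysis of the $\text{FO}_{ID(pk),m}^{\not\bot}$ transformation established in~\cite{fo-transform-prefix-hash-DHK+21}. First I would recall that generic result, which asserts that applying $\text{FO}_{ID(pk),m}^{\not\bot}$ to any $\delta$-correct, IND-CPA secure public-key encryption scheme with message space $\mathcal{M}$, implicit-rejection seed space $\{0,1\}^{\iota}$, and public-key identifier $ID:\mathcal{PK}\to\{0,1\}^{\gamma}$ of min-entropy $\ell$, yields an IND-CCA secure KEM whose advantage is bounded exactly by the two expressions in the statement. The whole proof then reduces to checking that CTRU.KEM (resp.\ CNTR.KEM), as defined in Algorithms~\ref{algo-ntru-e8-kem-keygen}--\ref{algo-ntru-e8-kem-decaps}, is obtained by instantiating this transformation with the matching parameters, and to feeding in the IND-CPA advantage already bounded in Theorem~\ref{thm-ntru-e8-lwe-pke-ind-cpa-security} and its CNTR analogue.

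Concretely, I would verify the hypotheses and match the terms one by one. The correctness parameter $\delta$ is supplied by Theorem~\ref{thm-ntru-e8-lwe-correctness-analysis} for CTRU and Theorem~\ref{lemma-ntru-e8-lwr-correctness-analysis} for CNTR, which justifies the additive correctness contributions $(q_H+q_D)\delta$ in the ROM and $16q_{HD}^2\delta$ in the QROM. The message space $\mathcal{M}=\{0,1\}^{n/2}$ gives $|\mathcal{M}|=2^{n/2}$ and accounts for the terms $\frac{q_H+1}{|\mathcal{M}|}$, $\frac{4q_{HD}}{\sqrt{|\mathcal{M}|}}$, and $\frac{1}{|\mathcal{M}|}$; the seed $z\xleftarrow{\$}\{0,1\}^{\iota}$ drawn in key generation accounts for the implicit-rejection terms $\frac{q_H}{2^{\iota}}$ and $\frac{4(q_H+1)}{\sqrt{2^{\iota}}}$; and the min-entropy $\ell=H_{\infty}(ID(pk))$ accounts for the prefix-hashing term $\frac{1}{2^{\ell}}$. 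The IND-CPA advantage is then substituted directly, producing the factor $2$ in the ROM and the square-root loss $2\sqrt{q_{HD}\,\textbf{Adv}_{\text{CTRU/CNTR.PKE}}^{\text{IND-CPA}}(\mathsf{B})}$ in the QROM, the latter arising from the one-way-to-hiding / measure-and-reprogram argument that is internal to~\cite{fo-transform-prefix-hash-DHK+21} and that I would not reprove here.

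The main obstacle, and essentially the only part requiring genuine care, is confirming that our schemes satisfy the structural preconditions the generic theorem tacitly assumes, since once these hold the two bounds follow by pure substitution. In particular I would check that (i) decapsulation is exactly the implicit-rejection re-encryption check of Algorithm~\ref{algo-ntru-e8-kem-decaps}, returning $\tilde K=\mathcal{H}_1(ID(pk),z,c)$ on failure, so that the $\not\bot$ formalism applies; (ii) encryption is a deterministic function of $(m,coin)$ with $coin$ derived as part of $\mathcal{H}(ID(pk),m)$, matching the derandomization step of the transformation; and (iii) the hash inputs $(ID(pk),m)$ and $(ID(pk),z,c)$ are formatted so that domain separation between the key-deriving and rejection-deriving calls holds, which is precisely what keeps the seed and $\frac{1}{2^{\ell}}$ terms clean. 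I would also confirm that CTRU.PKE and CNTR.PKE meet any residual well-formedness requirement (such as ciphertext spreadness) invoked by the cited analysis; given the RLWE/RLWR-based uniformity of the ciphertext already exploited in the IND-CPA proofs, this is expected to be immediate. With these checks in place, no scheme-specific probabilistic analysis beyond the already-established IND-CPA and correctness results is needed.
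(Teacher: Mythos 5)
Your proposal matches the paper's treatment exactly: the paper gives no standalone proof of Theorem~\ref{thm-ntru-e8-kem-ind-cca-security}, but obtains it by instantiating the generic $\text{FO}_{ID(pk),m}^{\not\bot}$ bounds of~\cite{fo-transform-prefix-hash-DHK+21} (adapting Theorems 3.1 and 3.2 there from the multi-user/challenge to the single-user/challenge setting, as spelled out in Appendix~\ref{app-sec-reduction}), with $\delta$, $|\mathcal{M}|$, $\iota$, and $\ell$ supplied by the scheme and the IND-CPA advantage by Theorem~\ref{thm-ntru-e8-lwe-pke-ind-cpa-security} and its CNTR analogue. Your added verification of the structural preconditions (implicit rejection, derandomization via $\mathcal{H}$, domain separation) is consistent with, and slightly more explicit than, what the paper records.
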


The detailed discussions and clarifications on CCA security reduction of KEM in the ROM and the QROM are given in Appendix~\ref{app-sec-reduction}.

\subsection{Discussions and Comparisons}

\textbf{The rings.}
As in~\cite{ntru-variant-eprint-1352-DHK21,nttru-LS19},
we can choose non-power-of-two cyclotomics $\mathbb{Z}_q[x]/(x^n-x^{n/2}+1)$ with respect to $n=3^{l}\cdot 2^e$ and prime $q$,  This type of ring  allows very fast NTT-based polynomial multiplication if the ring moduli are set to be NTT-friendly. Moreover, it also allows very  flexible parameter selection, since there are many integral $n$ of the form $3^{l}\cdot 2^e, l \ge 0, e \ge 1$.

\textbf{The message modulus.}
Note that the modulus $p$ is removed in the public key $h$  (i.e., $h = g/f$ ) and in the ciphertext $c$ of our CTRU and CNTR, for the reason that $p$ is not needed in $h$ and $c$ to recover the message $m$ in our construction.
The only reserved position of $p$ is the secret key $f$, which has the form of $f=pf'+1$. Since $\gcd(q,p)=1$ is required for NTRU-based KEM schemes, we can use $p=2$ instead of $p=3$. A smaller $p$ can lead to a lower error probability. Note that for other NTRU-based KEM schemes with power-of-two modulus $q$  as in  NTRU-HRSS~\cite{ntru-nist-round3}, $p$ is set to be $3$  because it is the smallest integer co-prime to the power-of-2 modulus.

\textbf{The decryption mechanism.}
Technically speaking, the ciphertext of NTRU-based PKE schemes~\cite{ntru-HPS98,ntru-nist-round3,ntru-secure-as-ideal-lattice-SS11,ntru-variant-eprint-1352-DHK21} has the form of $c=phr+m \bmod q$. One can recover the message $m$ through a unidimensional error-correction mechanism, after computing $cf \bmod q$. Instead, we use a multi-dimension coding mechanism. We encode each 4-bit message into a lattice point in the scalable  $\text{E}_8$ lattice. They can be recovered correctly with the aid of the scalable $\text{E}_8$ decoding algorithm if the $\ell_2$ norm of the error term is less than the sphere radius of the scalable $\text{E}_8$ lattice.

\textbf{The ciphertext compression.}
The step of compressing ciphertext in CTRU is described in line~\ref{line-ntru-e8-enc-ciphertext-c} in Algorithm~\ref{algo-ntru-e8-pke-enc}, while the step of compressing ciphertext in CNTR is described in line~\ref{line-rlwr-variant-ntru-e8-enc-ciphertext-c} in Algorithm~\ref{algo-rlwr-variant-ntru-e8-pke-enc}. Both of them can be mathematically written as $y :=\lfloor \cfrac{q_2}{q} x \rceil \bmod q_2, x\in \mathbb{Z}_q $ for each component. The sufficient condition of ciphertext compression is that $ \lceil \log(q_2) \rceil < \lceil \log(q) \rceil$, such that $y$ occupies less bits than $x$. But, the capacity of ciphertext compression is invalid under the condition of $ \lceil \log(q_2)\rceil \ge \lceil \log(q) \rceil$, especially $q_2 = q$. To the best of our knowledge, CTRU and CNTR are the first NTRU-based KEM schemes with scalable ciphertext compression via a single polynomial.
The ciphertext modulus $q_2$ is adjustable, depending on the bits to be dropped. Recall that most NTRU-based KEM schemes~\cite{ntru-HPS98,ntru-nist-round3,ntru-variant-eprint-1352-DHK21} fail to compress ciphertext due to the fact that the message information would be destroyed once the ciphertext is compressed.

\textbf{The role of noise $e$ and rounding.} We remark that the noise $e$ in line~\ref{line-ntru-e8-enc-sigma} in Algorithm~\ref{algo-ntru-e8-pke-enc} is only necessary for basing the IND-CPA security on the RLWE assumption for CTRU.PKE. As we shall see, even without $e$, CTRU.PKE is still IND-CPA secure under the RLWR assumption if the ciphertext compression exits. In some sense, CTRU.PKE degenerates into tweaking CNTR.PKE with the existing of the rounding of PolyEncode. Moreover, without the noise $e$ and the ciphertext compression, CTRU.PKE is still OW-CPA secure only based on the NTRU assumption (without further relying on the RLWE or RLWR assumption). As for CNTR.PKE, once eliminating the ciphertext compression, CNTR.PKE requires the rounding of PolyEncode, such that CNTR.PKE can be OW-CPA secure similarly based on the NTRU assumption.

\textbf{The trade-off of using lattice code.}
The scalable $\text{E}_8$ lattice code allows an efficient and constant-time implementation, as described in section~\ref{sec-the-lattice-codes} and section~\ref{sec-constant-time}. The error-correction capability of the scalable $\text{E}_8$ lattice code yields a low enough error probability, and allows wider noise distribution and ciphertext compression, such that the security of our schemes can be strengthened by about 40 bits and the ciphertext size is reduced by 15\% at least, with the mostly the same (even faster) overall running time when compared to other NTRU-based KEM schemes. Although the scalable $\text{E}_8$ lattice code sightly increases extra implementation complexity, this is likely to be a reasonable and acceptable trade-off of the security, ciphertext size and implementation complexity in many contexts.

%%============================================================================================
%
%
%
%
%
%%============================================================================================

\section{Concrete Hardness and Parameter Selection}\label{sec-attack-analysis}

In this section, we first estimate and select parameters for CTRU and CNTR, by applying the methodology of core-SVP hardness estimation~\cite{newhope-usenix-ADPS16}. Then, we present the  refined gate-count estimate, by using the scripts provided by Kyber and NTRU Prime in NIST PQC Round 3.  Finally, we overview and discuss some recent attacks beyond the core-SVP hardness.

\subsection{Parameter Selection with Core-SVP}% Hardness}

\subsubsection{Primal attack and dual attack}
Currently, for the parameters selected for  most practical lattice-based cryptosystems, the dominant attacks considered are the lattice-based primal and dual attacks.
The primal attack is to solve the \emph{unique-Short Vector Problem} (u-SVP) in the lattice by constructing an integer \emph{embedding lattice} (Kannan embedding~\cite{lattice-embedding-Kan87}, Bai-Galbraith embedding~\cite{lattice-embedding-BG14}, etc). The most common lattice reduction algorithm is the BKZ algorithm~\cite{bkz-SE94,bkz-CN11}. Given a lattice basis, the \emph{blocksize}, which we denote by $b$, is necessarily chosen to recover the short vector while running the BKZ algorithm. NTRU problem can be treated as a u-SVP instance in the NTRU lattice~\cite{ntru-attack-CS97}, while a u-SVP instance can also be constructed from the LWE problem. The dual attack~\cite{MDO08} is to solve the \emph{decisional} LWE problem, consisting of using the BKZ algorithm in the dual lattice, so as to recover part of the secret and infer the final secret vector.

\subsubsection{Core-SVP hardness}

Following the simple and conservative methodology of the core-SVP hardness developed from~\cite{newhope-usenix-ADPS16}, the best known cost of running SVP solver on $b$-dimension sublattice is $2^{0.292 b}$ for the classical case and $2^{0.265 b}$ for the quantum case. These cost models can be used for conservative estimates of the security of our schemes. Note that the number of  samples is set to be $2n$ for NTRU problem (resp., $n$ for LWE problem),
%for NTRU (LWE) problem,
since the adversary is given such samples. We estimate the classical and quantum core-SVP hardness security of CTRU and CNTR via the Python script  from~\cite{newhope-usenix-ADPS16,kyber-BDK+18,kyber-nist-round3}. The concrete results are given in Table~\ref{tab-ntru-e8-recommended-parameter-set} and Table~\ref{tab-ntru-e8-rlwr-variant-recommended-parameter-set}.

\begin{table*}
%	\scriptsize
	\centering
	\caption{Parameter sets of CTRU.}
	\begin{tabular}{cccccccccccc}
		\hline
		         Schemes           &          $n$          &          $q$          &           $q_2$           &          $(\Psi_1,\Psi_2)$           &        $|pk|$         &        $|ct|$         &         B.W.          & \makecell[c]{NTRU\\(Sec.C, Sec.Q)} & \makecell[c]{LWE, primal\\(Sec.C, Sec.Q)} & \makecell[c]{LWE, dual\\(Sec.C, Sec.Q)} &          $\delta$           \\ \hline
		\multirow{2}{*}{CTRU-512}  &          512          &         3457          &          $2^{9}$          &         $({B}_{2},{B}_{2})$          &          768          &          576          &         1344          &             (111,100)              &                 (111,100)                 &                (110,100)                &         $2^{-122}$          \\
		                           & \textcolor{red}{512}  & \textcolor{red}{3457} & \textcolor{red}{$2^{10}$} & \textcolor{red}{$({B}_{3},{B}_{3})$} & \textcolor{red}{768}  & \textcolor{red}{640}  & \textcolor{red}{1408} &     \textcolor{red}{(118,107)}     &        \textcolor{red}{(118,107)}         &       \textcolor{red}{(117,106)}        & \textcolor{red}{$2^{-143}$} \\ \hline
		\multirow{3}{*}{CTRU-768}  & \textcolor{red}{768}  & \textcolor{red}{3457} & \textcolor{red}{$2^{10}$} & \textcolor{red}{$({B}_{2},{B}_{2})$} & \textcolor{red}{1152} & \textcolor{red}{960}  & \textcolor{red}{2112} &     \textcolor{red}{(181,164)}     &        \textcolor{red}{(181,164)}         &       \textcolor{red}{(180,163)}        & \textcolor{red}{$2^{-184}$} \\
		                           &          768          &         3457          &          $3457$           &         $({B}_{3},{B}_{3})$          &         1152          &         1152          &         2304          &             (192,174)              &                 (192,174)                 &                (190,173)                &         $2^{-136}$          \\
		                           &          768          &         3457          &         $2^{11}$          &         $({B}_{3},{B}_{3})$          &         1152          &         1056          &         2208          &             (192,174)              &                 (192,174)                 &                (190,173)                &         $2^{-121}$          \\ \hline
		\multirow{3}{*}{CTRU-1024} & \textcolor{red}{1024} & \textcolor{red}{3457} & \textcolor{red}{$2^{11}$} & \textcolor{red}{$({B}_{2},{B}_{2})$} & \textcolor{red}{1536} & \textcolor{red}{1408} & \textcolor{red}{2944} &     \textcolor{red}{(255,231)}     &        \textcolor{red}{(255,231)}         &       \textcolor{red}{ (252,229)}       & \textcolor{red}{$2^{-195}$} \\
		                           &         1024          &         3457          &         $2^{10}$          &         $({B}_{2},{B}_{2})$          &         1536          &         1280          &         2816          &             (255,231)              &                 (255,231)                 &                (252,229)                &         $2^{-132}$          \\
		                           &         1024          &         3457          &          $3457$           &         $({B}_{3},{B}_{3})$          &         1536          &         1536          &         3072          &             (269,244)              &                 (269,244)                 &                (266,241)                &          $2^{-96}$          \\ \hline
	\end{tabular}
	\label{tab-ntru-e8-recommended-parameter-set}
\end{table*}

\begin{table*}
%	\scriptsize
	\centering
	\caption{Parameter sets of CNTR.}
	\begin{tabular}{ccccccccccc}
		\hline
		         Schemes           &          $n$          &          $q$          &           $q_2$           &         $(\Psi_1,\Psi_2)$          &        $|pk|$         &        $|ct|$         &         B.W.          & \makecell[c]{NTRU\\(Sec.C, Sec.Q)} & \makecell[c]{RLWR\\(Sec.C, Sec.Q)} &          $\delta$           \\ \hline
		\multirow{3}{*}{CNTR-512}  &          512          &         3457          &          $2^{9}$          &         $({B}_{3},B_{3})$          &          768          &          576          &         1344          &             (118,107)              &             (117,106)              &          $2^{-99}$          \\
		                           & \textcolor{red}{512}  & \textcolor{red}{3457} & \textcolor{red}{$2^{10}$} & \textcolor{red}{$({B}_{5},B_{5})$} & \textcolor{red}{768}  & \textcolor{red}{640}  & \textcolor{red}{1408} &     \textcolor{red}{(127,115)}     &     \textcolor{red}{(127,115)}     & \textcolor{red}{$2^{-170}$} \\
		                           &         {512}         &        {3457}         &        {$2^{10}$}         &        {$({B}_{6},B_{6})$}         &         {768}         &         {640}         &        {1408}         &            {(131,119)}             &            {(131,119)}             &        {$2^{-126}$}         \\ \hline
		\multirow{2}{*}{CNTR-768}  & \textcolor{red}{768}  & \textcolor{red}{3457} & \textcolor{red}{$2^{10}$} & \textcolor{red}{$({B}_{3},B_{3})$} & \textcolor{red}{1152} & \textcolor{red}{960}  & \textcolor{red}{2112} &     \textcolor{red}{(192,174)}     &     \textcolor{red}{(191,173)}     & \textcolor{red}{$2^{-230}$} \\
		                           &         {768}         &        {3457}         &        {$2^{10}$}         &        {$({B}_{4},B_{4})$}         &        {1152}         &         {960}         &        {2112}         &            {(200,181)}             &            {(199,181)}             &        {$2^{-151}$}         \\ \hline
		\multirow{2}{*}{CNTR-1024} & \textcolor{red}{1024} & \textcolor{red}{3457} & \textcolor{red}{$2^{10}$} & \textcolor{red}{$({B}_{2},B_{2})$} & \textcolor{red}{1536} & \textcolor{red}{1280} & \textcolor{red}{2816} &     \textcolor{red}{(255,231)}     &     \textcolor{red}{(253,230)}     & \textcolor{red}{$2^{-291}$} \\
		                           &        {1024}         &        {3457}         &        {$2^{10}$}         &        {$({B}_{3},B_{3})$}         &        {1536}         &        {1280}         &        {2816}         &            {(269,244)}             &            {(267,243)}             &        {$2^{-167}$}         \\ \hline
	\end{tabular}
	\label{tab-ntru-e8-rlwr-variant-recommended-parameter-set}
\end{table*}

%\vspace{-0.5cm}

\subsubsection{Parameter sets}\label{sec-parameters}

The parameter sets of CTRU and CNTR are given in Table~\ref{tab-ntru-e8-recommended-parameter-set} and Table~\ref{tab-ntru-e8-rlwr-variant-recommended-parameter-set} respectively, where those in red are the recommended  parameters  also given  in  Table~\ref{tab-comparisons-of-schemes}. Though the parameters in red are marked as  recommended, we believe the other parameter sets  are still very useful in certain application scenarios. Note that in Table~\ref{tab-comparisons-of-schemes} we did not list the security against the LWE dual attack. The reason is that the LWE dual attack was considered less realistic than the primal attack, and was not taken for concrete hardness estimates in many lattice-based cryptosystems including Kyber in NIST PQC Round 3~\cite{kyber-nist-round3}. For ease of a fair comparison, the  security estimate against the LWE dual attack was not listed in Table~\ref{tab-comparisons-of-schemes}.

The ring dimension $n$ is chosen from $\{512,768,1024\}$,  corresponding  to the  targeted security levels I, III and V recommended by  NIST.
The ring modulus $q$ is set to 3457, and $q_2$ is the ciphertext modulus (also the RLWR modulus for CNTR). Recall that we fix the message space modulus $p=2$ and the underlying cyclotomic polynomial $\Phi(x) = x^n - x^{n/2} + 1$, which are  omitted  in Table~\ref{tab-ntru-e8-recommended-parameter-set} and Table~\ref{tab-ntru-e8-rlwr-variant-recommended-parameter-set}. $\Psi_1$ and $\Psi_2$ are the probability distributions which are set to be $B_{\eta}$, where $B_{\eta}$ is the centered binomial distribution w.r.t. the integer $\eta$.  The public key sizes $|pk|$, ciphertext sizes $|ct|$ and B.W. (bandwidth, $|pk|+|ct|$) are measured in terms of bytes.  ``Sec.C'' and ``Sec.Q'' mean the estimated security level expressed in bits in the classical and quantum settings respectively, where the types of NTRU attack, LWE primal attack, LWE dual attack and RLWR attack are considered.
The last column ``$\delta$'' indicates  the error probability, which is evaluated by a script according to the analysis given in section \ref{sec-error}.

We stress that our schemes enjoy a flexibility of parameter selections, but selecting these $n$'s in Table~\ref{tab-ntru-e8-recommended-parameter-set} and Table~\ref{tab-ntru-e8-rlwr-variant-recommended-parameter-set} is only for simplicity. One can also choose $n$ from $\{576,648,864,972,1152,1296\}$  which are integers of the form $3^{l}\cdot 2^e, l \ge 0, e \ge 1$. Note also that the plaintext message space of CTRU and CNTR is $\{0,1\}^{{n}/{2}}$, compared to the fixed message space $\{0,1\}^{256}$ of Kyber and Saber.
As for CNTR-512, its first parameter set has the smallest ciphertext sizes, and the third parameter set has the strongest hardness of lattice problem (say, NTRU and RLWR). They can be practically applicable in certain application scenarios which are not much sensitive to error probability. In practice, each secret key will not be used for decryption more than $2^{80}$ times during its lifetime. In this case, the relatively higher error probabilities, e.g., $2^{-99}$, does not undermine the actual security of these parameter sets in reality. The third parameter set of CNTR-512 is highly recommended due to its robust security, since the recent improvements on the attacks~\cite{BT21,GT21,MAT22} might cause the worry that other lattice-based schemes like Kyber and Saber do not achieve the claimed security goals, especially on the dimension of 512. However, it is seen that the third parameter set of CNTR-512 is possible, which has a gate complexity of $2^{163.9}$ at the memory complexity of $2^{102.7}$.

\begin{table*}
%	\scriptsize
%	\footnotesize
	\centering
	\setlength{\tabcolsep}{2mm}
	\caption{Gate-count estimate of CTRU parameters.}
	\begin{tabular}{cccccccc}
		\hline
		         Schemes           & $(\Psi_1,\Psi_2)$ & $d$  & $b$ & $b'$ & $\log$(gates) & $\log$(memory) & \makecell[c]{$\log$(gates)\\ by NIST} \\ \hline
		\multirow{2}{*}{CTRU-512}  &    $(B_2,B_2)$    & 1007 & 386 & 350  &     144.1     &      88.4      &         \multirow{2}{*}{143}          \\
		                           &    $(B_3,B_3)$    & 1025 & 411 & 373  &     150.9     &      93.3      &                                       \\ \hline
		\multirow{2}{*}{CTRU-768}  &    $(B_2,B_2)$    & 1467 & 634 & 583  &     214.2     &     137.9      &         \multirow{2}{*}{207}          \\
		                           &    $(B_3,B_3)$    & 1498 & 671 & 618  &     224.6     &     145.3      &                                       \\ \hline
		\multirow{2}{*}{CTRU-1024} &    $(B_2,B_2)$    & 1919 & 890 & 825  &     286.1     &     188.9      &         \multirow{2}{*}{272}          \\
		                           &    $(B_3,B_3)$    & 1958 & 939 & 871  &     299.7     &     198.5      &                                       \\ \hline
	\end{tabular}
	\label{tab-ntru-e8-gate-count-estimate}
\end{table*}

%\vspace{-0.5cm}

\begin{table*}
%	\scriptsize
%	\footnotesize
	\centering
	\setlength{\tabcolsep}{2mm}
	\caption{Gate-count estimate of CNTR parameters.}
	\begin{tabular}{ccccccccc}
		\hline
		         Schemes           &  $q_2$   & $(\Psi_1,\Psi_2)$ & $d$  & $b$ & $b'$ & $\log$(gates) & $\log$(memory) & \makecell[c]{$\log$(gates)\\ by NIST} \\ \hline
		\multirow{3}{*}{CNTR-512}  & $2^{9}$  &    $(B_3,B_3)$    & 1025 & 411 & 373  &     150.9     &      93.3      &         \multirow{3}{*}{143}          \\
		                           & $2^{10}$ &    $(B_5,B_5)$    & 1025 & 444 & 404  &     160.1     &     100.0      &                                       \\
		                           & $2^{10}$ &    $(B_6,B_6)$    & 1025 & 457 & 417  &     163.9     &     102.7      &                                       \\ \hline
		\multirow{2}{*}{CNTR-768}  & $2^{10}$ &    $(B_3,B_3)$    & 1498 & 671 & 618  &     224.6     &     145.3      &         \multirow{2}{*}{207}          \\
		                           & $2^{10}$ &    $(B_4,B_4)$    & 1521 & 699 & 644  &     232.3     &     150.8      &                                       \\ \hline
		\multirow{2}{*}{CNTR-1024} & $2^{10}$ &    $(B_2,B_2)$    & 1919 & 890 & 825  &     286.1     &     188.9      &         \multirow{2}{*}{272}          \\
		                           & $2^{10}$ &    $(B_3,B_3)$    & 1958 & 939 & 871  &     299.7     &     198.5      &                                       \\ \hline
	\end{tabular}
	\label{tab-ntru-e8-rlwr-variant-gate-count-estimate}
\end{table*}

%\vspace{-0.5cm}

\subsection{Refined Gate-Count Estimate}

As for the quantum gates and space complexity related to the LWE and LWR problems, %to ensure comparability for analysis,
we use the same gate number estimation method  as Kyber, Saber, NTRU KEM, and SNTRU Prime in NIST PQC Round 3. Briefly speaking, it uses the probabilistic simulation of~\cite{DSD+20} rather than  the GSA-intersect model of~\cite{newhope-usenix-ADPS16,AGV+17} to determine the BKZ blocksize $b$ for a successful attack.
%, where the required blocksize is somewhat larger.
And it relies on the concrete estimation for the cost of sieving in gates from~\cite{AGP+19}. It also accounts for the ``few dimensions for free'' proposed in~\cite{Duc18}, which permits to solve SVP in dimension $b$ by sieving in a somewhat smaller dimension $b_0 = b - O(b)$.  Finally, it dismisses the dual attack as realistically more expensive than the primal attack. In particular, in the dual attack, exploiting the short vectors generated by the Nearest Neighbor Search used in lattice sieving is not compatible with the ``dimension for free'' trick~\cite{Duc18}.
The scripts for these refined estimates are provided in a git branch of the leaky-LWE estimator~\cite{DSD+20}\footnote{\url{https://github.com/lducas/leaky-LWE-Estimator/tree/NIST-round3}}.

The gate-count estimate results of the parameter sets of CTRU and CNTR are shown in Table~\ref{tab-ntru-e8-gate-count-estimate} and Table~\ref{tab-ntru-e8-rlwr-variant-gate-count-estimate}, respectively.
$\Psi_1$ and $\Psi_2$ are the probability distributions. $q_2$ is the RLWR modulus for CNTR. $d$ is the optimal lattice dimension for the attack. $b$ is the BKZ blocksize. $b'$ is the sieving dimension accounting for ``dimensions for free''. Gates and memory are expressed in bits. The last column means the required $\log$(gates) values by NIST.
It is estimated in~\cite{kyber-nist-round3}
that the actual cost may not be  more than 16 bits away from this estimate in either direction.

\subsection{Attacks Beyond Core-SVP Hardness}

\subsubsection{Hybrid attack}

The works~\cite{HHH+09,nist-round-2-submissions,nist-round-3-submissions} consider the hybrid attack as  the most powerful against  NTRU-based cryptosystems. However, even with many heuristic and theoretical analysis on hybrid attack ~\cite{Wun19,BGP+16,HHH+09,How07}, so far it still fails to make significant security impact on NTRU-based cryptosystems partially due to the  memory constraints. By improving the collision attack on NTRU problem, it is suggested in ~\cite{Ngu21}  that  the   mixed attack complexity estimate used for NTRU problem is unreliable, and there are both overestimation and underestimation. Judging from the current hybrid and meet-in-the-middle (MITM) attacks on NTRU problem, there is an estimation bias in the security estimates of NTRU-based KEMs, but this bias does not make a big difference to the  claimed  security. For example, under the MITM search, the security of NTRU KEM in NIST PQC Round 3 may  be $2^{-8}$ less than the acclaimed value in the worst situation~\cite{Ngu21}.
%This elaboration also applies to CTRU.
%, meaning  its worst impact on CTRU  may not exceed a factor of $2^{-8}$.

\subsubsection{Recent advances on dual attack}
There are some recent progress on the  dual attack, and we discuss their impacts on CTRU and CNTR. Duc et al.~\cite{DAF+15} propose that fast Fourier transform (FFT) can be useful to the dual attack. As for the small coefficients of the secrets, various improvements can also be achieved~\cite{AM17,BGP+16,CJM+19}. Albrecht and Martin~\cite{AM17} propose a re-randomization and smaller-dimensional lattice reduction method, and investigate the method  for  generating  coefficients of short vectors in the dual attack. Guo and Thomas~\cite{GT21} show that the current security estimates from the primal attacks are overestimated. Espitau et al.~\cite{ETA+20} achieve a dual attack that outperforms the primal attack. These attacks can be combined with the hybrid attack proposed in~\cite{HM07} to achieve a further optimized  attack under specific parameters~\cite{SC19,overstretched-ntru-attacks-KF17,BGP+16}. Very recently, MATZOV~\cite{MAT22}  further optimizes the dual attack, and claims  that the impact of its methods  is larger than  those of  Guo and Thomas's work~\cite{GT21}. It is also mentioned in~\cite{MAT22} that the newly developed  methods might also be applicable to  NTRU-based cryptosystems (e.g., by improving the hybrid attack).    The improvements of dual attacks mentioned above have potential threats to the security of CTRU and CNTR (as well as to other cryptosystems based on algebraically structured lattices). This line of research is still actively ongoing, and there is still no mature and convincing estimate method up to now.

% It is necessary to ensure that the security strength of CTRU under the primal attack is better than that of other schemes, such that in the case of also considering the dual attack, the security strength of CTRU is still not lower than that of other schemes.

\subsubsection{S-unit attack}
%Here we refer to~\cite{BT21} to summarize the development of S-unit attacks and evaluates the impact of S-unit attacks on CTRU.
The  basis of the S-unit attack  is the unit attack: finding a short generator. On the basis of the constant-degree algorithm proposed in~\cite{Hal05,EHK+14}, Biasse et al.~\cite{BS16} present a quantum polynomial time algorithm, which is the basis for generating  the generator used in the  unit attack and S-unit attack. Then, the  unit attack is to shorten the generator by reducing the modulus of the unit, and the idea is based on the variant of the LLL algorithm~\cite{Coh12} to reduce the size of the generator in the S-unit group. That is, it replaces $y_i$ with $y_i/\epsilon$, thereby reducing the size of $y_i$, where $y_i$ refers to the size of the generator and $\epsilon$ is the reduction factor of the modulus of the unit. The S-unit attack is briefly recalled in Appendix~\ref{app-sec-s-unit-attack}.  Campbell et al.~\cite{CGS14} consider the application of the cycloid structure to the unit attack, which mainly depends on the simple generator of the cycloid unit. Under the cycloid structure, the determinant is easy to determine, and  is larger than the logarithmic length of the private key, which means that  the private key can  be recovered  through the LLL algorithm.

%On the basis of short generator, the simple reduction scheme is developed.

After establishing a set of short vectors, the simple reduction repeatedly uses $v-u$ to replace $v$, thereby reducing the modulus of vector $v$, where $u$ belongs to the set of short vectors. This idea is discovered in~\cite{AH17,Coh12}. The difference is that the algorithm proposed by Avanzi and Howard~\cite{AH17} can be applied to any lattice, but  is limited to the $\ell_2$ norm, while the algorithm proposed by Cohen~\cite{Coh12} is applicable to more norms. Pellet-Mary et al.~\cite{PGD19} analyze the algorithm of Avanzi and Howard~\cite{AH17}, and apply  it to S-unit. They point out that the S-unit attack could  achieve shorter vectors than existing methods, but still with exponential time for an  exponentially large approximation factor.
%the time complexity of this processing is exponential, so its impact on cryptographic schemes is limited.
Very recently, Bernstein and Tanja~\cite{BT21} further improve the S-unit attack.

%, which can solve a close-vector problem in the unit lattice and recover short vectors from unit multiples.

%S-unit attacks have been analyzed for specific lattices. We only need to consider S-unit attacks for NTRU as instances on arbitrary lattices.

Up to now, it is still an open problem to predict the effectiveness of the reduction inside the unit attacks. The statistical experiments on various $m'$-th cyclotomics (with respect to power-of-two $m'$) show that the efficiency of the S-unit attack is much higher than a spherical model of the same lattice for $m'\in \{128, 256, 512\}$~\cite{Ber16}. The effect is about a factor of $2^{-3}$, $2^{-6}$ and $2^{-11}$, respectively. Therefore, even with a conservative estimate, the security impact on CTRU and CNTR may  not exceed a factor of $2^{-11}$.

\subsubsection{BKW attack}
%Albrecht et al.~\cite{hardness-lwe-APS15} gives his classical algorithms for solving LWE problems, including primal attack and dual attack. And in specific cases, Arora et al.'s algebraic method~\cite{ASR11} and Albrecht et al.'s extension of the Grobner basis method~\cite{AMC+14} also have powerful effects.
For cryptographic schemes to which the BKW method can be applied, the combined methods proposed in~\cite{BAA+03,AMC+15,GTP15,KPP15}, which extend the BKW method, can be the most efficient method for specific parameters. These methods require a large number of samples, and their security estimates are based on the analysis of lattice basis reduction, either by solving the encoding problem in the lattice or by converting to a u-SVP problem~\cite{AMR+13,LRC11,LMP13}. These attacks do not affect the security of CTRU and CNTR, because the parameters chosen for CTRU and CNTR do not meet the conditions of BKW method.

\subsubsection{Side channel attack}% assisted chosen-ciphertext attack}

Ravi et al.~\cite{REB+22} construct some ciphertexts with specific structures where the key information exists in the intermediate variables, so as to recover the key through side channel attack (SCA).
%This attack instantiates plaintext check oracles, decryption failure oracles, and full decryption oracles.
They apply this attack to NTRU KEM and NTRU Prime in NIST PQC Round 3, which can  recover the full secret keys
%of all parameter sets
through a few thousands of  chosen ciphertext queries. This type of SCA-aided chosen ciphertext attack  is not directly  applicable to  CTRU and CNTR, but might be  possible to be  improved against CTRU and CNTR.

Recently, Bernstein~\cite{ber22-fault-attack} proposes an efficient fault attack with a one-time single-bit fault in the random string stored inside the secret key, such that this attack can recover all the previous NTRU-HRSS session keys with the aid of about a thousand of modified ciphertexts in the standrad IND-CCA attack model. However, Bernstein's fault attack is valid for the specific ciphertext form of NTRU-HRSS, and is invalid for compressed ciphertext (as in CTRU and CNTR). Thus, this type of fault attack does not threaten CTRU and CNTR yet.

\subsubsection{Other attacks}

Algebraic attacks~\cite{CGS14,BS16,CDP+16,CDW17} and dense sublattice attacks~\cite{overstretched-ntru-attacks-KF17} also provide new ideas for LWE-based cryptographic analysis. However, these attacks do not currently affect the acclaimed security of the proposed parameters of CTRU and CNTR.

%%============================================================================================
%
%
%                                 Polynomial Operations
%
%
%%============================================================================================

\section{Polynomial Arithmetic}\label{sec-poly-operations-in-ntru-e8}

In this section, some NTT algorithms are introduced to compute and accelerate the polynomial multiplication or division of CTRU and CNTR. In particular, to address the inconvenient issues that multiple NTT algorithms have to be equipped in accordance with each $n\in \{512, 768, 1024\}$, we provide  the  methodology of using a unified NTT technique to compute NTT algorithms over $\mathbb{Z}_q[x]/(x^n-x^{n/2}+1)$ for all $n\in \{512,768,1024\}$ with the same $q$.

\subsection{The Mixed-radix NTT}\label{sec-mixed-radix-ntt}

A type of mixed-radix NTT is utilized to compute the polynomial multiplication and division over $\mathbb{Z}_{q}[x]/(x^{n} - x^{n/2} + 1)$ with respect to $n=768$ and $q=3457$. We choose the $\frac{3}{2}n$-th primitive root of unity $\zeta=5$ in $\mathbb{Z}_{q}$ due to $\frac{3}{2}n|(q-1)$. As for the forward  NTT transform ($NTT$),  inspired by NTTRU~\cite{nttru-LS19}, there is a mapping such that $\mathbb{Z}_q[x]/(x^{n} - x^{n /2} + 1 ) \  \cong  \  \mathbb{Z}_q[x]/( x^{n/2} - \zeta_{1})  \times \mathbb{Z}_q[x]/( x^{n/2} - \zeta_{2})$ where $\zeta_{1} + \zeta_{2}=1$ and $ \zeta_{1} \cdot \zeta_{2}=1$. To apply the mixed-radix NTT, we choose $\zeta_{1} = \zeta^{n/4} \bmod q$ and $ \zeta_{2} = \zeta_{1}^5 = \zeta^{5n/4} \bmod q$. Thus, both $x^{n/2} - \zeta_{1}$ and $x^{n/2} - \zeta_{2}$ can be recursively split down into degree-$6$ terms like $x^{6} \pm \zeta^3$ through 6 steps of radix-2 FFT trick for $n=768$. Then the steps of radix-3 FFT trick can be utilized, for example, given the isomorphism $ \mathbb{Z}_q[x]/(x^{6} -\zeta^3) \cong \mathbb{Z}_q[x]/(x^2 - \zeta) \times \mathbb{Z}_q[x]/(x^2 -\rho \zeta) \times \mathbb{Z}_q[x]/(x^3 - \rho^2 \zeta)$ where $\rho=\zeta^{n/2} \bmod q$. Therefore, $\mathbb{Z}_q[x]/(  x^{n} - x^{n /2} + 1 )$ can be decomposed into $\prod\limits_{i=0}^{n/2-1}  { \mathbb{Z}_q[x]/(  x^{2} - \zeta^{\tau(i)}  )  } $, where $\tau(i)$ is the power of $\zeta$ of the $i$-th term and we start the index $i$ from zero. Upon receiving the polynomial $f$, its result of the forward NTT transform is $\hat{f}=(\hat{f}_0,\hat{f}_1,\ldots,\hat{f}_{\frac{n}{2}-1})$ where $\hat{f}_i \in \mathbb{Z}_q[x]/(  x^{2} - \zeta^{\tau(i)})$ is a linear polynomial, $i=0,1,\ldots,\frac{n}{2}-1$.

The inverse NTT transform ($INTT$) can be obtained by inverting these procedures. In this case, the point-wise multiplication (``$\circ$'') is the corresponding linear polynomial multiplication in $\mathbb{Z}_q[x]/(  x^{2} - \zeta^{\tau(i)}),i=0,1,\ldots,\frac{n}{2}-1$.

As for the mixed-radix NTT-based polynomial multiplication with respect to $h=f \cdot g$, it is computed by $h=INTT(NTT(f) \circ NTT(g))$. In addition, as for the mixed-radix NTT-based polynomial division with respect to $h=g/f$ (i.e., computing the public key in this paper), it is essentially to compute $h= INTT( \hat{g} \circ \hat{f}^{-1} )$. Here,  $\hat{g}=NTT(g)$, $\hat{f}=NTT(f)$, and $\hat{f}^{-1}= (\hat{f}_0^{-1},\hat{f}_1^{-1},\ldots,\hat{f}_{\frac{n}{2}-1}^{-1})$ where $\hat{f}_i^{-1}$ is the inverse of $\hat{f}_i$ in $\mathbb{Z}_q[x]/(  x^{2} - \zeta^{\tau(i)})$, if each $\hat{f}_i^{-1}$ exits, $i=0,1,\ldots,\frac{n}{2}-1$.

\subsubsection{The pure radix-2 NTT}

Similar techniques can be utilized to $\mathbb{Z}_{q}[x]/(x^{n} - x^{n/2} + 1)$ w.r.t. $(n=512,q=3457)$ and $(n=1024,q=3457)$.
As for both of them, only the steps of the radix-2 FFT trick are required, due to the power-of-two $n$. Note that for these two $n$, there only exits the $384$-th primitive root of unity $\zeta$ in $\mathbb{Z}_{q}$. Therefore, $\mathbb{Z}_q[x]/(  x^{n} - x^{n /2} + 1 )$ can be decomposed into $\prod\limits_{i=0}^{n/4-1}  { \mathbb{Z}_q[x]/(  x^{4} - \zeta^{\tau(i)}  )  } $ for $n=512$, but into $\prod\limits_{i=0}^{n/8-1}  { \mathbb{Z}_q[x]/(  x^{8} - \zeta^{\tau(i)}  )  } $ for $n=1024$. Thus, the point-wise multiplication and the base case inversion are aimed at the corresponding polynomials of degree 3 for $n=512$ (degree 7 for $n=1024$).

\subsection{The Unified NTT}\label{sec-unified-ntt}

As mentioned above, in order to achieve an efficient implementation, we conduct 6 steps of radix-2 FFT trick for $n \in \{512,768,1024\}$ and one more step of radix-3 FFT trick for $n=768$. Therefore, for different $n$'s, the various (mixed-radix and radix-2) NTT algorithms are required for three types of parameter sets. However, for many applications or platforms, all the three types parameter sets may need to be implemented. In order to deal with the inconvenient issue, we present a unified NTT methodology over $\mathbb{Z}_q[x]/(x^n-x^{n/2}+1)$, $n\in \{512,768,1024\}$, such that only one type of NTT computation is required for different $n$'s, which is useful for modular and unified implementations when all the three parameter sets are required.

In this work, we consider $n=\alpha \cdot N$, where $\alpha\in\{2,3,4\}$ is called the splitting-parameter and $N$ is a power of two. In fact, $\alpha$ can be chosen more freely as arbitrary values of the form $2^i 3^j,i\geq 0,j\geq 0$. With the traditional NTT technique, when the dimension $n$ changes we need to use different NTT algorithms of various input/output lengths to compute polynomial multiplications over $\mathbb{Z}_q[x]/(x^n -x^{n/2} + 1)$. This causes much inconvenience to software and particularly hardware implementations.  To address this issue, we unify the  various $n$-point NTTs through an $N$-point NTT, which is referred to as the unified NTT technique.  For $n\in \{512,768,1024\}$, we fix $N=256$ and choose $\alpha\in\{2,3,4\}$. With this technique, we only focus on the implementation of the  $N$-point NTT, which serves as the unified procedure to be invoked for different $n$'s. Specifically, the computation of NTT over $\mathbb{Z}_q[x]/(x^n -x^{n/2} + 1)$ is divided into three steps. For presentation simplicity, we only give the procedures of the forward transform as follows, since the inverse transform can be obtained by inverting these procedures. The map road is shown in Figure~\ref{fig-crt-split-map}.

%In order to simply the software and hardware implementation, we claim that in this case various $n$-point NTTs can be computed through a unified $N$-point NTT. Only such one type of $N$-point NTT needs to be well implemented. For example, as for $n\in \{512,768,1024\}$ used in our scheme, we fix $N=256$ and choose $\alpha\in\{2,3,4\}$.

%The motivation of proposing our unified NTT is that as the dimension $n$ changes, we need to use NTT algorithms with various input/output lengths to compute polynomial multiplications over $\mathbb{Z}_q[x]/(x^n -x^{n/2} + 1)$. An inconvenient issue is that we have to equip various NTT algorithms for different $n$'s. For software implementation, different NTT codes are needed for various degree $n$, which accounts for extra codes storage requirements. For hardware implementation, especially, different NTT module costs more computing resources.

%In order to simply the software and hardware implementation, we claim that in this case various $n$-point NTTs can be computed through a unified $N$-point NTT. Only such one type of $N$-point NTT needs to be well implemented. For example, as for $n\in \{512,768,1024\}$ used in our scheme, we fix $N=256$ and choose $\alpha\in\{2,3,4\}$.

%\vspace{2ex}
\begin{figure}[!t]
	\centering
	\includegraphics[width=0.7\linewidth]{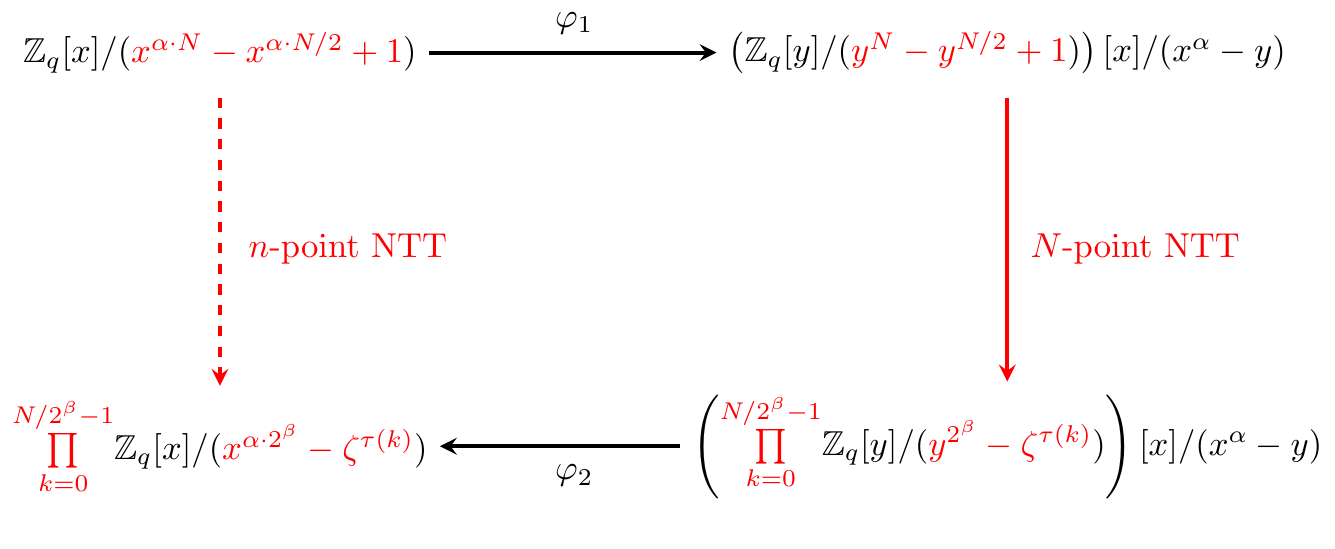}
	\vspace{-0.8cm}
	\caption{Map road for unified NTT}
	\label{fig-crt-split-map}
	\vspace{-0.5cm}
\end{figure}

\textbf{Step 1.} Construct a splitting-polynomial map $	\varphi_1 : $
{\small
	\begin{equation*}
	\begin{split}
	\mathbb{Z}_q[x]/( x^{\alpha \cdot N} - x^{\alpha \cdot N /2} + 1) & \rightarrow  \left(\mathbb{Z}_q[y]/(  y^{N} - y^{N /2} + 1 ) \right)[x]/ (x^{\alpha} - y )\\
	f = \sum\limits_{i=0}^{\alpha\cdot N -1}{ f_i x^i} & \mapsto \sum_{j=0}^{\alpha-1}{ F_j  x^j }
	\end{split}
	\end{equation*}
}where $F_j = \sum\limits_{i=0}^{N-1}{ f_{\alpha \cdot i + j } y^i} \in \mathbb{Z}_q[y]/(  y^{N} - y^{N /2} + 1 )$. Namely, the $n$-dimension polynomial is split into $\alpha$ $N$-dimension sub-polynomials.

\textbf{Step 2.} Apply the unified $N$-point NTT to $F_j$ over $\mathbb{Z}_q[y]/(  y^{N} - y^{N /2} + 1 )$, $j=0,1,\ldots,\alpha-1$. Specifically, inspired by NTTRU~\cite{nttru-LS19}, there is a mapping such that $\mathbb{Z}_q[y]/(y^{N} - y^{N /2} + 1 ) \  \cong  \  \mathbb{Z}_q[y]/( y^{N/2} - \zeta_{1})  \times \mathbb{Z}_q[y]/( y^{N/2} - \zeta_{2})$ where $\zeta_{1} + \zeta_{2}=1$ and $ \zeta_{1} \cdot \zeta_{2}=1$.  Let $q$ be the prime satisfying $\frac{3N}{2^\beta} | (q-1) $, where $\beta \in \mathbb{N}$ is called the truncating-parameter, such that it exits the primitive $\frac{3N}{2^\beta}$-th root of unity $\zeta$ in $\mathbb{Z}_q$. To apply the radix-2 FFT trick, we choose $\zeta_{1} = \zeta^{N/2^{\beta+1}} \bmod q$ and $ \zeta_{2} = \zeta_{1}^5 = \zeta^{5N/2^{\beta+1}} \bmod q$. Thus, both $y^{N/2} - \zeta_{1}$ and $y^{N/2} - \zeta_{2}$ can be recursively split down into degree-$2^\beta$ terms like $y^{2^\beta} \pm \zeta$. The idea of truncating FFT trick originates from~\cite{incomplete-ntt-moenck76}. Therefore, $\mathbb{Z}_q[y]/(  y^{N} - y^{N /2} + 1 )$ can be decomposed into $\prod\limits_{k=0}^{N/2^{\beta}-1}  { \mathbb{Z}_q[y]/(  y^{2^\beta} - \zeta^{\tau(k)}  )  } $, where $\tau(k)$ is the power of $\zeta$ of the $k$-th term and we start the index $k$ from zero. Let $\hat{F}_j$ be the NTT result of $F_j$ and $\hat{F}_{j,l}$ be its $l$-th coefficient, $l=0,1,\ldots,N-1$. Hence, we can write
{\small
	\begin{equation*}
	\begin{split}
	\hat{F}_j=( \sum\limits_{l=0}^{2^\beta -1 }{ \hat{F}_{j,l} y^{l} },  \sum\limits_{l=0}^{2^{\beta} -1 }{ \hat{F}_{j,l+2^\beta} y^{l} },\ldots,  \sum\limits_{l=0}^{2^{\beta} -1 }{ \hat{F}_{j,l+N-2^\beta} y^{l} } )
	\in \prod\limits_{k=0}^{N/2^{\beta}-1} { \mathbb{Z}_q[y]/(  y^{2^\beta} - \zeta^{\tau(k)}  ) }
	\end{split}
	\end{equation*}
}
\textbf{Step 3.} Combine the intermediate values and obtain the final result by the map $\varphi_2 :$
{\footnotesize
	\begin{equation*}
	\begin{split}
	\left(  \prod\limits_{k=0}^{N/2^{\beta}-1} { \mathbb{Z}_q[y]/( y^{2^\beta} - \zeta^{\tau(k)} )  } \right) [x]/(x^{\alpha} - y) & \rightarrow  \prod\limits_{k=0}^{N/2^{\beta}-1}{ \mathbb{Z}_q[x]/(x^{\alpha \cdot 2^\beta} - \zeta^{\tau(k)})  }\\
	\sum_{j=0}^{\alpha-1}{ \hat{F}_j  x^j } & \mapsto  \hat{f}
	\end{split}
	\end{equation*}
}
where $\hat{f} = \sum\limits_{i=0}^{\alpha\cdot N -1}{ \hat{f}_i x^i} $ is the NTT result of $f$. Its $i$-th coefficient is $\hat{f}_i = \hat{F}_{j,l}$, where $j =i\bmod\alpha$ and $l =\lfloor \frac{i}{\alpha}\rfloor $. It can be rewritten as:
{\small
	\begin{equation}\label{equ-unified-ntt-result-of-f-in-multi-ring}
	\begin{split}
	\hat{f}=( \sum\limits_{i=0}^{\alpha \cdot 2^\beta -1 }{ \hat{f}_i x^i}, \sum\limits_{i=0}^{\alpha \cdot 2^\beta -1 }{ \hat{f}_{i+\alpha \cdot 2^\beta} x^i} ,\ldots, \sum\limits_{i=0}^{\alpha \cdot 2^\beta -1 }{ \hat{f}_{i+n-\alpha \cdot 2^\beta} x^i}  )
	\in \prod\limits_{k=0}^{N/2^{\beta}-1}{ \mathbb{Z}_q[x]/(x^{\alpha \cdot 2^\beta} - \zeta^{\tau(k)})  }
	\end{split}
	\end{equation}
}

%Considering the above points, in our instantiation,
In this work, we choose $\beta=1$ and $q=3457$, where the primitive $384$-th root of unity $\zeta=55$ exits in $\mathbb{Z}_{3457}$. In this case, the point-wise multiplication is the corresponding $2\alpha$-dimension polynomial multiplication in $\mathbb{Z}_q[x]/(x^{2\alpha } - \zeta^{\tau(k)}),\alpha \in \{2,3,4\},k=0,1,\ldots,N/2-1$.

%\subsubsection{The choice of $\beta$ and $q$.}
%
%
%To determine the truncating-parameter $\beta$ and the modulus $q$ in our schemes, there are some important points to be considered.
%\begin{itemize}
%	\item $\beta$ and $q$ should satisfy the condition $\frac{3N}{2^{\beta}} | (q-1)$, where we set $N=256$.
%	\item In order to reduce the sizes of public key and ciphertext, $q$ is as small as possible.
%	\item  It could yield low enough error probability and strong security level, once $\beta$ and $q$ are determined.
%\end{itemize}
%
%Considering the above points, in our instantiation, we choose $\beta=1$ and $q=3457$, and the primitive $384$-th root of unity $\zeta=55$ exits in $\mathbb{Z}_{3457}$. In this case, the point-wise multiplication is the corresponding $2\alpha$-dimension polynomial multiplication in $\mathbb{Z}_q[x]/(x^{2\alpha } - \zeta^{\tau(k)}),\alpha \in \{2,3,4\},k=0,1,\ldots,N/2-1$.

\subsection{Discussions}

As for the application scenarios w.r.t. $n=768$, the polynomial multiplication and division over $\mathbb{Z}_q[x]/(x^n-x^{n/2}+1)$ can be efficiently improved with the aid of the mixed-radix NTT (the benchmark results are shown in section~\ref{sec-benchmark-comparison}). But, this type of NTT can not be implemented universally and modularly for more general $n$'s, since it can be only applied in the case of $n=3 \cdot 2^{e}$ for some integer $e$, instead of power-of-two $n$ like $512$ and $1024$. Hence, for our CTRU and CNTR, three various NTT algorithms would be needed for the three recommended parameter sets. The unified NTT can overcome the inconvenient issue. Note that in the base case inversion of the unified NTT-based polynomial division, we need to compute the inverses of degree-3, degree-5 and degree-7 polynomials for $n=512,768,1024$ respectively, which are more complicated than that of linear polynomial in the mixed-radix NTT-based polynomial division. It causes that the unified NTT performs less efficiently in the KeyGen algorithm.

However, most application scenarios and cryptographic devices are usually equipped with three recommended parameter sets ($n=512,768,1024$) for the targeted security levels. In these cases, the unified NTT can lead to modular and simplified software and hardware implementation. Note that in practice the KeyGen algorithm is run once and for all, and its computational cost is less sensitive to most cryptographic applications. As KeyGen is less frequently run, we have taken priority on simple, unified and modular implementations, i.e., unified NTT. In addition, when compared to NTRU-HRSS and SNTRU Prime, the KeyGen algorithm of CTRU and CNTR can be still faster in the context of the unified NTT.

\subsection{Base Case Inversion}

The inverse of the polynomials in $\mathbb{Z}_q[x]/(x^{i} - \zeta^{j}),i,j \ge 0$ can be computed by Cramer's Rule~\cite{cramer-rule-linear-algebra-book}. Take $\mathbb{Z}_q[x]/(x^{4} - \zeta) $ as an example. Let $f$ be a degree-3 polynomial in $\mathbb{Z}_q[x]/(x^{4} - \zeta)$, and denote its inverse by $f'$, which implies $f \cdot f' =1  \bmod x^{4 } - \zeta$. It can be written in the form of matrix-vector multiplication:

\begin{equation}\label{equ-compute-inverse-of-f}
\left[
\begin{array}{cccc}
f_{0} & \ \zeta f_{3} & \ \zeta f_{2} & \ \zeta f_{1} \\
f_{1} & \    f_{0}    & \ \zeta f_{3} & \ \zeta f_{2} \\
f_{2} & \    f_{1}    & \    f_{0}    & \ \zeta f_{3} \\
f_{3} & \    f_{2}    & \    f_{1}    & \    f_{0}
\end{array}
\right]
\cdot
\left[
\begin{array}{c}
f'_{0}\\
f'_{1}\\
f'_{2}\\
f'_{3}
\end{array}
\right]
=
\left[
\begin{array}{c}
1\\
0\\
0\\
0
\end{array}
\right] .
\end{equation}

Let $\Delta$ be the determinant of the coefficient matrix. Hence, the inverse of $f$ exits if and only if $\Delta \ne 0 $. In this case, according to Cramer's Rule, there is a unique $f'$, whose individual components are given by
\begin{equation}
f'_i = \cfrac{\Delta_i}{\Delta}, i=0,1,2,3
\end{equation}
where $\Delta_i$ is the determinant of the matrix generated by replacing the $(i+1)$-th column of the coefficient matrix with $(1,0,0,0)^T$. And  $\Delta^{-1}$ can be computed by using Fermat's Little Theorem, i.e., $\Delta^{-1} \equiv \Delta^{q-2} \bmod q$.

\subsection{Multi-moduli NTT}\label{sec-ntt-unfriendly-ring}

Although directly-using NTT is invalid over $\mathcal{R}_{q_2}$ w.r.t. power-of-two $q_2$ in the decryption process of CTRU and CNTR, the work~\cite{ntt-unfriendly-ring-CHK21,multi-moduli-ACC22} show that it is still possible to conduct an efficient multi-moduli NTT over $\mathcal{R}_{q_2}$. Briefly speaking, according to ~\cite{ntt-unfriendly-ring-CHK21}, the polynomial multiplication over $\mathcal{R}_{q_2}$ can be lifted to that over $\mathcal{R}_{Q}=\mathbb{Z}_{Q}[x]/(x^n-x^{n/2}+1)$, where $Q$ is a positive integer and larger than the maximum absolute value of the coefficients during the computation over $\mathbb{Z}$. Then, one can recover the targeted product polynomial through reduction modulo $q_2$. We choose $Q=qq'$ where $q=3457$ and $q'=7681$ in this paper. We have the CRT isomorphism: $\mathcal{R}_{Q} \cong \mathcal{R}_{q} \times \mathcal{R}_{q'}$. The first NTT algorithm w.r.t ($n,q$) over $\mathcal{R}_{q}$ can be instanced as the mixed-radix NTT in section~\ref{sec-mixed-radix-ntt} or the unified NTT in section~\ref{sec-unified-ntt}. The second NTT algorithm w.r.t. ($n,q'$) over $\mathcal{R}_{q'}$ can be followed from that in NTTRU~\cite{nttru-LS19}. After using NTTs to compute two intermediate products in $\mathcal{R}_{q}$ and $\mathcal{R}_{q'}$ respectively, the targeted product in $\mathcal{R}_{Q}$ can be recovered via CRT.

%%============================================================================================
%
%
%                                 Implementation
%
%
%%============================================================================================

\section{Implementation}\label{sec-implementation}

In this section, the remaining details of the implementations of our schemes are provided, including our portable C implementation, as well as optimized implementations with AVX2 instruction sets. All the implementations are carefully protected against timing attacks.

\subsection{Symmetric Primitives}

All the hash functions are instantiated with functions from \textbf{SHA-3} family. To generate the secret polynomials, i.e., $f',g,r,e$, the secret seeds are needed to be expanded to the sampling randomness by using \textbf{SHAKE-128}. The hash function $\mathcal{H}$ is instantiated with \textbf{SHA3-512}, aiming to hash the short prefix of public key $ID(pk)$ and message $m$ into the 64 bytes where the first 32 bytes are used to generate the shared keys and the later 32 bytes are used as the secret seed for the encryption algorithm.

\subsection{Generation of Secret Polynomials}

All the secret polynomials in our schemes are sampled according to the centered binomial distribution $B_{\eta}$. Each of them totally requires $2n\eta$ bits, or saying $2n\eta/8$ bytes, as sampling randomness, which are produced from the output of \textbf{SHAKE-128} with a secret seed as an input. To generate each coefficient of an secret polynomial, we arrange the adjacent independent $2\eta$ random bits and subtract the Hamming weight of the most significant $\eta$ bits from the Hamming weight of the least significant $\eta$ bits.

\subsection{The Keys and Ciphertexts}

\textbf{The format of the public key.}
The public key is transmitted in the NTT representation, as in the works like~\cite{kyber-BDK+18,kyber-nist-round3,newhope-usenix-ADPS16,nttru-LS19}. Specifically, treat the public key as $\hat{h}= \hat{g} \circ \hat{f}^{-1} $, which saves an inverse transform in the key generation and a forward transform in the encryption (re-run in the decapsulation). The coefficients of $\hat{h}$ are reduced modulo $q$ into $\mathbb{Z}_q$, causing that each coefficient occupies 12 bits. Therefore, the public key is packed into an array of $12n$ bits, i.e., $3n/2$ bytes in total.

\textbf{The format of the secret key.}
Note that the polynomial $f$ has coefficients in normal representation of $[-2\eta,2\eta+1]$ where $\eta$ is the parameter of the centered binomial distribution $B_{\eta}$. Instead of directly packing the polynomial $f$ into bytes array, we subtract each coefficient from  $2\eta+1$, making sure that all the coefficients are in $[0,4\eta+1]$. We pack the resulting polynomial into $n \lceil \log(4 \eta+1) \rceil/8$ bytes. The initial coefficient can be recovered by being subtracted from $2\eta+1$ in the unpack step in decryption. Since the public key is needed in the re-encryption during the decapsulation, we simply concatenate and store the packed public key as part of the secret key. An extra 32-byte $z$ is also concatenated, since $z$ is used to derive a pseudo-random key as output of implicit rejection if re-encryption does not succeed. The total size of a decapsulation secret key contains $n \lceil\lfloor \log(4 \eta+1) \rceil/8+3n/2+32$ bytes.

\textbf{The format of the ciphertext.}
The ciphertexts of our schemes consist of only one (compressed) polynomial $c$. The polynomial $c$ is in normal representation instead of NTT representation, since the compression through rounding has to work in normal representation. Each coefficient of $c$ occupied $\lceil \log(q_2) \rceil$ bits. Thus, to pack and store such a ciphertext only costs $n \lceil \log(q_2) \rceil/8$ bytes.

\textbf{The prefix of the public key.}
As for the prefix $ID(pk)$ of the public key $h$ in CTRU and CNTR, we use the first 33 bytes of the bit-packed NTT representation of $h$. It is reasonable, since $h$ is computationally indistinguishable from a uniformly random polynomial in $\mathcal{R}_q$ and the forward NTT transform keeps the randomness property (i.e., $h$ is uniformly random, so is $\hat{h}=NTT(h)$). Thus, the first 22 coefficients of the public key have the min-entropy of more than 256 bits and occupy 33 bytes in the bit-packed NTT representation since each coefficient has 12 bits.

\subsection{Portable C Implementation}

Our portable C implementations rely on 16-bit and 32-bit integer arithmetic mainly, excluding any floating-point arithmetic. The polynomials are represented as arrays of 16-bit signed integers. It is reasonable since we use a 12-bit prime. Our implementation of decoding algorithm of the scalable $\text{E}_8$ lattice follows the methodology in~\cite{akcn-e8-JSZ22} which is based on 32-bit integer arithmetic, but with high developments on constant-time skills for security and simplicity.

\textbf{NTT implementation.}
Our C implementations of NTTs do not make use of variable-time operator ``\%'' for the modular reductions, but we turn to use Barrett reduction~\cite{barrett-reduce,sei18} and Montgomery reduction~\cite{montgomery-reduce,sei18}, where the former is applied after additions and the later is applied for multiplication between coefficients with primitive roots. But we only use the signed variants of these two reductions as described in~\cite{sei18}. Lazy reduction strategy~\cite{sei18} is suitable for the forward NTT transform. Note that the output range of Montgomery reduction is in $[-q,q]$. For 12-bit coefficients of the input polynomial, after 7-level FFT tricks, the forward NTT transform outputs the polynomials with coefficients in $[-8q,8q]$, which does not overflow the valid representation of a 16-bit signed integer in the context of 12-bit $q$. However, NTT is invalid in $\mathcal{R}_{q_2}$ w.r.t. power-of-two $q_2$ in the decryption process, so we turn to the schoolbook algorithm to compute $c f \bmod^{\pm} q_2$ for efficiency and simplicity in the portable C implementation, where the modular reduction w.r.t. power-of-two $q_2$ can be implemented by logical AND operations efficiently.

\subsection{Optimized AVX2 Implementation}

The optimized implementations of our schemes for CPUs which support the AVX2 instruction sets are provided. The main optimized targets are polynomial arithmetic, sampling secrets and modular reduction algorithms in NTT, all of which are the time-consuming operations.
However, as for \textbf{SHA-3} hash functions, we do not have any AVX2-based optimization. Consistently, we use the same source codes as in portable C implementation. This is because the vectorized implementations of \textbf{SHA-3} hash functions are not very helpful for accelerating, and the fastest implementation is based on C language~\cite{supercop,newhope-usenix-ADPS16}. As for the computation of $c f \bmod^{\pm} q_2$ w.r.t. power-of-two $q_2$ in the decryption process of CTRU and CNTR, we choose the multi-moduli NTT (see section~\ref{sec-ntt-unfriendly-ring}), instead of the schoolbook algorithm, which is different from that in the portable C implementation. Besides, according to our experiments on a full polynomial multiplication over $\mathcal{R}_{q_2}$, compared to the schoolbook algorithm, the multi-moduli NTT is slower in the context of C implementation, but is faster in the context of AVX2 implementation, for which NTT is suitable for vectorized implementation, especially AVX2.

\textbf{NTT optimizations.}
Our AVX2-based NTT implementation handles 16-bit signed integer coefficients, every 16 values of which are loaded into one vector register. Load and store instructions are time-consuming in AVX2 instruction set. To accelerate AVX2 implementation, we need to reduce the memory access operations. We present some implementation strategies to fully utilize vector registers and minimize total CPU cycles. For the radix-2 FFT trick, we merge the first three levels and the following three levels. During the merging levels there is no extra load or store operations. We achieve this by using different pair of vector registers and permutating coefficients order.  The instructions we use for permutation task are \textsf{vperm2i128}, \textsf{vpunpcklqdq}, \textsf{vpblendd} and \textsf{vpblendw}. The coefficients are permutated in levels 3-5. After the radix-2 FFT trick, we do not choose to store the coefficients immediately, instead we use the vector register to complete the radix-3 FFT trick if necessary. It's because after the last level of the radix-2 FFT trick, the order of coefficients in the vector register is naturally the order we need in the radix-3 FFT trick. To reduce total permutation time, we propose to store coefficients in a shuffled order. For polynomial point-wise multiplication and polynomial inversion, pairwise modular multiplications are involved, since we store coefficients in the shuffled order. The two continual coefficients are stored in two different vector registers, therefore, we can easily implement polynomial point-wise multiplication and polynomial inversion. We encapsulate some functional codes which are used multiple times during the whole process into \textsf{macro}, making the codes more concise and readable, as well as avoiding repetitive codes.

\subsection{Constant-time Implementation}\label{sec-constant-time}

We report on our constant-time implementation to avoid the potential timing attacks. Specifically, our implementations do not use any variable-time instructions to operate the secret data, do not use any branch depending on the secret data and do not access any memory at addresses depending on the secret data.

As for the modular reductions used in the NTTs, as described in~\cite{sei18,nttru-LS19}, both Barrett reduction~\cite{barrett-reduce,sei18} and Montgomery reduction~\cite{montgomery-reduce,nttru-LS19} used in our implementations are constant-time algorithms. Furthermore, the reduction algorithms are not specific to the modulus $q$.

As for the scalable $\text{E}_8$ lattice code, in our encoding algorithm, $\mathbf{k} \mathbf{H} \bmod 2$ can be computed efficiently by simple bitwise operations, which we have implemented with constant-time steps. For the implementation of the scalable $\text{E}_8$ decoding algorithms in Algorithm~\ref{algo-e8-decoding-e8} and Algorithm~\ref{algo-e8-decoding-c}, we present branching-free implementations. All the ``arg min'' statements and ``if'' conditional statements are implemented by constant-time bitwise operations.
In essence, these can be summarized as choosing the minimal value of two secrets in signed integer representation, which are defined as $\texttt{a},\texttt{b}$. This can be implemented without timing leakage of the secret data flow as: \texttt{c} = ((-(\texttt{r} \textsf{XOR} 1)) \textsf{AND} \texttt{a}) \textsf{XOR} ((-(\texttt{r} \textsf{AND} 1)) \textsf{AND} \texttt{b}), where $\texttt{r}\in \{0,1\}$ is the sign bit of the value $\texttt{b}-\texttt{a}$, \textsf{XOR} is the logical Exclusive OR operator, and \textsf{AND} is the logical AND operator. We emphasize that, although there exit a variety of error correction codes, including lattice codes, there are indeed inherent difficulties on constant-time implementations for most existing error correction codes. Take BCH code and LDPC code~\cite{bch-ldpc-newhope-simple} as examples, which are widely used in reality. BCH code does not enable a constant-time implementation for its decoding process, since it needs to locate the error bits by computing the syndrome and correct the error bits within its range of error correction capability, but these proceeds are not constant-time~\cite{bch-ldpc-non-constant-time}. LDPC code also does not have a constant-time decoding process, since its decoding process works under iterative steps which stop unless the errors are corrected or the iterations reach the maximum number~\cite{bch-ldpc-non-constant-time}. A similar situation happens to some lattice codes. For example, none has found constant-time implementations of decoding algorithms with respect to $BW_{32}$ lattice and $BW_{64}$ lattice~\cite{e8-lattice-decoding-book-CS13}. However, unlike those error correction codes, our scalable $\text{E}_8$ lattice code features constant-time encoding and decoding algorithms, enabling safe implementations against timing attacks.

%%============================================================================================
%
%
%                                 Benchmark and Comparison
%
%
%%============================================================================================

\section{Benchmark and Comparison}\label{sec-benchmark-comparison}

In this section, we provide the benchmark results of our CTRU and CNTR where we focus on the recommended parameter set of dimension 768, i.e., $(n=768, q=3457, q_2=2^{10}, \Psi_1=\Psi_2 = B_2)$ for CTRU-768 and $(n=768, q=3457, q_2=2^{10}, \Psi_1=\Psi_2 = B_3)$ for CNTR-768 under the applications of the mixed-radix NTT.
%Our source codes, together with the scripts used,  are (anonymously) available at \url{https://github.com/}.
All the benchmark tests are run on an Intel(R) Core(TM) i7-10510U CPU at 2.3GHz (16 GB memory) with Turbo Boost and Hyperthreading disabled. The operating system is Ubuntu 20.04 LTS with Linux Kernel 4.4.0 and the gcc version is 9.4.0. The compiler flag of our schemes is listed as follows: \emph{-Wall -march=native -mtune=native -O3 -fomit-frame-pointer -Wno-unknown-pragmas}. We run the corresponding KEM algorithms for 10,000 times and calculate the average CPU cycles. The benchmark results are shown in Table~\ref{tab-benchmark-of-schemes}, along with comparisons with other schemes. Concretely, we re-run the C source codes of parts of other schemes on the exact same system as CTRU and obtain the corresponding benchmark results for providing reasonable reference comparisons, but their state-of-the-art AVX2 benchmark results are directly taken from the literatures or SUPERCUP (the supercop-20220506 benchmarking run on a 3.0GHz Intel Xeon E3-1220 v6)~\cite{supercop}. Regarding the benchmark results in this section, we stress that this may be not an exhaustive benchmark ranking but serves as optional illustration that our schemes might perform reasonably well when compared to other schemes.

\begin{table*}
	%	\scriptsize
	\centering
	\caption{CPU cycles of KEMs (in kilo cycles).}
	\begin{threeparttable}
		\begin{tabular}{c|ccc|ccc}
			\hline
			       \multirow{2}{*}{Schemes}         &                       \multicolumn{3}{c|}{C}                       &                      \multicolumn{3}{c}{AVX2}                       \\ \cline{2-7}
			                                        &        KeyGen         &       Encaps        &        Decaps        &        KeyGen         &        Encaps        &        Decaps        \\ \hline
			   CTRU-768, Mixed-radix NTT (Ours)     &         117.5         &        63.4         &        134.6         &         10.6          &         11.7         &         35.7         \\
			   CNTR-768, Mixed-radix NTT (Ours)     &         118.4         &        64.9         &        133.1         &         12.8          &         10.6         &         35.4         \\ \hline
			     CTRU-768, Unified NTT (Ours)       &  $6.2 \times 10^{3}$  &        75.3         &        135.9         &          --           &          --          &          --          \\
			     CNTR-768, Unified NTT (Ours)       &  $6.3 \times 10^{3}$  &        78.0         &        137.0         &          --           &          --          &          --          \\ \hline
			               NTRU-HRSS                & $127.6 \times 10^{3}$ & $3.2 \times 10^{3}$ & $9.4 \times 10^{3}$  &         254.0         &         24.9         &         59.2         \\
			            SNTRU Prime-761             & $17.1 \times 10^{3}$  & $9.0 \times 10^{3}$ & $23.7 \times 10^{3}$ &         156.3         &         46.9         &         56.2         \\
			               Kyber-768                &         140.3         &        159.0        &        205.9         &         25.3          &         27.6         &         43.4         \\
			               Saber-768                &         94.7          &        109.7        &        138.9         &         64.2          &         69.3         &         95.3         \\ \hline
			CTRU-768, \textbf{SHA-2} variant (Ours) &         101.4         &        58.6         &        116.9         &          7.0          &         6.0          &         21.8         \\
			CNTR-768, \textbf{SHA-2} variant (Ours) &         104.6         &        60.0         &        114.2         &          8.4          &         6.0          &         23.6         \\
			                 NTTRU                  &         157.4         &        98.9         &        142.4         &          6.4          &         6.1          &         7.9          \\ \hline
			            BIKE (Level 3)              &          --           &         --          &          --          &  $1.7 \times 10^{3}$  &        267.1         & $5.3 \times 10^{3}$  \\
			        Classic McEliece460896          &          --           &         --          &          --          & $150.0 \times 10^{3}$ &         77.3         &        253.9         \\
			                HQC-192                 &          --           &         --          &          --          &         417.8         &        719.7         & $1.2 \times 10^{3}$  \\
			               SIKEp610                 &          --           &         --          &          --          & $15.0 \times 10^{3}$  & $27.2 \times 10^{3}$ & $27.7 \times 10^{3}$ \\ \hline
		\end{tabular}

	\end{threeparttable}
	\label{tab-benchmark-of-schemes}
\end{table*}

\subsection{Comparison with Other NTRU-based KEM Schemes}

The C source codes of NTRU-HRSS and SNTRU-Prime are taken from their Round 3 supporting documentations, while those of NTTRU are taken from~\cite{nttru-LS19}. One regret is that the source codes of $\text{NTRU-C}_{3457}^{768}$ are not online available in~\cite{ntru-variant-eprint-1352-DHK21}, and the AVX2 benchmark results of $\text{NTRU-C}_{3457}^{768}$ are absent in~\cite{ntru-variant-eprint-1352-DHK21}, so all of its benchmark results are omitted here.

Note that the work~\cite{ntt-unfriendly-ring-CHK21} shows how to apply multi-moduli NTT to accelerate the polynomial multiplications in NTRU-HRSS, but the polynomial divisions remain unchanged. However, the resulting speed-up for NTRU-HRSS in~\cite{ntt-unfriendly-ring-CHK21} is not obvious (in fact, the speed-up is $\pm 0\%$). Hence, we omit the benchmark results about NTRU-HRSS provided in~\cite{ntt-unfriendly-ring-CHK21}. Consequently, the benchmark results of the state-of-the-art AVX2 implementation of NTRU-HRSS are reported in~\cite{supercop}. As for SNTRU Prime-761, its state-of-the-art AVX2 implementation is presented in~\cite{opensslntru-BBCT22}. Thus, we take the AVX2 benchmark results of NTRU-HRSS and SNTRU Prime-761 from~\cite{supercop} and~\cite{opensslntru-BBCT22}, respectively. The AVX2 benchmark results of NTTRU are taken from~\cite{nttru-LS19}.

When compared to NTRU-HRSS and SNTRU Prime-761, the efficiency improvements of CTRU-768 and CNTR-768 are benefited from the applications of NTT in polynomial operations. For example, as for portable C implementation, CTRU-768 is faster than NTRU-HRSS by 1,000X in KeyGen, 50X in Encaps, and 69X in Decaps, respectively. As for optimized AVX2 implementation, CTRU-768 is faster than NTRU-HRSS by 23X in KeyGen, 2.1X in Encaps, and 1.6X in Decaps, respectively; CNTR-768 is faster than NTRU-HRSS by 19X in KeyGen, 2.3X in Encaps, and 1.6X in Decaps, respectively.

When compared to NTTRU fairly, we do the following modifications to present vaianrts of CTRU and CNTR, which have the same hash functions, symmetric primitives and FO transformations as NTTRU: (1) use \textbf{SHA-2} family to instantiate hash functions; (2) use \textbf{AES} to expand seeds; (3) change the FO transformation into $\text{FO}_{m}^{\bot}$.
The AVX2 benchmark results of the KeyGen of CTRU/CNTR variants are close to those of NTTRU, whereas the Encaps results of CTRU/CNTR variants are faster than that of NTTRU. However, the Decaps results of CTRU/CNTR variants are slightly slower than that of NTTRU, on the following grounds: (1) the decoding algorithm of the scalable $\text{E}_8$ lattice code costs extra time; (2) multi-moduli NTT is more time-consuming than the NTT algorithm of NTTRU, since multi-moduli NTT essentially consists of two routines of NTT algorithms.

\subsection{Comparison with Other Lattice-based KEM Schemes}

The C source codes of Kyber-768 and Saber-768 are taken from their Round 3 supporting documentations. We modify their FO transformation into $\text{FO}_{ID(pk),m}^{\not\bot}$, and re-run their C source codes. The work~\cite{fo-transform-prefix-hash-DHK+21} has reported the state-of-the-art AVX2 implementation of Kyber and Saber with $\text{FO}_{ID(pk),m}^{\not\bot}$. Thus, we take their AVX2 benchmark results from~\cite{fo-transform-prefix-hash-DHK+21} directly.

As shown in Table~\ref{tab-benchmark-of-schemes}, our CTRU-768 and CNTR-768 outperform both Kyber-768 and Saber-768. For example, when compared to the state-of-the-art AVX2 implementation of Kyber-768, CTRU-768 is faster by 2.3X in KeyGen, 2.3X in Encaps, and 1.2X in Decaps, respectively; CNTR-768 is faster by 1.9X in KeyGen, 2.6X in Encaps, and 1.2X in Decaps, respectively. It is due to the following reasons: (1) in Kyber, there are a rejection sampling to generate the matrix $\mathbf{A}$ and a complicated polynomial matrix-vector multiplication in the  key generation process and the encryption process of Kyber-768 (which are also re-run with the Decaps); (2) but there is only one polynomial multiplication in the encryption process of CTRU-768 and CNTR-768.

\subsection{Comparison with Other Non-lattice-based KEM Schemes}

We present a rough comparison with other non-lattice-based KEM schemes, i.e., BIKE~\cite{bike-nist-round3}, Classic McEliece~\cite{classic-mcEliece-nist-round3}, HQC~\cite{hqc-nist-round3} and SIKE~\cite{sike-nist-round3}, which are candidates advancing to the fourth round of NIST PQC~\cite{nist-round-4-submissions}. The first three KEM schemes are code-based, and the last one is isogeny-based. However, the SIKE team acknowledges that SIKE is insecure and should not be used~\cite{nist-round-4-submissions}. Nevertheless, the benchmark results of SIKE are still presented and only used for intuitive comparisons. We only present their state-of-the-art AVX2 benchmark results, which can be also found in SUPERCUP~\cite{supercop}. As shown in Table~\ref{tab-benchmark-of-schemes}, our schemes are much faster than these non-lattice-based KEM schemes. For example, CTRU-768 is faster by 160X in KeyGen, 6.6X in Encaps and 7.1X in Decaps than Classic McEliece460896.

\subsection{Benchmark Results with Unified NTT}

The C implementations of CTRU-768 and CNTR-768 with our unified NTT are also provided, whose benchmark results could be found in Table~\ref{tab-benchmark-of-schemes}. Although the overall performances of CTRU-768 and CNTR-768 with our unified NTT are inferior than those of CTRU-768 and CNTR-768 with mixed-radix NTT, we stress that the primary goal of the unified NTT is to provide a modular and convenient implementation, instead of a faster implementation. Note that an optimized AVX2 implementation of the unified NTT could be more precise to present benchmark results. But the AVX2 implementation is still a work in progress and we left as a future work.

%%============================================================================================
%
%
%                                    acknowledgements
%
%
%%============================================================================================

\section*{Acknowledgments}
We would like to thank Haodong Jiang, Yang Yu, and Zhongxiang Zheng for their helpful feedbacks on this work.

%%============================================================================================
%
%
%                                    appendix
%
%
%%============================================================================================

\appendices

%%============================================================================================
%
%
%                               CCA Security Reduction
%
%
%%============================================================================================

\section{On CCA Security Reduction of KEM in the ROM and the QROM}\label{app-sec-reduction}

Generic constructions of an efficient IND-CCA secure KEM are well studied in~\cite{fo-transform-Dent03,fo-transform-HHK17}, which are essentially various KEM variants of Fujisaki-Okamoto (FO) transformation~\cite{fo-transform-FO99} and GEM/REACT transformation~\cite{fo-transform-GEM02,fo-transform-REACT01}. The work~\cite{fo-transform-HHK17} gives a modular analysis of various FO transformations in the ROM and the QROM, and summarizes some practical FO transformations that are widely used to construct an IND-CCA secure KEM from a passive secure PKE (e.g., OW-CPA and IND-CPA), including the following transformations $\text{FO}^{\bot}$, $\text{FO}_{m}^{\bot}$, $\text{FO}^{\not\bot}$, $\text{FO}_{m}^{\not\bot}$, $\text{U}^{\not\bot}$ and $\text{U}_{m}^{\not\bot}$, etc, where $m$ (without $m$) means $K=H(m)$ ($K=H(m,c)$), $\not\perp$ ($\bot$) means implicit (explicit) rejection.

$\text{FO}^{\bot}$, $\text{FO}_{m}^{\bot}$, $\text{FO}^{\not\bot}$ and $\text{FO}_{m}^{\not\bot}$ are the most common transformations used in NIST PQC. According to~\cite{fo-transform-HHK17}, in the ROM, the reduction bound of these four transformations are all $\epsilon' \le \epsilon_{CPA} + q' \delta$ and $\epsilon' \le q' \epsilon_{OW} + q' \delta$, where $\epsilon'$ is the advantage of an adversary against IND-CCA security of KEM, $\epsilon_{CPA}$ ($\epsilon_{OW}$) is the the advantage of an adversary against IND-CPA (OW-CPA) security of the underlying PKE, $q'$ is the total number of hash queries, and $\delta$ is the error probability. Notice that in order to keep the comparison lucid, we ignore the small constant factors and additional inherent summands. The reduction is tight for IND-CPA secure PKE, but it has a loss factor $q'$ for OW-CPA secure PKE in the ROM. However, all of their reduction bounds in the QROM suffer from a quartic loss, i.e., $\epsilon' \le q' \sqrt{q'\sqrt{\epsilon_{OW}} + q'^2 \delta}$ with an additional hash in~\cite{fo-transform-HHK17}. Later, the bound of $\text{FO}^{\not\bot}$ is improved as follows: $\epsilon' \le q'\sqrt{\epsilon_{OW}} + q' \sqrt{\delta}$ without additional hash in~\cite{fo-transform-qrom-JZC18}, $\epsilon' \le \sqrt{q' \epsilon_{CPA}} + q' \sqrt{\delta}$ with semi-classical oracles~\cite{fo-transform-AHU19} in~\cite{fo-transform-JZM19b}, $\epsilon' \le \sqrt{q' \epsilon_{CPA}} + q'^2{\delta}$ with double-sided OW2H lemma in~\cite{fo-transform-BHH+19}, and $\epsilon' \le q'^2 \epsilon_{CPA} + q'^2{\delta}$ with measure-rewind-measure technique in~\cite{fo-transform-KSS+20}. The bound of $\text{FO}_{m}^{\not\bot}$  is improved as follows: $\epsilon' \le q'\sqrt{\epsilon_{OW}} + q' \sqrt{\delta}$ without additional hash in~\cite{fo-transform-qrom-JZC18}, $\epsilon' \le \sqrt{q' \epsilon_{CPA}} + q'^2{\delta}$ with disjoint simulatability in~\cite{fo-transform-HKSU20}, $\epsilon' \le \sqrt{q' \epsilon_{CPA}} + q'^2{\delta}$ with prefix hashing in~\cite{fo-transform-prefix-hash-DHK+21}. The bound of $\text{FO}_{m}^{\bot}$ is improved as follows: $\epsilon' \le q'\sqrt{\epsilon_{OW}} + q' \sqrt{\delta}$ and $\epsilon' \le \sqrt{q' \epsilon_{CPA}} + q'\sqrt{\delta}$ with extra hash in~\cite{fo-transform-JZM19a}, $\epsilon' \le q'\sqrt{\epsilon_{OW}} + q'^2 \sqrt{\delta}$ without extra hash in~\cite{fo-transform-online-extractability-DFMS21}.

There also exist some transformations with tight reduction for deterministic PKE (DPKE) with disjoint simulatability and perfect correctness, for example, a variant of $\text{U}_{m}^{\not\bot}$ proposed in~\cite{fo-transform-SXY18}. In the case that the underlying PKE is non-deterministic, all known bounds are of the form $O(\sqrt{q' \epsilon_{CPA}})$ and $O(q' \sqrt{\epsilon_{OW}})$ as we introduce above, with the exception of~\cite{fo-transform-KSS+20}. The work~\cite{fo-transform-JZM21} shows that the measurement-based reduction involving no rewinding will inevitably incur a quadratic loss of the security in the QROM. In another word, as for the underlying PKE, the IND-CPA secure PKE has a tighter reduction bound than the OW-CPA secure PKE. It also significantly leads us to construct an IND-CPA secure PKE for tighter reduction bound of the resulting IND-CCA secure KEM.

Some discussions are presented here for comparing the reduction bounds of CTRU and CNTR and other NTRU-based KEM schemes.
Most of the existing  NTRU-based encryption schemes can only achieve OW-CPA security. NTRU-HRSS and SNTRU Prime construct the KEM schemes from OW-CPA DPKEs via $\text{U}_{m}^{\not\bot}$ variants. Although they can reach tight CCA reductions with extra assumptions in the (Q)ROM~\cite{ntru-nist-round3,ntru-prime-nist-round3},  there is a disadvantage that  some  extra computation is needed to recover the randomness in the decryption algorithms.

Determinism is a much stricter condition, thus some  NTRU-based PKEs prefer to be non-deterministic (i.e., randomized). NTTRU applies $\text{FO}_{m}^{\bot}$ to build an IND-CCA KEM from an OW-CPA randomized PKE~\cite{nttru-LS19}. According to~\cite{fo-transform-HHK17,fo-transform-online-extractability-DFMS21}, its IND-CCA reduction bounds are not-tight in both the ROM ($O(q'\epsilon_{OW})$) and the QROM ( $O(q' \sqrt{\epsilon_{OW}})$).

NTRU-C is the general form of $\text{NTRU-C}_{3457}^{768}$. NTRU-C uses a slightly different way that it first constructs an IND-CPA PKE from an OW-CPA NTRU-based PKE via $\text{ACWC}_0$ transformation~\cite{ntru-variant-eprint-1352-DHK21}, and then transforms it into an IND-CCA KEM via $\text{FO}_{m}^{\bot}$. Note that $\text{ACWC}_0$ brings two terms of ciphertexts, where the extra term of ciphertexts costs 32 bytes. The IND-CPA security of the resulting after-$\text{ACWC}_0$ PKE can be tightly reduced to the OW-CPA security of the underlying before-$\text{ACWC}_0$ PKE in the ROM. However, there is a quadratic loss advantage in the QROM, i.e., $\epsilon_{CPA} \le q' \sqrt{\epsilon_{OW}}$. In the ROM, the advantage of the adversary against IND-CCA security of KEM is tightly reduced to that of the adversary against IND-CPA security of after-$\text{ACWC}_0$ PKE, and consequently   is tightly reduced to that of the adversary against OW-CPA security of before-$\text{ACWC}_0$ PKE. However, in the QROM, there is no known direct reduction proof about $\text{FO}_{m}^{\bot}$ from IND-CPA PKE to IND-CCA KEM without additional hash. The reduction bound of $\text{FO}_{m}^{\bot}$ in the QROM in~\cite{fo-transform-online-extractability-DFMS21} only aims at the underlying OW-CPA PKE. Since the IND-CPA security implies OW-CPA security~\cite{fo-transform-HHK17}, the reduction bound of IND-CCA KEM to before-$\text{ACWC}_0$ OW-CPA PKE will suffer from the  quartic advantage loss in the QROM. That is, if the adversary has $\epsilon_{OW}$ advantage against the before-$\text{ACWC}_0$ OW-CPA PKE, then it has $O(q'^{1.5} \sqrt[4]{\epsilon_{OW}})$ advantage against the resulting IND-CCA KEM in the QROM. On the other hand, with an additional hash, a better bound of $\text{FO}_{m}^{\bot}$ for after-$\text{ACWC}_0$ IND-CPA PKE can be achieved, i.e., $O(\sqrt{q' \epsilon_{CPA}})$ advantage against the resulting IND-CCA KEM in the QROM~\cite{fo-transform-JZM19a} at the cost of  some extra ciphertext burden. $\text{ACWC}_0$ also has an effect on the efficiency, since an extra transformation from OW-CPA PKE to IND-CPA PKE is also relatively time-consuming.

Our CTRU and CNTR seem to be more simple, compact, efficient and memory-saving than other NTRU-based KEM schemes, along with a tight bound in the ROM and a tighter bound in the QROM for IND-CCA security. When compared to NTRU-HRSS, SNTRU Prime and NTRU-C, an obvious efficiency improvement of our CTRU and CNTR is due to the fact that there is no extra requirement of recovering randomness in decryption algorithm or reinforced transformation to obtain IND-CPA security. CTRU/CNTR.PKE can achieve IND-CPA security in the case that its ciphertext can be only represented by a single polynomial, without any extra ciphertext term like NTRU-C. Starting from our IND-CPA PKE to construct KEM with $\text{FO}_{ID(pk),m}^{\not\bot}$, the reduction bound of IND-CCA security is tightly reduced to IND-CPA security in the ROM ($\epsilon' \le O(\epsilon_{CPA})$, restated), so it is tightly reduced to the underlying hardness assumptions. We also have the known best bound in the QROM ($\epsilon' \le O(\sqrt{ q' \epsilon_{CPA})}$, restated) according to~\cite{fo-transform-prefix-hash-DHK+21}, which is  better than those in NTTRU and NTRU-C.

\subsection{CCA Security in Multi-User Setting}

We remark that, the work~\cite{fo-transform-prefix-hash-DHK+21} originally gives the multi-user/challenge IND-CCA reduction bound of $\text{FO}_{ID(pk),m}^{\not\bot}$ in the ROM and the QROM. We adapt the results from Theorem 3.1 and Theorem 3.2 in~\cite{fo-transform-prefix-hash-DHK+21} into the single-user/challenge setting of CTRU and CNTR, which is only for ease of fair comparisons as other KEM schemes only utilize single-user/challenge FO transformations. As CTRU/CNTR.PKE is IND-CPA secure,  another advantage of using $\text{FO}_{ID(pk),m}^{\not\bot}$ is that CTRU and CNTR can be improved to enjoy the multi-user/challenge IND-CCA security as well.
%, although at present we only focus on the security in the single-user/challenge setting.
To address this issue, some adjustments are needed as follows. Unlike the single-user/challenge setting, the adversary (against the $n'$-user/$q_C$-challenge IND-CPA security of the underlying PKE) is given the public keys of $n'$ users, and is allowed to make at most $q_C$ challenge queries w.r.t.  the same challenge plaintext $m_b$ chosen by the challenger.
%where $b$ is the random coin chosen by the security game.
According to~\cite{fo-transform-prefix-hash-DHK+21}, based on the single-user/challenge IND-CPA security of the underlying PKE, the formal multi-user/challenge IND-CCA security of the resulting KEM is given in Theorem~\ref{thm-ntru-e8-kem-ind-cca-security-multi-user}.

\begin{theorem}[$n'$-user/$q_C$-challenge IND-CCA security in the ROM and the QROM~\cite{fo-transform-prefix-hash-DHK+21}]~\label{thm-ntru-e8-kem-ind-cca-security-multi-user}
	Following~\cite{fo-transform-prefix-hash-DHK+21}, we will use (or recall) the following terms in the concrete security statements.
	\begin{itemize}
		\item $n'$-user error probability $\delta(n')$~\cite{fo-transform-prefix-hash-DHK+21}.
		\item Min-entropy $\ell$~\cite{fo-transform-FO99} of $ID(pk)$, i.e., $\ell = H_{\infty}(ID(pk))$, where $(pk,sk)\leftarrow CTRU/CNTR.PKE.KeyGen$.
		\item Bit-length $\iota$ of the secret seed $z \in \{0,1\}^{\iota}$.
		\item Maximal number of (Q)RO queries $q_H$.		
		\item Maximal number of decapsulation queries $q_D$.
		\item Maximal number of challenge queries $q_C$.		
	\end{itemize}	
	For any (quantum) adversary $\mathsf{A}$ against the $(n',q_C)$-IND-CCA security of CTRU/CNTR.KEM, there exits a (quantum) adversary $\mathsf{B}$ against the $(n',q_C)$-IND-CPA security of CTRU/CNTR.PKE with roughly the same running time of $\mathsf{A}$, such that:
	\begin{itemize}
		\item In the ROM, it holds that $\textbf{Adv}_{\text{CTRU/CNTR.KEM}}^{(n',q_C)\text{-IND-CCA}}(\mathsf{A}) \le$
		$$ 2\left( \textbf{Adv}_{\text{CTRU/CNTR.PKE}}^{(n',q_C)\text{-IND-CPA}}(\mathsf{B})   + \frac{(q_H+q_C)q_C}{|\mathcal{M}|}  \right) + \frac{q_H}{2^{\iota}}  + (q_H+q_D) \delta(n') + \frac{n'^2}{2^{\ell}};$$
		\item In the QROM, it holds that $\textbf{Adv}_{\text{CTRU/CNTR.KEM}}^{(n',q_C)\text{-IND-CCA}}(\mathsf{A}) \le$		
		$$ 2 \sqrt{q_{HD} \textbf{Adv}_{\text{CTRU/CNTR.PKE}}^{(n',q_C)\text{-IND-CPA}}(\mathsf{B}) }+ 4 q_{HD}\sqrt{\frac{q_C \cdot n'}{|\mathcal{M}|}}+ 4(q_H+1)\sqrt{\frac{n'}{2^{\iota}}}
		+ 16 q_{HD}^{2}\delta(n') + \frac{q_C^2}{|\mathcal{M}|} +\frac{n'^2}{2^{\ell}},$$
		where $q_{HD}:=q_H + q_D+1$.
	\end{itemize}		
\end{theorem}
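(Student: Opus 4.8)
The plan is to prove Theorem~\ref{thm-ntru-e8-kem-ind-cca-security-multi-user} not from scratch but by instantiating the generic multi-user/multi-challenge analysis of the $\text{FO}_{ID(pk),m}^{\not\bot}$ transformation established in~\cite{fo-transform-prefix-hash-DHK+21} (their Theorem 3.1 for the ROM and Theorem 3.2 for the QROM). The first step is to observe that CTRU.KEM (resp. CNTR.KEM), as specified in Algorithm~\ref{algo-ntru-e8-kem-keygen}--\ref{algo-ntru-e8-kem-decaps}, is \emph{exactly} the output of applying $\text{FO}_{ID(pk),m}^{\not\bot}$ to CTRU.PKE (resp. CNTR.PKE): the shared key is $K = \mathcal{H}_1(ID(pk),m)$, the re-encryption test realizes the Fujisaki--Okamoto validity check, and the implicit-rejection branch returns $\tilde K = \mathcal{H}_1(ID(pk),z,c)$ with the $\iota$-bit seed $z$. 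Once this identification is made, the generic theorems apply as a black box, provided their hypotheses are met, and the two displayed inequalities are obtained by substitution.

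The second step is to verify the preconditions for our concrete instantiation. The central hypothesis is that the underlying PKE is $(n',q_C)$-IND-CPA secure; the generic reduction produces an adversary $\mathsf{B}$ against precisely this multi-user/challenge game, and I would separately note that $(n',q_C)$-IND-CPA of the PKE reduces by a standard hybrid over users and challenges to the single-user IND-CPA security already proved in Theorem~\ref{thm-ntru-e8-lwe-pke-ind-cpa-security} (and its CNTR analogue), so the right-hand side is ultimately controlled by the NTRU and RLWE/RLWR assumptions. The remaining inputs are purely syntactic quantities of the scheme: the message space $\mathcal{M}=\{0,1\}^{n/2}$ supplying $|\mathcal{M}|$, the seed length $\iota$, and the min-entropy $\ell=H_\infty(ID(pk))$. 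For $\ell$ I would invoke the argument of the implementation section, namely that the NTT representation $\hat h$ of $h$ is computationally indistinguishable from uniform so that the $33$-byte prefix $ID(pk)$ carries more than $256$ bits of min-entropy, which justifies treating the collision term $n'^2/2^\ell$ as negligible; the $q_H/2^\iota$ term is the usual cost of locating the implicit-rejection key via a query involving $z$.

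The third step is to pin down the correctness input $\delta(n')$, which I expect to be the main obstacle. The generic bounds charge a failure term $(q_H+q_D)\delta(n')$ in the ROM and $16q_{HD}^2\delta(n')$ in the QROM, where $\delta(n')$ is the worst-case-over-keys decryption-failure probability in the sense of~\cite{fo-transform-prefix-hash-DHK+21}. Here I would connect $\delta(n')$ to the per-ciphertext error probability $\delta$ analysed in Section~\ref{sec-error} through Theorems~\ref{thm-ntru-e8-lwe-correctness-analysis} and~\ref{lemma-ntru-e8-lwr-correctness-analysis}. The subtlety is that the FO failure term really measures the probability of \emph{finding} a ciphertext that triggers a decryption error, not merely the average error, so I must argue that the octet-wise condition $\|\text{Err}_i\|_{q,2}<\tfrac q2$ bounds this quantity uniformly and that a union bound over the $n'$ honestly generated keys inflates it only by the benign factor absorbed into $\delta(n')$. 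For the $\not\bot$ implicit-rejection variant I would additionally verify that the ciphertext distribution has sufficient min-entropy ($\gamma$-spreadness) where the cited analysis requires it, which follows because $\sigma=hr(+e)$ is pseudorandom over $\mathcal{R}_q$ and the rounded ciphertext inherits this entropy.

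Finally, with all parameters matched, the ROM and QROM inequalities follow by direct substitution into Theorems 3.1 and 3.2 of~\cite{fo-transform-prefix-hash-DHK+21}; in particular the square-root loss $2\sqrt{q_{HD}\,\textbf{Adv}_{\text{CTRU/CNTR.PKE}}^{(n',q_C)\text{-IND-CPA}}(\mathsf{B})}$ is inherited verbatim from their one-way-to-hiding analysis with prefix hashing, which I would not attempt to re-derive. The single-user statement of Theorem~\ref{thm-ntru-e8-kem-ind-cca-security} is then recovered as the special case $n'=q_C=1$.
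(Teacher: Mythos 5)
Your proposal is correct and follows essentially the same route as the paper, which itself states this theorem without a standalone proof by directly instantiating Theorems 3.1 and 3.2 of~\cite{fo-transform-prefix-hash-DHK+21} for the $\text{FO}_{ID(pk),m}^{\not\bot}$-transformed CTRU/CNTR.PKE, keeping $\delta(n')$, $\ell$, and $\iota$ as the abstract parameters of that generic analysis and noting only the adjustment to the $(n',q_C)$ game (the $n'$ public keys and $q_C$ challenge queries on the same $m_b$). Your additional diligence items --- the hybrid reduction from $(n',q_C)$-IND-CPA to the single-user Theorem~\ref{thm-ntru-e8-lwe-pke-ind-cpa-security}, the link between $\delta(n')$ and the octet-wise correctness bounds, and the min-entropy of the 33-byte prefix --- go beyond what the paper spells out but are consistent with it, and your cautious check of $\gamma$-spreadness is harmless since the implicit-rejection bounds quoted contain no spreadness term.
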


%%============================================================================================
%
%
%                                 S-unit attack
%
%
%%============================================================================================

\section{S-unit attack}\label{app-sec-s-unit-attack}

Here we refer to~\cite{BT21} to briefly introduce S-unit attack.

S-unit attack begins with a nonzero $v \in I$ and outputs $v/u$, but now $u$ is allowed to range over a larger subset of $K^{*}$ , specifically the group of S-units.

Here $S$ is a finite set of places, a subset of the set $V$ mentioned above. There are two types of places:

\begin{itemize}
	\item The ``infinite places'' are labeled $1, 3, 5,\ldots, n-1$, except that for $n = 1$ there is one infinite place labeled 1. The entry at place $j$ in $\log \alpha$ is defined as $2 \log | \sigma_j (\alpha ) |$, except that the factor 2 is omitted for $n = 1$. The set of all infinite places is denoted $\infty$, and is required to be a subset of $S$.
	
	\item For each nonzero prime ideal $P$ of $R$, there is a ``finite place'' which is labeled as $P$. The entry at place $P$ in $\log \alpha$ is defined as $-(ord_P \alpha) \log |(R/P)|$, where $ord_P \alpha$ is the exponent of $P$ in the factorization of $\alpha$ as the  product of powers of prime ideals. There are many choices of $S$ here. It focuses on the following form of $S$: choose a parameter $y$, and take $P \in S $ if and only if $|(R/P)| \le y$.
\end{itemize}

The group $U_S$ of S-units of $K$ is, by definition, the set of elements $u \in K^{*}$such that the vector $\log u$ is supported on $S$, i.e., it is 0 at every place outside $S$. The S-unit lattice is the lattice $\log U_S$, which has rank $|S - 1|$.

Short $v/u$ again corresponds to short $\log v - \log u$, but it is required to ensure that $v/u \in I$, i.e., $ord_P (v/u) \ge ord_P I$ for each finite place $P$. This was automatic for unit attacks but is not automatic for general S-unit attacks. One thus wants to find a vector $\log u$ in the S-unit lattice $\log U_S$ that is close to $\log v$ in the following sense: $\log u$ is close to $\log v $ at the infinite places, and $ord_P u$ is close to but no greater than $ord_Pv - ord_P I$.

As for closeness, as a preliminary step, if $ord_P v < ord_P I$ for some $P$, update $v$ by multiplying it by a generator of $P \hat{P}$ (or, if possible, of $P$) as explained above, and repeat this step. Then $v \in I$. Next, if some $u$ in the list has $ v/u$ shorter than $v$ and $v/u \in I,$ replace $v$ with $v/u$, and repeat this step. Output the final $v$.

As an extreme case, if $S = \infty$ (the smallest possible choice, not including any $ P$), then $U_S = R^{*}$ : the S-units of $K$ are the units of $ R$, the S-unit lattice is the unit lattice, and S-unit attacks are the same as unit attacks. Extending $S$ to include more and more prime ideals $P$ gives S-unit attacks the ability to modify more and more places in $\log v$.

\section{Generalization and More Variants}

Finally, to demonstrate the flexibility of our framework, we present and discuss some generalization approaches and more variants. The following approaches are applicable to both CTRU and CNTR.

\subsection{Compressing the Public Key $h$}

In general, let $q_1\leq q$ be an integer, we  set the public key to be $\hat{h}=\big\lfloor \frac{q_1}{q}  h \big\rceil \in \mathcal{R}_{q_1}$ in the KeyGen. This not only shortens the public key size, but also can strengthen the security of the NTRU assumption in general. Then, there are two approaches to deal with this change in PKE.Enc. %The PKE.Dec process can remains unchanged.

\begin{itemize}
	\item $\sigma=\hat{h}r \in \mathcal{R}_{q_1}$, and now PolyEncode needs to work in $\mathcal{R}_{q_1}$ rather than $\mathcal{R}_q$ (i.e., the parameter $q$ is replaced with $q_1$). %This approach does not need to introduce additional assumptions.
	
	\item $\sigma=\big\lfloor \frac{q}{q_1} \hat{h} \big\rceil r \in \mathcal{R}_{q}$. That is, we lift $\hat{h}$ from $\mathcal{R}_{q_1}$ to $\mathcal{R}_{q}$. With this approach, PolyEncode remains unchanged.
\end{itemize}

These approaches can reduce the size of public key, but at the cost of larger error probability or lower security (as we may need to narrow the space of secret polynomials for reducing error probability). With experiments, when $q_1=2^{11}$ (i.e., cutting off one bit from each dimension) we still can  achieve reasonable balance between security and performance.

\subsection{Masking the Public Key $h$}

Similarly, we would like also  to strengthen the NTRU assumption, by setting $h=g/f+x$, where $x$ is an $n$-dimension small noise polynomial with each coefficient typically taken from $B_1$ or $U_1$, i.e., the uniform distribution over $\{0,\pm 1\}$. In this case, $h$ is analogous to an RLWE sample, except that $f^{-1}$ is not publicly accessible. Intuitively, it makes the NTRU problem harder than its standard form. % (and even than the RWLE problem on the same parameters).
In this case, one extra error term $xr$ will be introduced. Thanks to the powerful error correction ability of the $\text{E}_8$ lattice code, our experiments show that we can still achieve good balance between security and performance, with $x$ taken from $B_1$ or $U_1$. Note that the above approach to compressing the public key can also strengthen the hardness of the NTRU problem.

\subsection{More Possibilities of PolyEncode and PolyDecode}

We choose the $\text{E}_8$ lattice code within our framework because: (1) the error correction ability of the $\text{E}_8$ lattice code is powerful and almost optimal; and (2) it is simple, very efficient, and  well fits our framework combining NTRU and RLWE/RLWR. However, in general, we can use other error correction codes (ECC) within our framework. Also, an extreme choice is to not use any extra ECC mechanism, i.e., letting $\text{PolyEncode}(m)=\frac{q}{2}m$. The corresponding decryption process is $\text{PolyDecode}( c f \bmod^{\pm} q_2 )=\lfloor \frac{2}{q_2} ( c f \bmod^{\pm} q_2 ) \rceil \bmod 2$.

\subsection{More Possibilities of the Underlying Rings}

The modulus $q$ is set to be a prime number that allows efficient NTT algorithms over $\mathbb{Z}_q[x]/(x^{n}-x^{n/2}+1)$ in this paper. One can also choose power-of-two $q$ for the flexibility of parameter selection. Upon setting power-of-two $q$ and $q_2$, the distribution $\chi$ in Theorem~\ref{lemma-ntru-e8-lwr-correctness-analysis} can be simplified as follows: Sample $u \xleftarrow{\$} [-\frac{q}{2q_2}, \frac{q}{2q_2} )\cap \mathbb{Z}$ and output $-\frac{q_2}{q} u$. However, in this case the computation of polynomial division is slightly slower, since NTT is invalid and we need to turn to other less efficient algorithms to compute polynomial divisions.

We can naturally generalize the underlying polynomial rings of the form $\mathbb{Z}_q[x]/(x^{n}-x^{n/2}+1)$ to power-of-two cyclotomic rings of the form $\mathbb{Z}_q[x]/(x^{n}+1)$. To consider the dimension $n$, one can pick $n=512,1024$ for NIST recommended security levels I and V. When the modulus $q$ is NTT-friendly (i.e., $2^e | (q-1)$ for some integer $e$ in this case), the efficient polynomial multiplications and divisions are possible via some NTT algorithms (or their variants). For the flexibility of ring selection, one can also consider the rings of the form $\mathbb{Z}_q[x]/(x^{n}-x-1)$ w.r.t. prime $n$ and $q$ like those in NTRU Prime~\cite{ntru-prime-BCLV17,ntru-prime-nist-round3}.

\subsection{Variants without RLWE or RLWR}

If we insist on a purely NTRU-based KEM, we can simply set $c:=\sigma +\big\lfloor\text{PolyEncode}(m) \big\rceil \bmod q$ where we can set $\sigma:=hr$. In this case, the resulting KEM scheme is OW-CPA secure based on the NTRU assumption. With this variant, at about the same error probabilities of CTRU and CNTR, we can choose much larger ranges for the secret key and the ephemeral secrecy $r$ leading to stronger NTRU hardness.

%%============================================================================================
%
%
%                                    references
%
%
%%============================================================================================

\bibliographystyle{IEEEtran}
\bibliography{CTRU-ref}

\vfill

\end{document}